  \providecommand\BibTeX{{%
    \normalfont B\kern-0.5em{\scshape i\kern-0.25em b}\kern-0.8em\TeX}}}
\definecolor{red}  {rgb}{0.9,0.0,0.0}
\definecolor{green}{rgb}{0.0,0.7,0.0}
\definecolor{blue} {rgb}{0.0,0.0,0.9}
\acrodef{ltr}[\textsc{LtR}]{Learning to Rank}
\acrodef{nir}[NIR]{Neural Information Retrieval}
\acrodef{ir}[IR]{Information Retrieval}
\acrodef{bow}[\textsc{BoW}]{Bag of Words}
\acrodef{rrf}[RRF]{Reciprocal Rank Fusion}
\begin{document}

\title{An Analysis of Fusion Functions for Hybrid Retrieval}

\author{Sebastian Bruch}
\email{sbruch@acm.org}
\orcid{0000-0002-2469-8242}
\affiliation{%
  \institution{Pinecone}
  \city{New York}
  \state{NY}
  \country{USA}
}

\author{Siyu Gai}
\authornote{Contributed to this work during a research internship at Pinecone.}
\email{catherine_gai@berkeley.edu}
\orcid{0000-0002-5912-5196}
\affiliation{%
  \institution{University of California, Berkeley}
  \city{Berkeley}
  \state{CA}
  \country{USA}
}

\author{Amir Ingber}
\affiliation{%
  \institution{Pinecone}
  \city{Tel Aviv}
  \country{Israel}}
\email{ingber@pinecone.io}
\orcid{0000-0001-6639-8240}

\begin{CCSXML}
<ccs2012>
   <concept>
       <concept_id>10002951.10003317.10003338</concept_id>
       <concept_desc>Information systems~Retrieval models and ranking</concept_desc>
       <concept_significance>500</concept_significance>
       </concept>
   <concept>
       <concept_id>10002951.10003317.10003338.10003344</concept_id>
       <concept_desc>Information systems~Combination, fusion and federated search</concept_desc>
       <concept_significance>500</concept_significance>
       </concept>
 </ccs2012>
\end{CCSXML}

\ccsdesc[500]{Information systems~Retrieval models and ranking}
\ccsdesc[500]{Information systems~Combination, fusion and federated search}

\keywords{Hybrid Retrieval, Lexical and Semantic Search, Fusion Functions}

\begin{abstract}
We study hybrid search in text retrieval where lexical and semantic search are \emph{fused} together with the intuition that the two are complementary in how they model relevance.
In particular, we examine fusion by a convex combination (CC) of lexical and semantic scores, as well as the Reciprocal Rank Fusion (RRF) method, and identify their advantages and potential pitfalls.
Contrary to existing studies, we find RRF to be sensitive to its parameters; that the learning of a CC fusion is generally agnostic to the choice of score normalization; that CC outperforms RRF in in-domain and out-of-domain settings; and finally, that CC is sample efficient, requiring only a small set of training examples to tune its only parameter to a target domain.
\end{abstract}

\maketitle              

\section{Introduction}
\label{section:introduction}

Retrieval is the first stage in a multi-stage ranking system~\cite{asadi2013phd,asadi2013efficiency,yin2016ranking}, where the objective is to find the top-$k$ set of documents, that are the most relevant to a given query $q$, from a large collection of documents $\mathcal{D}$. Implicit in this task are two major research questions: a) how do we measure the relevance between a query $q$ and a document $d \in \mathcal{D}$?; and, b) how do we find the top-$k$ documents according to a given similarity metric \emph{efficiently}? In this work, we are primarily concerned with the former question in the context of text retrieval.

As a fundamental problem in \ac{ir}, the question of the similarity between queries and documents has been explored extensively. Early methods model text as a \ac{bow} and compute the similarity of two pieces of text using a statistical measure such as the term frequency-inverse document frequency (TF-IDF) family, with BM25~\cite{bm25,bm25original} being its most prominent member. We refer to retrieval with a \ac{bow} model as \emph{lexical search} and the similarity scores computed by such a system as \emph{lexical scores}.

Lexical search is simple, efficient, (naturally) ``zero-shot,'' and generally effective, but has important limitations: it is susceptible to the vocabulary mismatch problem and, moreover, does not take into account the semantic similarity of queries and documents~\cite{chen2022ecir}. That, it turns out, is what deep learning models are excellent at. With the rise of pre-trained language models such as BERT~\cite{devlin2019bert}, it is now standard practice to learn a vector representation of queries and documents that does capture their semantics, and thereby, reduce top-$k$ retrieval to the problem of finding $k$ nearest neighbors in the resulting vector space~\cite{formal2022splade,karpukhin-etal-2020-dense,lin2021pretrained,reimers-2019-sentence-bert,xiong2021approximate,wang2021bert}---where closeness is measured using vector similarity or distance. We refer to this method as \emph{semantic search} and the similarity scores computed by such a system as \emph{semantic scores}.

Hypothesizing that lexical and semantic search are complementary in how they model relevance, recent works~\cite{chen2022ecir,karpukhin-etal-2020-dense,Kuzi2020LeveragingSA,Ma2020HybridFR,Ma2021ARS,Wu2019EfficientIP} began exploring methods to \emph{fuse} together lexical and semantic retrieval: for a query $q$ and ranked lists of documents $R_\textsc{Lex}$ and $R_\textsc{Sem}$ retrieved separately by lexical and semantic search systems respectively, the task is to construct a final ranked list $R_\textsc{Fusion}$ so as to improve retrieval quality. This is often referred to as \emph{hybrid search}.

It is becoming increasingly clear that hybrid search does indeed lead to meaningful gains in retrieval quality, especially when applied to out-of-domain datasets~\cite{chen2022ecir,wang2021bert}---settings in which the semantic retrieval component uses a model that was not trained or fine-tuned on the target dataset. What is less clear and is worthy of further investigation, however, is \emph{how} this fusion is done.

One intuitive and common approach is to linearly combine lexical and semantic scores~\cite{karpukhin-etal-2020-dense,Ma2021ARS,wang2021bert}. If $f_\textsc{Lex}(q, d)$ and $f_\textsc{Sem}(q, d)$ represent the lexical and semantic scores of document $d$ with respect to query $q$, then a linear (or more accurately, convex) combination is expressed as $f_\textsc{Convex} = \alpha f_\textsc{Sem} + (1 - \alpha) f_\textsc{Lex}$ where $0 \leq \alpha \leq 1$. Because lexical scores (such as BM25) and semantic scores (such as dot product) may be unbounded, often they are normalized with min-max scaling~\cite{lin2021pretrained,wang2021bert} prior to fusion.

A recent study~\cite{chen2022ecir} argues that convex combination is sensitive to its parameter $\alpha$ and the choice of score normalization.\footnote{c.f. Section 3.1 in~\cite{chen2022ecir}: ``This fusion method is sensitive to the score scales \ldots which needs \emph{careful} score normalization'' (emphasis ours). } They claim and show empirically, instead, that \ac{rrf}~\cite{cormack2009rrf} may be a more suitable fusion as it is non-parametric and may be utilized in a zero-shot manner. They demonstrate its impressive performance even in zero-shot settings on a number of benchmark datasets.

This work was inspired by the claims made in~\cite{chen2022ecir}; whereas~\cite{chen2022ecir} addresses \emph{how} various hybrid methods perform relative to one another in an empirical study, we re-examine their findings and analyze \emph{why} these methods work and what contributes to their relative performance. 
Our contributions thus can best be summarized as an in-depth examination of fusion functions and their behavior.

As our first research question (RQ1), we investigate whether the convex combination fusion is a reasonable choice and study its sensitivity to the normalization protocol. We show that, while normalization is essential to create a bounded function and thereby bestow consistency to the fusion across domains, the specific choice of normalization is a rather small detail: there always exist convex combinations of scores normalized by min-max, standard score, or any other linear transformation that are rank-equivalent. In fact, when formulated as a per-query learning problem, the solution found for a dataset that is normalized with one scheme can be transformed to a solution for a different choice.

We next investigate the properties of \ac{rrf}. We first unpack \ac{rrf} and examine its sensitivity to its parameters as our second research question (RQ2)---contrary to~\cite{chen2022ecir}, we adopt a parametric view of \ac{rrf} where we have as many parameters as there are retrieval functions to fuse, a quantity that is always one more than that in a convex combination. We find that, in contrast to a convex combination, a tuned \ac{rrf} generalizes poorly to out-of-domain datasets.
We then intuit that, because \ac{rrf} is a function of \emph{ranks}, it disregards the distribution of scores and, as such, discards useful information. Observe that the distance between raw scores plays no role in determining their hybrid score---a behavior we find counter-intuitive in a metric space where distance \emph{does} matter. Examining this property constitutes our third and final research question (RQ3).

Finally, we empirically demonstrate an unsurprising yet important fact: tuning $\alpha$ in a convex combination fusion function is extremely sample-efficient, requiring just a handful of labeled queries to arrive at a value suitable for a target domain, regardless of the magnitude of shift in the data distribution. \ac{rrf}, on the other hand, is relatively less sample-efficient and converges to a relatively less effective retrieval system.

We believe our findings, both theoretical and empirical, are important and pertinent to the research in this field. Our analysis leads us to believe that the convex combination formulation is theoretically sound, empirically effective, sample-efficient, and robust to domain shift. Moreover, unlike the parameters in \ac{rrf}, the parameter(s) of a convex function are highly interpretable and, if no training samples are available, can be adjusted to incorporate domain knowledge.

We organized the remainder of this article as follows. In Section~\ref{section:background}, we review the relevant literature on hybrid search. Section~\ref{section:setup} then introduces our adopted notation and provides details of our empirical setup, thereby providing context for the theoretical and empirical analysis of fusion functions. In Section~\ref{section:analysis:convex}, we begin our analysis by a detailed look at the convex combination of retrieval scores. We then examine \ac{rrf} in Section~\ref{section:analysis:rrf}. In Section~\ref{section:discussion}, we summarize our observations and identify the properties a fusion function should have to behave well in hybrid retrieval. We then conclude this work and state future research directions in Section~\ref{section:conclusion}.
\section{Related Work}
\label{section:background}

A multi-stage ranking system is typically comprised of a \emph{retrieval} stage and several subsequent \emph{re-ranking} stages, where the retrieved candidates are ordered using more complex ranking functions~\cite{asadi2013efficiency,wang2011cascade}. Conventional wisdom has that retrieval must be recall-oriented while improving ranking quality may be left to the re-ranking stages, which are typically \ac{ltr} models~\cite{LTR-Liu09,matsubara2020multistage,nogueira2019multi,wang2011cascade,lambdamart}. There is indeed much research on the trade-offs between recall and precision in such multi-stage cascades~\cite{macdonald2013whens,ecir13}, but a recent study~\cite{Zamani2022cascade} challenges that established convention and presents theoretical analysis that suggests retrieval must instead optimize \emph{precision}. We therefore report both recall \emph{and} NDCG~\cite{jarvelin2000ir}, but focus on NDCG where space constraints prevent us from presenting both or when similar conclusions can be reached regardless of the metric used.

One choice for retrieval that remains popular to date is BM25~\cite{bm25,bm25original}. This additive statistic computes a weighted lexical match between query and document terms: it computes, for each query term, the product of its ``importance'' (i.e., frequency of a term in a document, normalized by document and global statistics such as average length) and its propensity---a quantity that is inversely proportionate to the fraction of documents that contain the term---and adds the scores of query terms to arrive at the final similarity or relevance score. Because BM25, like other lexical scoring functions, insists on an exact match of terms, even a slight typo can throw the function off. This \emph{vocabulary mismatch problem} has been subject to much research in the past, with remedies ranging from pseudo-relevance feedback to document and query expansion techniques~\cite{li2022prf,nogueira2019document,tao2006expansion}.

Trying to address the limitations of lexical search can only go so far, however. After all, they additionally do not capture the semantic similarity between queries and documents, which may be an important signal indicative of relevance. It has been shown that both of these issues can be remedied by Transformer-based~\cite{vaswani2017attention} pre-trained language models such as BERT~\cite{devlin2019bert}. Applied to the ranking task, such models~\cite{nogueira2020passage,nogueira2020monot5,nogueira2019multi,mitra2016dual} have advanced the state-of-the-art dramatically on benchmark datasets~\cite{nguyen2016msmarco}.

The computationally intensive inference of these deep models often renders them too inefficient for first-stage retrieval, however, making them more suitable for re-ranking stages. But by cleverly disentangling the query and document transformations into the so-called dual-encoder architecture, where, in the resulting design, the ``embedding'' of a document can be computed independently of queries, we can pre-compute document vectors and store them offline. In this way, we substantially reduce the computational cost during inference as it is only necessary to obtain the vector representation of the query during inference. At a high level, these models project queries and documents onto a low-dimensional vector space where semantically-similar points stay closer to each other. By doing so we transform the retrieval problem to one of similarity search or Approximate Nearest Neighbor (ANN) search---the $k$ nearest neighbors to a query vector are the desired top-$k$ documents. This ANN problem can be solved efficiently using a number of algorithms such as FAISS~\cite{Johnson2021faiss} or Hierarchical Navigable Small World Graphs~\cite{malkov2016hnsw} available as open source packages or through a managed service such as Pinecone\footnote{\url{http://pinecone.io}}, creating an opportunity to use deep models and vector representations for first-stage retrieval~\cite{karpukhin-etal-2020-dense,xiong2021approximate}---a setup that we refer to as semantic search.

Semantic search, however, has its own limitations. Previous studies~\cite{chen2022ecir,thakur2021beir} have shown, for example, that when applied to out-of-domain datasets, their performance is often worse than BM25. Observing that lexical and semantic retrievers can be complementary in the way they model relevance~\cite{chen2022ecir}, it is only natural to consider a hybrid approach where lexical and semantic similarities both contribute to the makeup of final retrieved list. To date there have been many studies~\cite{karpukhin-etal-2020-dense,Kuzi2020LeveragingSA,luan-etal-2021-sparse,Ma2020HybridFR,Ma2021ARS,wang2021bert,Wu2019EfficientIP,zhan2020repbert} that do just that, where most focus on in-domain tasks with one exception~\cite{chen2022ecir} that considers a zero-shot application too. Most of these works only use one of the many existing fusion functions in experiments, but none compares the main ideas comprehensively. We review the popular fusion functions from these works in the subsequent sections and, through a comparative study, elaborate what about their behavior may or may not be problematic.

\section{Setup}
\label{section:setup}

In the sections that follow, we study fusion functions with a mix of theoretical and empirical analysis. For that reason, we present our notation as well as empirical setup and evaluation measures here to provide sufficient context for our arguments.

\subsection{Notation}
We adopt the following notation in this work. We use $f_\textsc{o}(q, d): \mathcal{Q}\times\mathcal{D} \rightarrow \mathbb{R}$ to denote the score of document $d \in \mathcal{D}$ to query $q \in \mathcal{Q}$ according to the retrieval system $\textsc{o} \in \mathcal{O}$. If $\textsc{o}$ is a semantic retriever, $\textsc{Sem}$, then $\mathcal{Q}$ and $\mathcal{D}$ are the space of (dense) vectors in $\mathbb{R}^d$ and $f_\textsc{Sem}$ is typically cosine similarity or inner product. Similarly, when $\textsc{o}$ is a lexical retriever, $\textsc{Lex}$, $\mathcal{Q}$ and $\mathcal{D}$ are high-dimensional sparse vectors in $\mathbb{R}^{|V|}$, with $|V|$ denoting the size of the vocabulary, and $f_\textsc{Lex}$ is typically BM25. A retrieval system $\textsc{o}$ is the space $\mathcal{Q} \times \mathcal{D}$ equipped with a metric $f_\textsc{o}(\cdot, \cdot)$---which need not be a proper metric.

We denote the set of top-$k$ documents retrieved for query $q$ by retrieval system $\textsc{o}$ by $R^{k}_\textsc{o}(q)$. We write $\pi_\textsc{o}(q, d)$ to denote the rank of document $d$ with respect to query $q$ according to retrieval system $\textsc{o}$. Note that, $\pi_\textsc{o}(q, d_i)$ can be expressed as the sum of indicator functions:
\begin{equation}
    \pi_\textsc{o}(q, d_i) = 1 + \sum_{d_j \in R^{k}_\textsc{o}(q)} \mathbbm{1}_{f_\textsc{o}(q, d_j) > f_\textsc{o}(q, d_i)},
    \label{equation:rank}
\end{equation}
where $\mathbbm{1}_c$ is $1$ when the predicate $c$ holds and $0$ otherwise. In words, and ignoring the subtleties introduced by the presence of score ties, the rank of document $d$ is the count of documents whose score is larger than the score of $d$.

Hybrid retrieval operates on the product space of $\prod \textsc{o}_i$ with metric $f_\textsc{Fusion}: \prod f_{\textsc{o}_i} \rightarrow \mathbb{R}$. Without loss of generality, in this work, we restrict $\prod \textsc{o}_i$ to be $\textsc{Lex} \times \textsc{Sem}$. That is, we only consider the problem of fusing two retrieval scores, but note that much of the anlysis can be trivially extended to the fusion of multiple retrieval systems. We refer to this hybrid metric as a \emph{fusion} function.

A fusion function $f_\textsc{Fusion}$ is typically applied to documents in the union of retrieved sets $\mathcal{U}^k(q) = \bigcup_\textsc{o} R^{k}_\textsc{o}(q)$, which we simply call \emph{the union set}. When a document $d$ in the union set is not present in one of the top-$k$ sets (i.e., $d \in \mathcal{U}^k(q)$ but $d \notin R^{k}_{\textsc{o}_i}(q)$ for some $\textsc{o}_i$), we compute its missing score (i.e., $f_{\textsc{o}_i}(q, d)$) prior to fusion.

\subsection{Empirical Setup}

\begin{table*}[t]
\caption{Datasets used in this work for evaluation purposes along with select statistics.}
\label{table:dataset-stats}
\begin{center}
\begin{sc}
\begin{tabular}{c|cc}
\toprule
Dataset & Document Count & Query Count \\
\midrule
MS MARCO Passage v1 & $8.8$M & $6{,}980$ \\
Natural Questions (NQ) & $2.68$M & $3{,}452$ \\
\textsc{Quora} & $523$K & $10{,}000$ \\
\textsc{NFCorpus} & $3.6$K & $323$ \\
\textsc{HotpotQA} & $5.23$M & $7{,}405$ \\
\textsc{Fever} & $5.42$M & $6{,}666$ \\
\textsc{SciFact} & $5$K & $300$ \\
\textsc{DBPedia} & $4.63$M & $400$ \\
\textsc{FiQA} & $57$K & $648$ \\
\bottomrule
\end{tabular}
\end{sc}
\end{center}
\end{table*}

\textbf{Datasets}: We evaluate our methods on a variety of publicly available benchmark datasets, summarized in Table~\ref{table:dataset-stats} both in in-domain and out-of-domain, zero-shot settings. One of the datasets is the MS MARCO\footnote{Available at \url{https://microsoft.github.io/msmarco/}} Passage Retrieval v1 dataset~\cite{nguyen2016msmarco}, a publicly available retrieval and ranking collection from Microsoft. It consists of roughly $8.8$ million short passages which, along with queries in natural language, originate from Bing. The queries are split into train, dev, and eval non-overlapping subsets. We use the train queries for any learning or tuning and evaluate exclusively on the small dev query set (consisting of $6{,}980$ queries) in our analysis. Included in the dataset also are relevance labels.

We additionally experiment with $8$ datasets from the BeIR collection~\cite{thakur2021beir}\footnote{Available at \url{https://github.com/beir-cellar/beir}}: Natural Questions (NQ, question answering), \textsc{Quora} (duplicate detection), \textsc{NFCorpus} (medical), \textsc{HotpotQA} (question answering), \textsc{Fever} (fact extraction), \textsc{SciFact} (scientific claim verification), \textsc{DBPedia} (entity search), and \textsc{FiQA} (financial). For a more detailed description of each dataset, we refer the reader to~\cite{thakur2021beir}.

\textbf{Lexical search}: We use PISA~\cite{MSMS2019} for keyword-based lexical retrieval. We tokenize queries and documents by space and apply stemming available in PISA---we do not employ any other preprocessing steps such as stopword removal, lemmatization, or expansion. We use BM25 with the same hyperparameters as~\cite{chen2022ecir} (k1=$0.9$ and b=$0.4$) to retrieve the top $1{,}000$ candidates.

\textbf{Semantic search}: We use the \emph{all-MiniLM-L6-v2} model checkpoint available on HuggingFace\footnote{Available at \url{https://huggingface.co/sentence-transformers/all-MiniLM-L6-v2}} to project queries and documents into $384$-dimensional vectors, which can subsequently be used for indexing and top-$k$ retrieval using cosine similarity. This model has been shown to achieve competitive quality on an array of benchmark datasets while remaining compact in size and efficient to infer\footnote{c.f. \url{https://sbert.net} for details.}, thereby allowing us to conduct extensive experiments with results that are competitive with existing state-of-the-art models. This model has been fine-tuned on a large number of datasets, exceeding a total of 1 billion pairs of text, including NQ, MS MARCO Passage, and \textsc{Quora}. As such, we consider all experiments on these three datasets as in-domain, and the rest as out-of-domain. We use the exact search for inner product algorithm (\textsc{IndexFlatIP}) from FAISS~\cite{Johnson2021faiss} to retrieve top $1{,}000$ nearest neighbors.

\textbf{Supplementary models and fusions}: Our primary set of experiments presented in the body of this manuscript focus exclusively on the fusion of BM25 with the \emph{all-MiniLM-L6-v2} model as representatives of lexical and semantic retrieval functions. We do, however, note that our analysis of fusion functions is not limited to lexical-semantic search \emph{per se}: all normalization and fusion functions studied in this work can be applied to arbitrary scoring functions! As such, we conduct additional experiments using a variety of pairs of retrieval models to confirm the generality of the main theoretical findings of this work. We report these results in Appendix~\ref{appendix:splade-bm25} through Appendix~\ref{appendix:tasb-minilm}.

In Appendix~\ref{appendix:splade-bm25}, we examine the fusion of the \textsc{Splade} model with BM25. \textsc{Splade}\footnote{Pre-trained checkpoint from HuggingFace available at \url{https://huggingface.co/naver/splade-cocondenser-ensembledistil}}~\cite{formal2022splade} is a deep learning model that produces sparse representations for a given piece of text, where each non-zero entry in the resulting embedding is the importance weight of a term in the BERT~\cite{devlin2019bert} WordPiece~\cite{wordpiece} vocabulary comprising of roughly $30,000$ terms. Appendix~\ref{appendix:tasb-bm25} studies the fusion of BM25 with the \textsc{Tas-B}~\cite{tas-b} model.\footnote{Available at \url{https://huggingface.co/sentence-transformers/msmarco-distilbert-base-tas-b}} \textsc{Tas-B} is a bi-encoder model that was trained using supervision from a cross-encoder and a ColBERT model. In Appendix~\ref{appendix:tasb-splade} we fuse \textsc{Splade} and \textsc{Tas-B}, and in Appendix~\ref{appendix:tasb-minilm} \textsc{Tas-B} and \emph{all-MiniLM-L6-v2}.

We note that, both \textsc{Splade} and \textsc{Tas-B} were fine-tuned on the MS MARCO dataset. As such, in all the supplementary experiments, results reported on the MS MARCO dataset should be considered ``in-domain'' while the remaining datasets represent out-of-domain distributions.

\textbf{Evaluation}: Unless noted otherwise, we form the union set for every query from the candidates retrieved by the lexical and semantic search systems. We then compute missing scores where required, compute $f_\textsc{Fusion}$ on the union set, and re-order according to the hybrid scores. We then measure Recall@$1000$ and NDCG@$1000$ to quantify ranking quality, as recommended by Zamani \emph{et al.}~\cite{Zamani2022cascade}. Due to the much smaller size of \textsc{SciFact} and \textsc{NFCorpus}, we evaluate Recall and NDCG at rank cutoff $100$ instead, retrieving roughly $2\%$ and $2.7\%$ of the size of the dataset, respectively. We note that, this choice of cutoff does not affect the outcome of our experiments or change our conclusions, but it more clearly highlights the differences between the various methods; recall approaches $1$ regardless of the retrieval method if rank cutoff was $1{,}000$ (or $20\%$ and $27\%$ of the size of the datasets). Further note that, we choose to evaluate deep (i.e., with a larger rank cut-off) rather than shallow metrics per the discussion in~\cite{wang2021bert} to understand the performance of each system more completely.

\section{Analysis of Convex Combination of Retrieval Scores}
\label{section:analysis:convex}

We are interested in understanding the behavior and properties of fusion functions. In the remainder of this work, we study through that lens two popular methods that are representative of existing ideas in the literature, beginning with a convex combination of scores.

As noted earlier, most existing works use a convex combination of lexical and semantic scores as follows: $f_\textsc{Convex}(q, d)= \alpha f_\textsc{Sem}(q, d) +  (1-\alpha) f_\textsc{Lex}(q, d)$ for some $0 \leq \alpha \leq 1$. When $\alpha=1$ the above collapses to semantic scores and when it is $0$, to lexical scores.

An interesting property of this fusion is that it takes into account the distribution of scores. In other words, the distance between lexical (or semantic) scores of two documents plays a significant role in determining their final hybrid score. One disadvantage, however, is that the range of $f_\textsc{Sem}$ can be very different from $f_\textsc{Lex}$. Moreover, as with TF-IDF in lexical search or with inner product in semantic search, the range of individual functions $f_\textsc{o}$ may depend on the norm of the query and document vectors (e.g., BM25 is a function of the number of query terms). As such any constant $\alpha$ is likely to yield inconsistently-scaled hybrid scores.

The problem above is trivially addressed by applying score normalization prior to fusion~\cite{lin2021pretrained,wang2021bert}. Suppose we have collected a union set $\mathcal{U}^k(q)$ for $q$, and that for every candidate we have computed both lexical and semantic scores. Now, consider the min-max scaling of scores $\phi_\textsc{mm}: \mathbb{R} \rightarrow [0, 1]$ below:
\begin{equation}
    \phi_\textsc{mm}(f_\textsc{o}(q,d)) = \frac{f_\textsc{o}(q, d) - m_q}{M_q - m_q},
    \label{equation:minmax}
\end{equation}
where $m_q = \min_{d \in \mathcal{U}^k(q)} f_\textsc{o}(q, d)$ and $M_q = \max_{d \in \mathcal{U}^k(q)} f_\textsc{o}(q, d)$. We note that, min-max scaling is the \emph{de facto} method in the literature, but other choices of $\phi_\textsc{o}(\cdot)$ in the more general expression below:
\begin{equation}
    f_\textsc{Convex}(q, d) = \alpha \phi_\textsc{Sem}(f_\textsc{Sem}(q, d)) + (1 - \alpha) \phi_\textsc{Lex}(f_\textsc{Lex}(q, d)),
    \label{equation:convex:general}
\end{equation}
are valid as well so long as $\phi_\textsc{Sem}, \phi_\textsc{Lex}: \mathbb{R} \rightarrow \mathbb{R}$ are monotone in their argument. For example, for reasons that will become clearer later, we can redefine the normalization by replacing the minimum of the set with the theoretical minimum of the function (i.e., the maximum value that is always less than or equal to all values attainable by the scoring function, or its infimum) to arrive at:
\begin{equation}
    \phi_\textsc{tmm}(f_\textsc{o}(q,d)) = \frac{f_\textsc{o}(q, d) - \inf f_\textsc{o}(q, \cdot)}{M_q - \inf f_\textsc{o}(q, \cdot)}.
    \label{equation:infmax}
\end{equation}
As an example, when $f_\textsc{Lex}$ is BM25, then its infimum is $0$. When $f_\textsc{Sem}$ is cosine similarity, then that quantity is $-1$.

Another popular choice is the standard score (z-score) normalization which is defined as follows:
\begin{equation}
    \phi_\textsc{z}(f_\textsc{o}(q,d)) = \frac{f_\textsc{o}(q, d) - \mu}{\sigma},
    \label{equation:zscore}
\end{equation}
where $\mu$ and $\sigma$ denote the mean and standard deviation of the set of scores $f_\textsc{o}(q, \cdot)$ for query $q$.

We will return to normalization shortly, but we make note of one small but important fact: in cases where the variance of lexical (semantic) scores in the union set is $0$, we may skip the fusion step altogether because retrieval quality will be unaffected by lexical (semantic) scores. The case where the variance is arbitrarily close to $0$, however, creates challenges for certain normalization functions. While this would make for an interesting theoretical analysis, we do not study this particular setting in this work as, empirically, we do observe a reasonably large variance among scores in the union set on all datasets using state-of-the-art lexical and semantic retrieval functions.

\subsection{Suitability of Convex Combination}
\label{section:analysis:convex:suitability}
A convex combination of scores is a natural choice for creating a mixture of two retrieval systems, but is it a reasonable choice? It has been established in many past empirical studies that $f_\textsc{Convex}$ with min-max normalization often serves as a strong baseline. So the answer to our question appears to be positive. Nonetheless, we believe it is important to understand precisely why this fusion works.

We investigate this question empirically, by visualizing lexical and semantic scores of query-document pairs from an array of datasets. Because we operate in a two-dimensional space, observing the pattern of positive (where document is relevant to query) and negative samples in a plot can reveal a lot about whether and how they are separable and how the fusion function behaves. To that end, we sample up to $20{,}000$ positive and up to the same number of negative query-document pairs from the validation split of each dataset, and illustrate the collected points in a scatter plot in Figure~\ref{figure:convex:suitability}.

\begin{figure*}[t]
\begin{center}
\centerline{
\subfloat[MS MARCO]{
\includegraphics[height=1.6in]{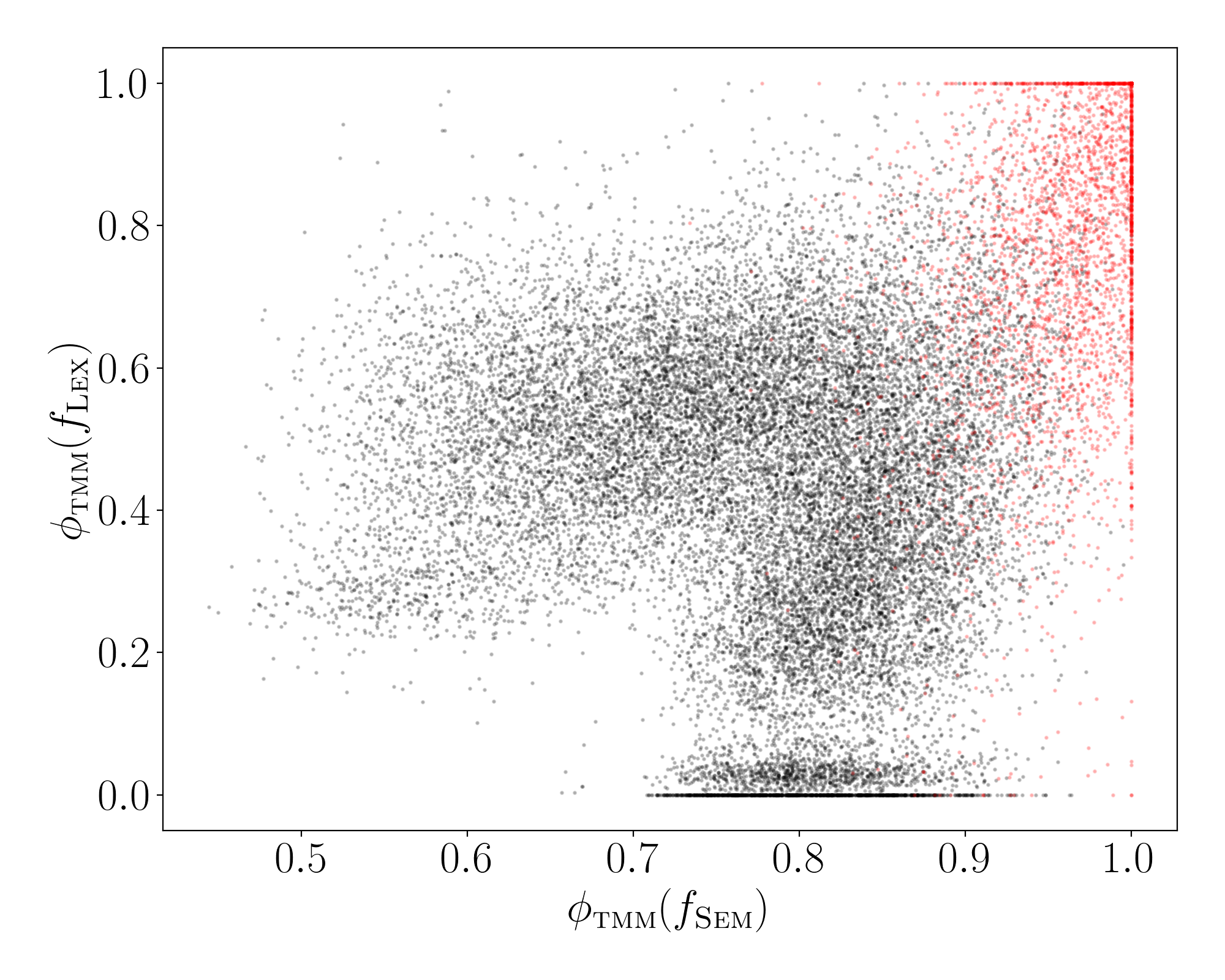}}
\subfloat[\textsc{Quora}]{
\includegraphics[height=1.6in]{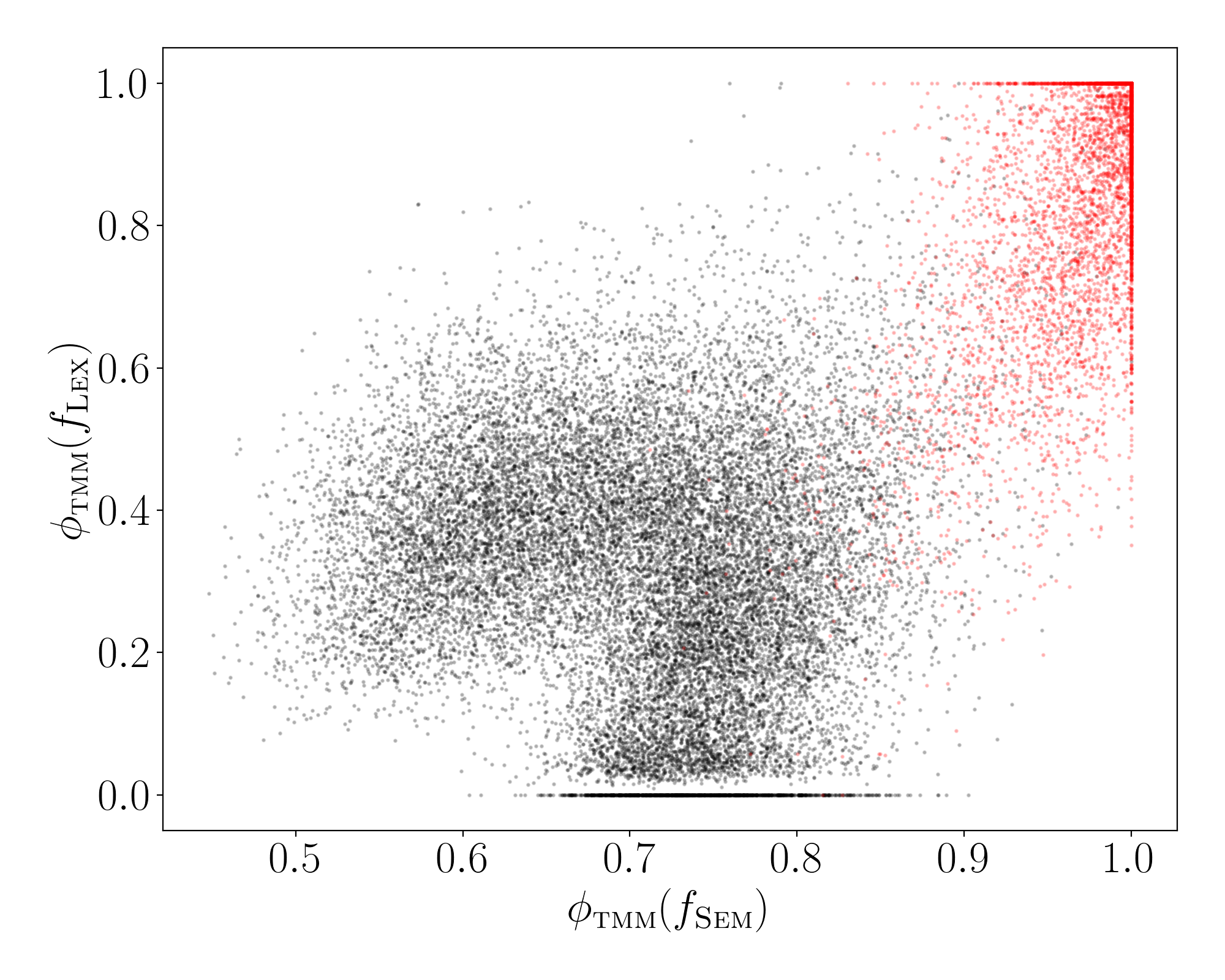}}
}
\centerline{
\subfloat[\textsc{NQ}]{
\includegraphics[height=1.6in]{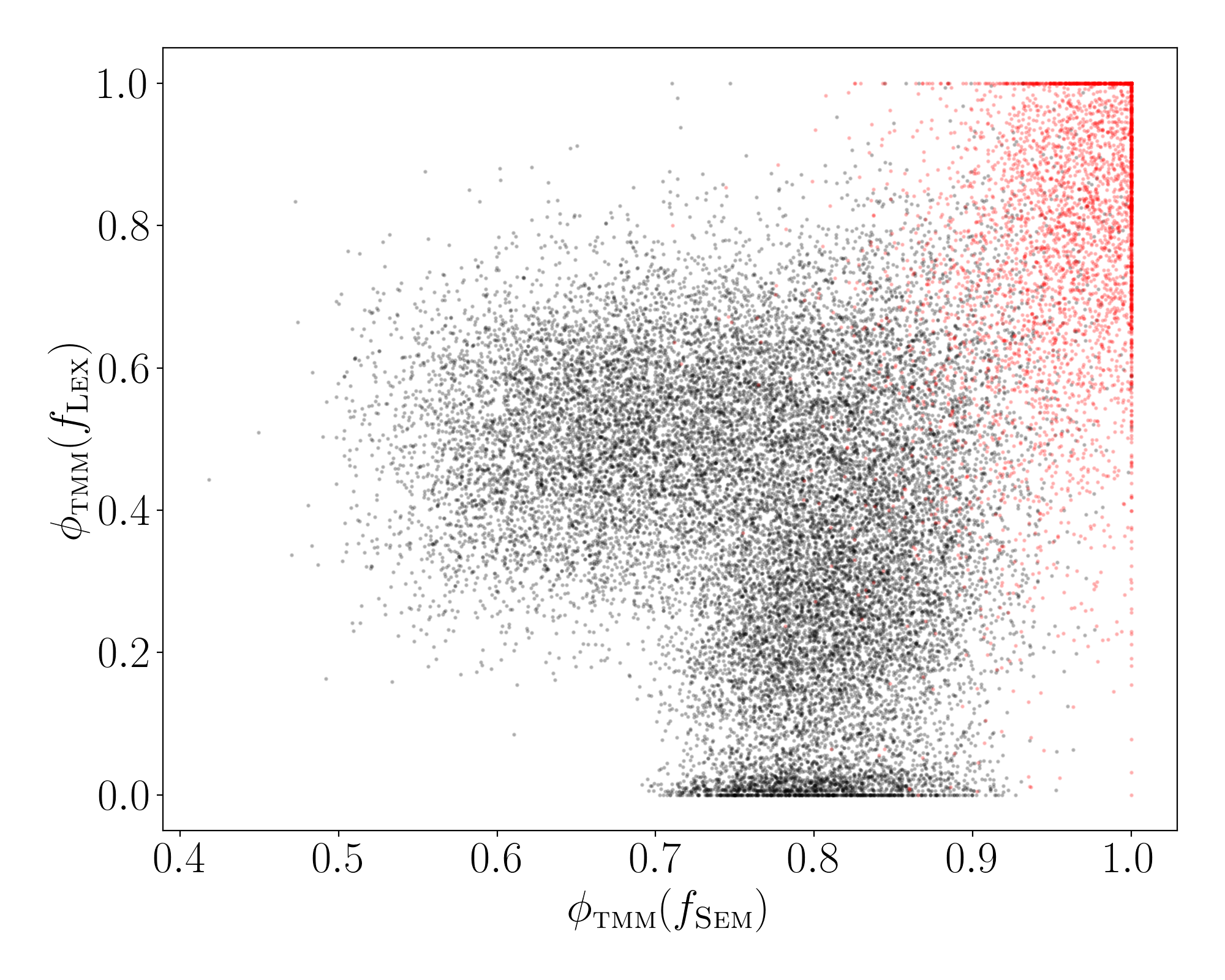}}
\subfloat[\textsc{FiQA}]{
\includegraphics[height=1.6in]{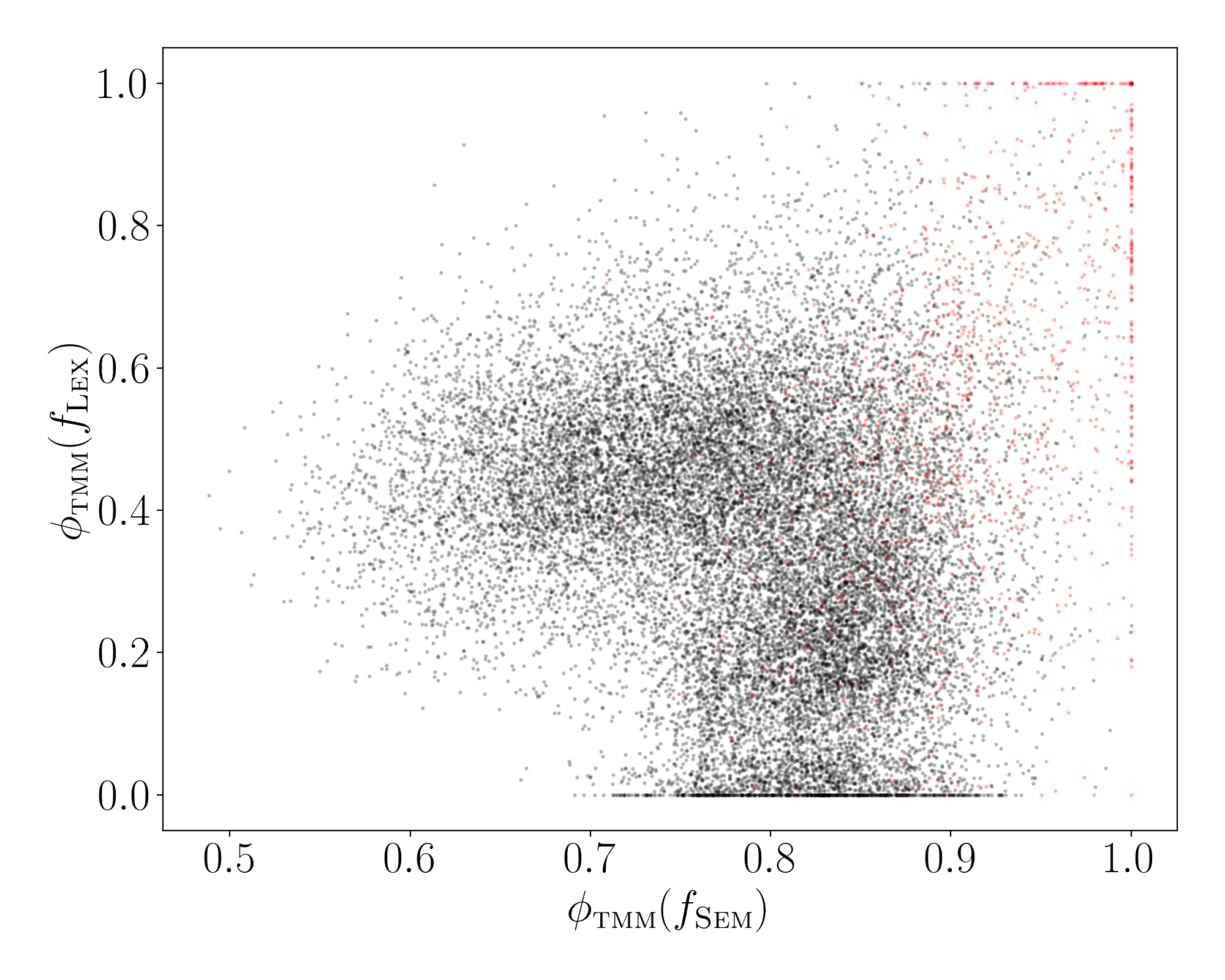}}
}
\centerline{
\subfloat[\textsc{HotpotQA}]{
\includegraphics[height=1.6in]{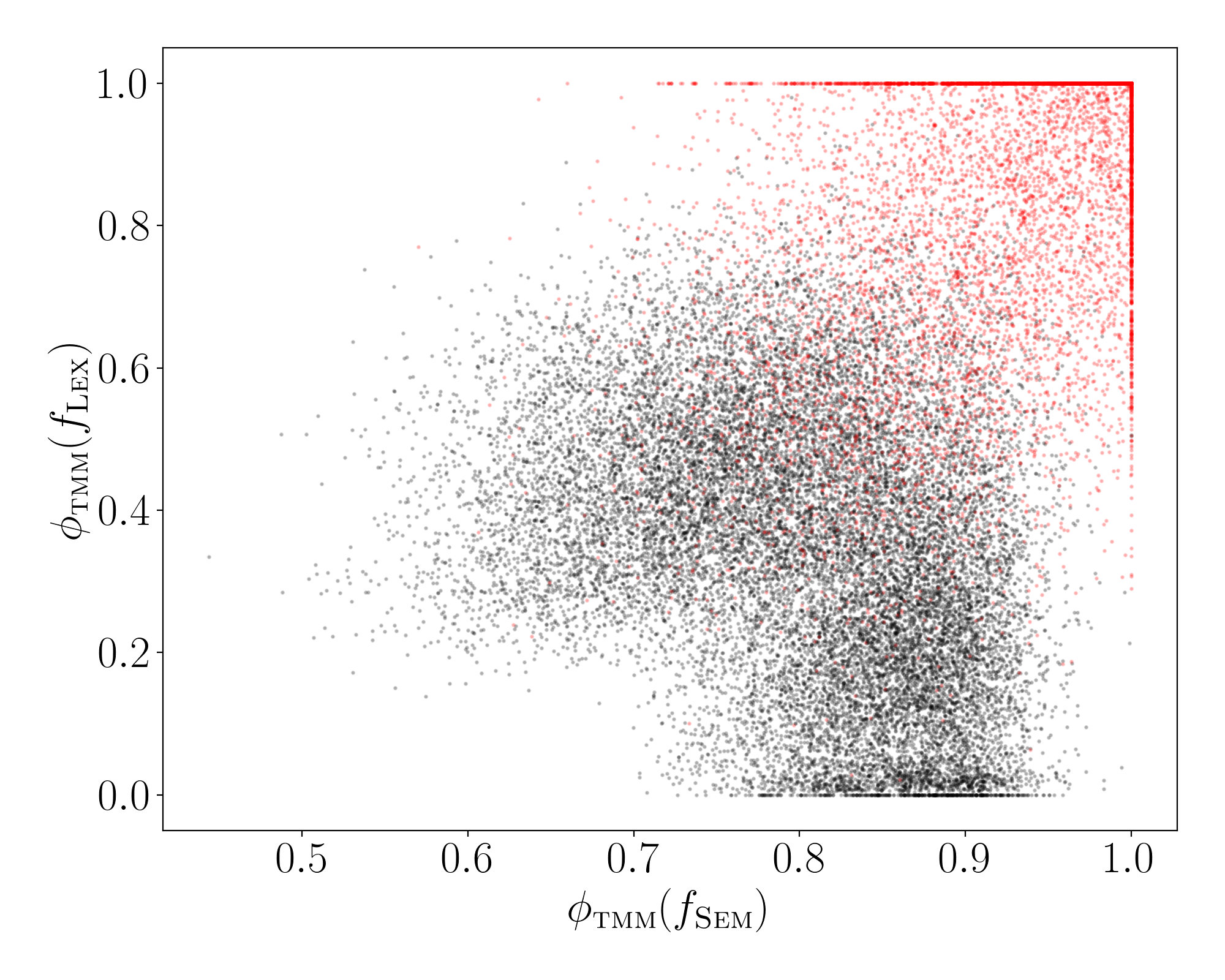}}
\subfloat[\textsc{Fever}]{
\includegraphics[height=1.6in]{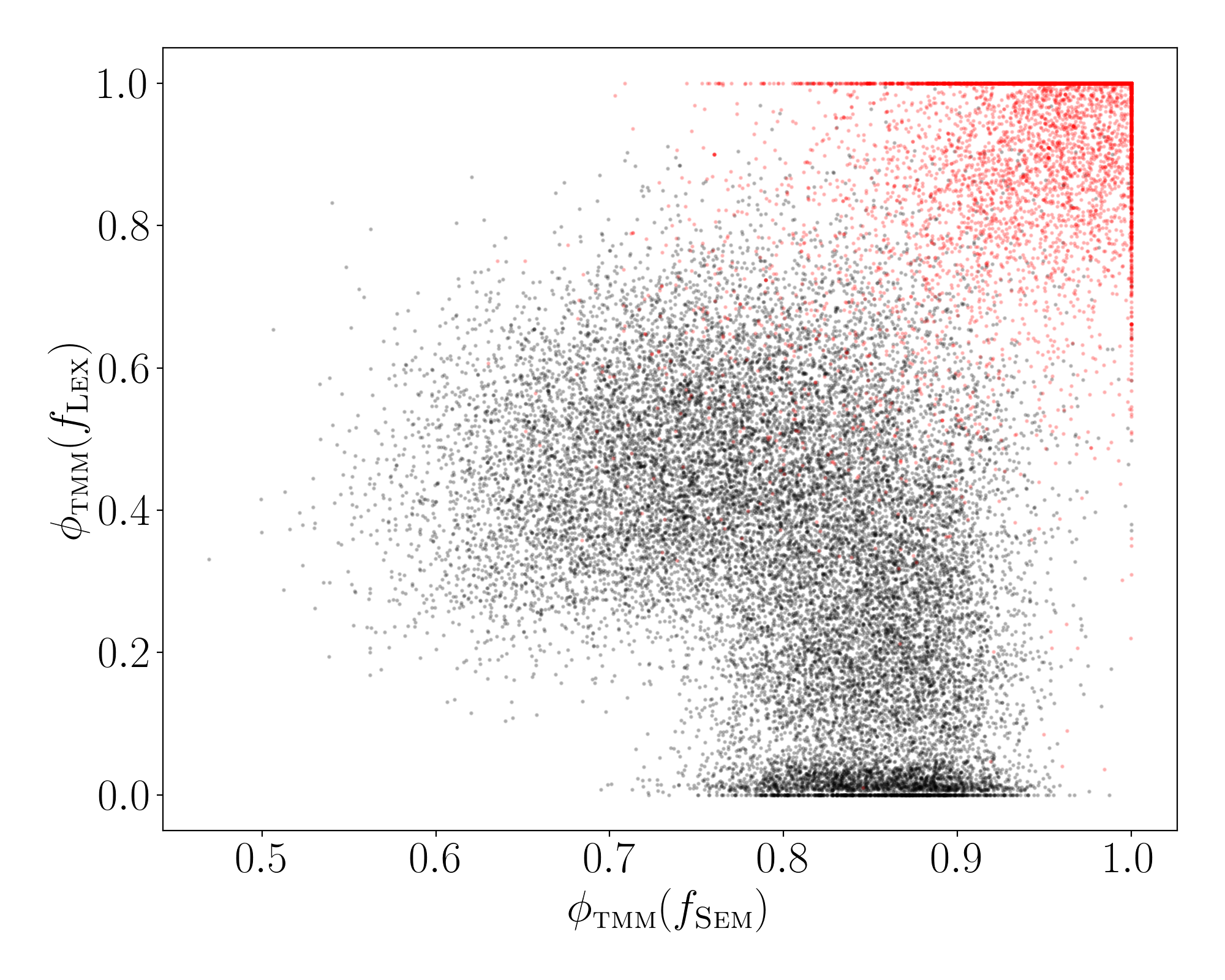}}
}
\caption{Visualization of the normalized lexical ($\phi_\textsc{tmm}(f_\textsc{Lex})$) and semantic ($\phi_\textsc{tmm}(f_\textsc{Sem})$) scores of query-document pairs sampled from the validation split of each dataset. Shown in red are up to $20{,}000$ positive samples where document is relevant to query, and in black up to the same number of negative samples. Adding a lexical (semantic) dimension to query-document pairs helps tease out the relevant documents that would be statistically indistinguishable in a one-dimensional semantic (lexical) view of the data---when samples are projected onto the $x$ ($y$) axis.}
\label{figure:convex:suitability}
\end{center}
\end{figure*}

From these figures, it is clear that positive and negative samples form clusters that are, with some error, separable by a linear function. What is different between datasets is the slope of this separating line. For example, in MS MARCO, \textsc{Quora}, and \textsc{NQ}, which are in-domain datasets, the separating line is almost vertical, suggesting that the semantic scores serve as a sufficiently strong signal for relevance. This is somewhat true of \textsc{FiQA}. In other out-of-domain datasets, however, the line is rotated counter-clockwise, indicating a more balanced weighting of lexical and semantic scores. Said differently, adding a lexical (semantic) dimension to query-document pairs helps tease out the relevant documents that would be statistically indistinguishable in a one-dimensional semantic (lexical) view of the data. Interestingly, across all datasets, there is a higher concentration of negative samples where lexical scores vanish.

This empirical evidence suggests that lexical and semantic scores may indeed be complementary---an observation that is in agreement with prior work~\cite{chen2022ecir}---and a line may be a reasonable choice for distinguishing between positive and negative samples. But while these figures shed light on the shape of positive and negative clusters and their separability, our problem is not classification but \emph{ranking}. We seek to \emph{order} query-document pairs and, as such, separability is less critical and, in fact, not required. It is therefore instructive to understand the effect of a particular convex combination on pairs of lexical and semantic scores. This is visualized in Figure~\ref{figure:convex:reference} for two values of $\alpha$ in $f_\textsc{Convex}$.

\begin{figure}[t]
\begin{center}
\centerline{
\subfloat[$\alpha=0.6$]{
\includegraphics[height=2in]{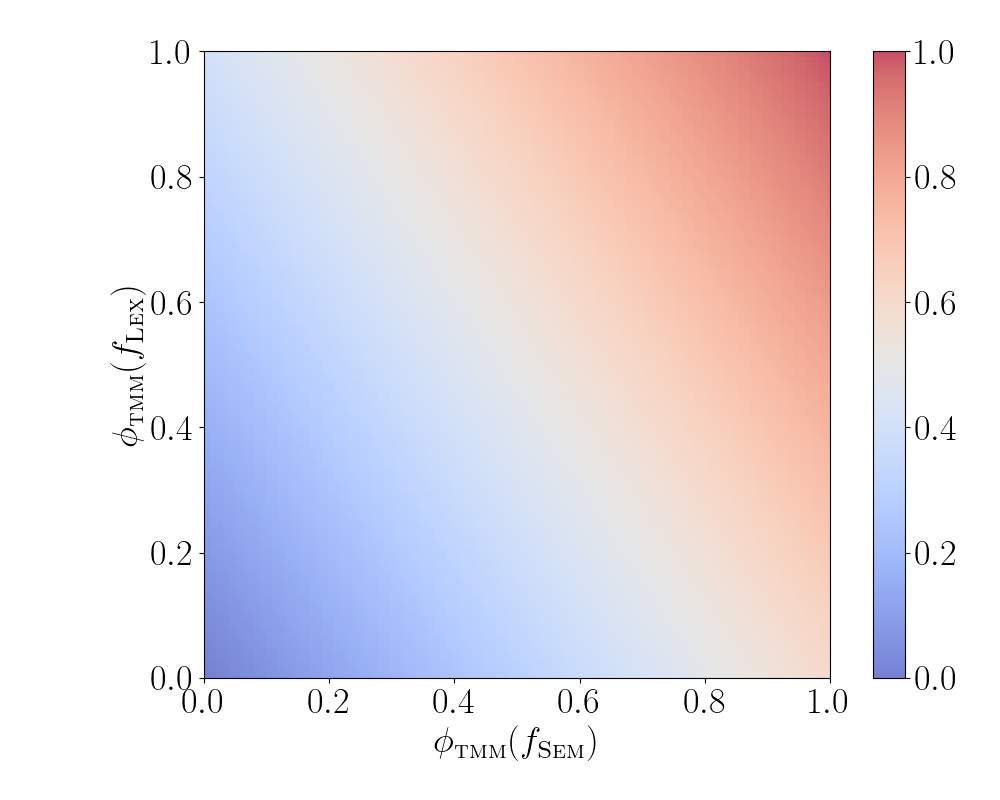}}
\subfloat[$\alpha=0.8$]{
\includegraphics[height=2in]{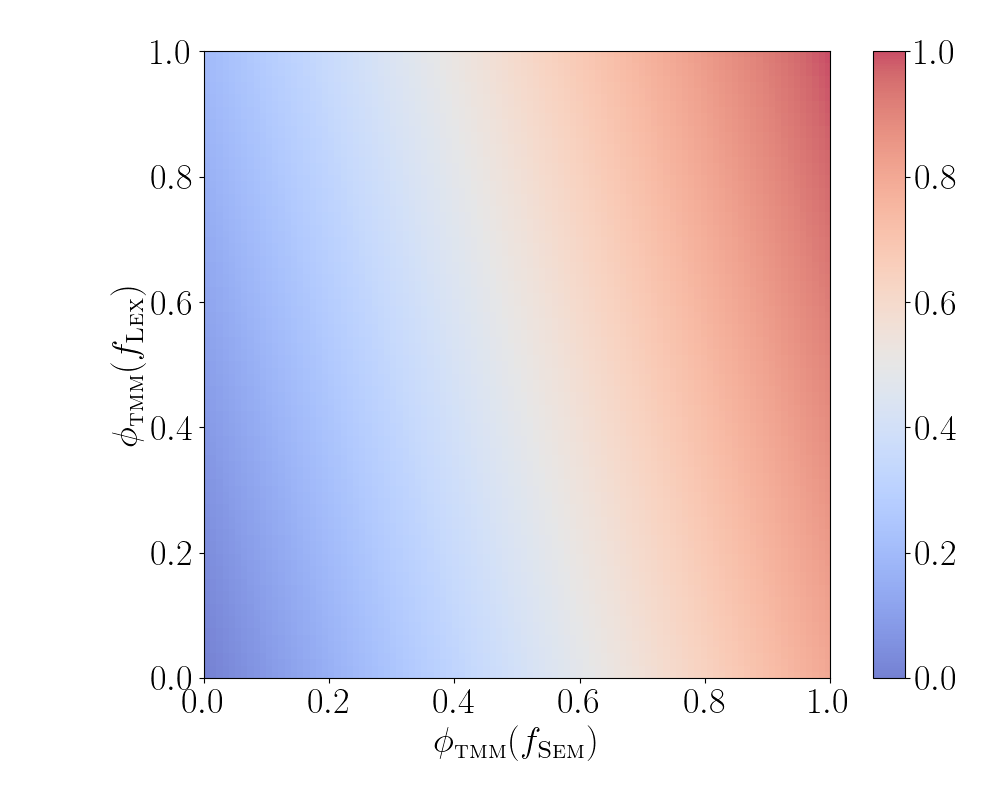}}
}
\caption{Effect of $f_\textsc{Convex}$ on pairs of lexical and semantic scores.}
\label{figure:convex:reference}
\end{center}
\end{figure}

The plots in Figure~\ref{figure:convex:reference} illustrate how the parameter $\alpha$ determines how different regions of the plane are ranked relative to each other. This is a trivial fact, but it is nonetheless interesting to map these patterns to the distributions in Figure~\ref{figure:convex:suitability}. In-domain datasets, for example, form a pattern of positives and negatives that is unsurprisingly more in tune with the $\alpha=0.8$ setting of $f_\textsc{Convex}$ than $\alpha=0.6$.

\subsection{Role of Normalization}

We have thus far used min-max normalization to be consistent with the literature. In this section, we ask the question first raised by Chen \emph{et al.}~\cite{chen2022ecir} on whether and to what extent the choice of normalization matters and how carefully one must choose the normalization protocol. In other words, we wish to examine the effect of $\phi_\textsc{Sem}(\cdot)$ and $\phi_\textsc{Lex}(\cdot)$ on the convex combination in Equation~(\ref{equation:convex:general}).

Before we begin, let us consider the following suite of functions:
\begin{itemize}
    \item $\phi_\textsc{mm}$: Min-max scaling of Equation~(\ref{equation:minmax});
    \item $\phi_\textsc{tmm}$: Theoretical min-max scaling of Equation~(\ref{equation:infmax});
    \item $\phi_\textsc{z}$: z-score normalization of Equation~(\ref{equation:zscore});
    \item $\phi_{\textsc{mm}-\textsc{Lex}}$: Min-max scaling of lexical scores, unnormalized semantic scores;
    \item $\phi_{\textsc{tmm}-\textsc{Lex}}$: Theoretical min-max normalized lexical scores, unnormalized semantic scores;
    \item $\phi_{\textsc{z}-\textsc{Lex}}$: z-score normalized lexical scores, unnormalized semantic scores; and,
    \item $I$: The identity transformation, leaving both semantic and lexical scores unnormalized.
\end{itemize}
We believe these transformations together test the various conditions in our upcoming arguments.

Let us first state the notion of rank-equivalence more formally:
\begin{definition}
We say two functions $f$ and $g$ are \emph{rank-equivalent} on the set $\mathcal{U}$ and write $f \overset{\pi}{=} g$, if the order among documents in a set $\mathcal{U}$ induced by $f$ is the same as that induced by $g$.
\end{definition}

For example, when $\phi_\textsc{Sem}(x) = ax + b$ and $\phi_\textsc{Lex}(x) = cx + d$ are linear transformations of scores for some positive coefficients $a, b$ and real intercepts $b, c$, then they can be reduced to the following rank-equivalent form:
\begin{equation*}
    f_\textsc{Convex}(q, d) \overset{\pi}{=} (a \alpha) f_\textsc{Sem}(q, d) + c(1 - \alpha) f_\textsc{Lex}(q, d).
\end{equation*}
In fact, letting $\alpha^\prime = c\alpha / [c\alpha + c(1-\alpha)]$ transforms the problem to one of learning a convex combination of the original scores with a modified weight. This family of functions includes $\phi_\textsc{mm}$, $\phi_\textsc{z}$, and $\phi_\textsc{tmm}$, and as such solutions for one family can be transformed to solutions for another normalization protocol. More formally:

\begin{lemma}
For every query, given an arbitrary $\alpha$, there exists a $\alpha^\prime$ such that the convex combination of min-max normalized scores with parameter $\alpha$ is rank-equivalent to a convex combination of z-score normalized scores with $\alpha^\prime$, and vice versa.
\end{lemma}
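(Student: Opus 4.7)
My plan is to exploit the observation already hinted at in the paragraph preceding the lemma: both $\phi_\textsc{mm}$ and $\phi_\textsc{z}$ are affine transformations with strictly positive slopes. Concretely, for a fixed query $q$ with associated statistics on the union set $\mathcal{U}^k(q)$, write $\phi_\textsc{mm}(x) = a_\textsc{mm} x + b_\textsc{mm}$ with $a_\textsc{mm} = 1/(M_q - m_q) > 0$, and $\phi_\textsc{z}(x) = a_\textsc{z} x + b_\textsc{z}$ with $a_\textsc{z} = 1/\sigma > 0$ (for each of the lexical and semantic branches, which have their own constants $a^{\textsc{Lex}}_\textsc{mm}, a^{\textsc{Sem}}_\textsc{mm}, a^{\textsc{Lex}}_\textsc{z}, a^{\textsc{Sem}}_\textsc{z}$).

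Next, I would substitute these into the convex-combination form of Equation~(\ref{equation:convex:general}) and collect all $d$-independent constants into a single scalar $C$, which does not affect the ranking over $\mathcal{U}^k(q)$. This shows that for a single query, the min-max fusion with weight $\alpha$ is rank-equivalent to
\begin{equation*}
  \bigl(\alpha\, a^\textsc{Sem}_\textsc{mm}\bigr)\, f_\textsc{Sem}(q,d) + \bigl((1-\alpha)\, a^\textsc{Lex}_\textsc{mm}\bigr)\, f_\textsc{Lex}(q,d),
\end{equation*}
and the z-score fusion with weight $\alpha'$ is rank-equivalent to the analogous expression with $a_\textsc{z}$ in place of $a_\textsc{mm}$. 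Since multiplying a ranking function by any positive constant preserves the order, only the ratio of the two coefficients matters. I would therefore equate the two ratios and solve explicitly for $\alpha'$, obtaining a closed form of the shape $\alpha' = \lambda \alpha / \bigl(1 + (\lambda - 1)\alpha\bigr)$, where $\lambda$ is a strictly positive, query-dependent constant built from the two pairs of affine slopes.

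Finally, I would verify that this map is a bijection of $[0,1]$ onto itself: it sends $0 \mapsto 0$ and $1 \mapsto 1$, is continuous, and is strictly increasing in $\alpha$ because $\lambda > 0$. Running the same argument in reverse (solve $\lambda^{-1}$ in place of $\lambda$) supplies the converse direction and completes the proof. The only genuinely subtle point is the handling of degenerate queries for which $M_q = m_q$ or $\sigma = 0$ on one of the branches, but in that case the author has already observed that the corresponding scores carry no ranking information and fusion reduces to a single retriever; I would note this as a boundary case handled by convention rather than a real obstacle. Overall, there is no hard step in the proof; the entire argument is a careful bookkeeping of which operations on scores preserve per-query rank, plus one algebraic inversion.
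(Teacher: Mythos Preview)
Your proposal is correct and follows essentially the same approach as the paper: both arguments exploit that $\phi_\textsc{mm}$ and $\phi_\textsc{z}$ are affine with positive slope, strip away the $d$-independent additive constants, and then match the ratio of the two remaining positive coefficients to solve for $\alpha'$ (the paper's explicit formula $\alpha' = \big[\alpha/(R_\textsc{Sem}\sigma_\textsc{Lex})\big]\big/\big[\alpha/(R_\textsc{Sem}\sigma_\textsc{Lex}) + (1-\alpha)/(R_\textsc{Lex}\sigma_\textsc{Sem})\big]$ is algebraically identical to your $\lambda\alpha/(1+(\lambda-1)\alpha)$). Your added verification that $\alpha\mapsto\alpha'$ is a bijection of $[0,1]$ and your explicit handling of the degenerate $\sigma=0$ or $M_q=m_q$ case are welcome refinements the paper leaves implicit.
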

\begin{proof}
Write $m_\textsc{o}$ and $M_\textsc{o}$ for the minimum and maximum scores retrieved by system \textsc{o}, and $\mu_\textsc{o}$ and $\sigma_\textsc{o}$ for their mean and standard deviation. We also write $R_\textsc{o} = M_\textsc{o} - m_\textsc{o}$ for brevity. For every document $d$, we have the following:
\begin{align*}
    \alpha \frac{f_\textsc{Sem}(q, d) - m_\textsc{Sem}}{R_\textsc{Sem}} &+ 
    (1 - \alpha) \frac{f_\textsc{Lex}(q, d) - m_\textsc{Lex}}{R_\textsc{Lex}} 
    \overset{\pi}{=} \frac{\alpha}{R_\textsc{sem}} f_\textsc{Sem}(q, d) + 
    \frac{1 - \alpha}{R_\textsc{Lex}} f_\textsc{Lex}(q, d) \\
    &\overset{\pi}{=} \frac{1}{\sigma_\textsc{Sem} \sigma_\textsc{Lex}}
    \big[ \frac{\alpha}{R_\textsc{Sem}} f_\textsc{Sem}(q, d) +
           \frac{1 - \alpha}{R_\textsc{Lex}} f_\textsc{Lex}(q, d) -
           \frac{\alpha}{R_\textsc{Sem}} \mu_\textsc{Sem} -
           \frac{1 - \alpha}{R_\textsc{Lex}} \mu_\textsc{Lex} \big] \\
    &\overset{\pi}{=} \frac{\alpha}{R_\textsc{Sem} \sigma_\textsc{Lex}} \big(
        \frac{f_\textsc{Sem}(q, d) - \mu_\textsc{Sem}}{\sigma_\textsc{Sem}} \big)
        + \frac{1 - \alpha}{R_\textsc{Lex} \sigma_\textsc{Sem}} \big(
        \frac{f_\textsc{Lex}(q, d) - \mu_\textsc{Lex}}{\sigma_\textsc{Lex}} \big),
\end{align*}
where in every step we either added a constant or multiplied the expression by a positive constant, both rank-preserving operations. Finally, setting
\begin{equation*}
    \alpha^\prime = \frac{\alpha}{R_\textsc{Sem} \sigma_\textsc{Lex}} / (\frac{\alpha}{R_\textsc{Sem} \sigma_\textsc{Lex}}
    + \frac{1 - \alpha}{R_\textsc{Lex} \sigma_\textsc{Sem}})
\end{equation*}
completes the proof. The other direction is similar.
\end{proof}

The fact above implies that the problem of tuning $\alpha$ for a query in a min-max normalization regime is equivalent to learning $\alpha^\prime$ in a z-score normalized setting. In other words, there is a one-to-one relationship between these parameters, and as a result solutions can be mapped from one problem space to the other. However, this statement is only true for individual queries and does not have any implications for the learning of the weight in the convex combination over an entire collection of queries. Let us now consider this more complex setup.

The question we wish to answer is as follows: under what conditions is $f_\textsc{Convex}$ with parameter $\alpha$ and a pair of normalization functions $(\phi_\textsc{Sem}, \phi_\textsc{Lex})$ rank-equivalent to an $f^\prime_\textsc{Convex}$ of a new pair of normalization functions $(\phi_\textsc{Sem}^\prime, \phi_\textsc{Lex}^\prime)$ with weight $\alpha^\prime$? That is, for a constant $\alpha$ with one normalization protocol, when is there a constant $\alpha^\prime$ that produces the same ranked lists for every query but with a different normalization protocol? The answer to this question helps us understand whether and when changing normalization schemes from min-max to z-score, for example, matters. We state the following definitions followed by a theorem that answers this question.

\begin{definition}
    We say $f: \mathbb{R} \rightarrow \mathbb{R}$ is a $\delta$-\emph{expansion} with respect to $g: \mathbb{R} \rightarrow \mathbb{R}$ if for any $x$ and $y$ in the domains of $f$ and $g$ we have that $|f(y) - f(x)| \geq \delta |g(y) - g(x)|$ for some $\delta \geq 1$.
\end{definition}

For example, $\phi_\textsc{mm}(\cdot)$ is an expansion with respect to $\phi_\textsc{tmm}(\cdot)$ with a factor $\delta$ that depends on the range of the scores. As another example, $\phi_\textsc{z}(\cdot)$ is an expansion with respect to $\phi_\textsc{mm}(\cdot)$.

\begin{definition}
    For two pairs of functions $f, g: \mathbb{R} \rightarrow \mathbb{R}$ and $f^\prime, g^\prime: \mathbb{R} \rightarrow \mathbb{R}$, and two points $x$ and $y$ in their domains, we say that $f^\prime$ expands with respect to $f$ \emph{more rapidly} than $g^\prime$ expands with respect to $g$, with a \emph{relative expansion rate} of $\lambda \geq 1$, if the following condition holds:
    \begin{equation*}
        \frac{|f^\prime(y) - f^\prime(x)|}{|f(y) - f(x)|} = \lambda \frac{|g^\prime(y) - g^\prime(x)|}{|g(y) - g(x)|}.
    \end{equation*}
    When $\lambda$ is independent of the points $x$ and $y$, we call this relative expansion \emph{uniform}:
    \begin{equation*}
        \frac{|\Delta f^\prime| / |\Delta f|}{|\Delta g^\prime| / |\Delta g|} = \lambda,\, \forall x, y.
    \end{equation*}
    \label{definition:growth}
\end{definition}

As an example, if $f$ and $g$ are min-max scaling and $f^\prime$ and $g^\prime$ are z-score normalization, then their respective rate of expansion is roughly similar. We will later show that this property often holds empirically across different transformations.

\begin{theorem}
    For every choice of $\alpha$, there exists a constant $\alpha^\prime$ such that the following functions are rank-equivalent on a collection of queries $Q$:
    \begin{equation*}
        f_\textsc{Convex} = \alpha \phi(f_\textsc{Sem}(q, d)) + (1 - \alpha) \omega(f_\textsc{Lex}(q, d)),
    \end{equation*}
    and
    \begin{equation*}
        f^\prime_\textsc{Convex} = \alpha^\prime \phi^\prime(f_\textsc{Sem}(q, d)) + (1 - \alpha^\prime) \omega^\prime(f_\textsc{Lex}(q, d)),
    \end{equation*}
    if for the monotone functions $\phi, \omega, \phi^\prime, \omega^\prime: \mathbb{R} \rightarrow \mathbb{R}$, $\phi^\prime$ expands with respect to $\phi$ more rapidly than $\omega^\prime$ expands with respect to $\omega$ with a uniform rate $\lambda$.
    \label{theorem:main}
\end{theorem}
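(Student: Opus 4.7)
The plan is to reduce rank-equivalence to a condition on a \emph{single} pair of documents and then show that the uniform relative expansion rate is exactly what allows one global $\alpha'$ to handle every such pair (and hence every query in $Q$). Since ranking is determined by pairwise comparisons within each query, it suffices to show: for any query $q$ and any two documents $d_1,d_2$,
\begin{equation*}
\alpha\,[\phi(s_1)-\phi(s_2)] + (1-\alpha)\,[\omega(\ell_1)-\omega(\ell_2)] > 0
\iff
\alpha'\,[\phi'(s_1)-\phi'(s_2)] + (1-\alpha')\,[\omega'(\ell_1)-\omega'(\ell_2)] > 0,
\end{equation*}
where $s_i = f_\textsc{Sem}(q,d_i)$ and $\ell_i = f_\textsc{Lex}(q,d_i)$. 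Writing $A,B$ for the bracketed differences on the left and $A',B'$ for those on the right, the goal becomes: find $\alpha'$ (independent of $q, d_1, d_2$) so that $\alpha A + (1-\alpha)B$ and $\alpha' A' + (1-\alpha')B'$ have the same sign.

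The second step is to use monotonicity of $\phi,\phi',\omega,\omega'$ to conclude that $\mathrm{sign}(A)=\mathrm{sign}(A')$ and $\mathrm{sign}(B)=\mathrm{sign}(B')$. Then by Definition~\ref{definition:growth} with uniform relative expansion rate $\lambda$, letting $\mu := |B'|/|B|$ (taking $\mu=0$ when $B=B'=0$), I have $B' = \mu B$ and $A' = \lambda \mu A$, with the sign bookkeeping handled by monotonicity. Substituting,
\begin{equation*}
\alpha' A' + (1-\alpha')B' \;=\; \mu\bigl[\alpha'\lambda A + (1-\alpha')B\bigr],
\end{equation*}
and since $\mu \geq 0$ the sign of the left side is the sign of $\alpha'\lambda A + (1-\alpha')B$ (the degenerate $\mu=0$ case reduces to $B=0$, in which case both target expressions have the same sign as their $A$-coefficients times $A$, which again can be matched).

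The third step is to pick $\alpha'$ so that the vectors of coefficients $(\alpha,1-\alpha)$ and $(\alpha'\lambda,1-\alpha')$ are positive scalar multiples of each other; this guarantees sign agreement for every $(A,B)\in\mathbb{R}^2$. Solving $\alpha'\lambda/\alpha = (1-\alpha')/(1-\alpha)$ gives
\begin{equation*}
\alpha' \;=\; \frac{\alpha}{\lambda(1-\alpha)+\alpha},
\end{equation*}
which lies in $[0,1]$ for any $\lambda\geq 1$ and any $\alpha\in[0,1]$. Because the expansion rate $\lambda$ is uniform (does not depend on $q$ or on the chosen document pair), this single $\alpha'$ works for every pairwise comparison across the entire collection $Q$, which is exactly rank-equivalence.

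The main obstacle is precisely the uniformity clause: if $\lambda$ were allowed to depend on $(x,y)$, then different document pairs would demand different values of $\alpha'$ and no global choice would exist. Minor care is needed in two places: handling the degenerate cases $A=0$ or $B=0$ (where monotonicity still forces $A'=0$ or $B'=0$ respectively, so sign agreement is automatic), and verifying that $\alpha'$ stays in $[0,1]$ (immediate from $\lambda\geq 1$). The converse direction (producing $\alpha$ from $\alpha'$) is symmetric, using that uniform relative expansion at rate $\lambda$ corresponds to rate $1/\lambda$ in the opposite direction, up to swapping the roles of $\phi/\phi'$ and $\omega/\omega'$.
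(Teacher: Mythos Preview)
Your proposal is correct and takes a genuinely different (and cleaner) route than the paper. The paper argues by case analysis: for each ordered pair $(d_i,d_j)$ it derives either a lower bound or an upper bound on $\alpha'$ depending on the signs of $\Delta\phi_{ij}$ and $\Delta\omega_{ji}$, and then uses the uniformity of $\lambda$ to show that the resulting intersection of intervals is non-empty. It thus proves existence without producing an explicit value. You instead exploit the uniform relative expansion rate algebraically: writing $A'=\lambda\mu A$ and $B'=\mu B$ lets you factor the pair-dependent scale $\mu$ out of $\alpha' A'+(1-\alpha')B'$, reducing the problem to making $(\alpha'\lambda,\,1-\alpha')$ a positive multiple of $(\alpha,\,1-\alpha)$, which yields the closed-form $\alpha'=\alpha/[\lambda(1-\alpha)+\alpha]$. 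Your argument is shorter, constructive, and makes the role of uniformity transparent (if $\lambda$ varied with the pair, the factorization would still go through but the single equation for $\alpha'$ would become pair-dependent). The paper's interval picture, by contrast, is perhaps more suggestive of what happens when uniformity is relaxed to approximate stability of $\lambda$, which is the empirical regime the paper later examines. The degenerate cases $A=0$ or $B=0$ are glossed over in both arguments; they are harmless under strict monotonicity, and neither proof treats them with more care than you do.
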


\begin{proof}
Consider a pair of documents $d_i$ and $d_j$ in the ranked list of a query $q$ such that $d_i$ is ranked above $d_j$ according to $f_\textsc{Convex}$. Shortening $f_\textsc{o}(q, d_k)$ to $f^{(k)}_\textsc{o}$ for brevity, we have that:
\begin{align*}
    f^{(i)}_\textsc{Convex} > f^{(j)}_\textsc{Convex} &\implies
    \alpha \big[ \underbrace{(\phi(f^{(i)}_\textsc{Sem}) - \phi(f^{(j)}_\textsc{Sem}))}_{\Delta\phi_{ij}} +
                \underbrace{(\omega(f^{(j)}_\textsc{Lex}) - \omega(f^{(i)}_\textsc{Lex}))}_{\Delta\omega_{ji}} \big] >
                \omega(f^{(j)}_\textsc{Lex}) - \omega(f^{(i)}_\textsc{Lex})
\end{align*}

This holds if and only if we have the following:

\begin{equation}
\begin{cases}
    \alpha > 1/(1 + \frac{\Delta\phi_{ij}}{\Delta\omega_{ji}}), & \text{if $\Delta\phi_{ij} + \Delta\omega_{ji} > 0$}, \\
    \alpha < 1/(1 + \frac{\Delta\phi_{ij}}{\Delta\omega_{ji}}), & \text{otherwise}.
\end{cases}
\label{equation:convex:conditions}
\end{equation}

Observe that, because of the monotonicity of a convex combination and the monotonicity of the normalization functions, the case $\Delta\phi_{ij} < 0$ and $\Delta\omega_{ji} > 0$ (which implies that the semantic and lexical scores of $d_j$ are both larger than $d_i$) is not valid as it leads to a reversal of ranks. Similarly, the opposite case $\Delta\phi_{ij} > 0$ and $\Delta\omega_{ji} < 0$ always leads to the correct order regardless of the weight in the convex combination.  We consider the other two cases separately below.

\emph{Case 1: $\Delta\phi_{ij} > 0$ and $\Delta\omega_{ji} > 0$}. Because of the monotonicity property, we can deduce that $\Delta\phi^\prime_{ij} > 0$ and $\Delta\omega^\prime_{ji} > 0$. From Equation~(\ref{equation:convex:conditions}), for the order between $d_i$ and $d_j$ to be preserved under the image of $f^\prime_\textsc{Convex}$, we must therefore have the following:
\begin{equation*}
    \alpha^\prime > 1/(1 + \frac{\Delta\phi^\prime_{ij}}{\Delta\omega^\prime_{ji}}).
\end{equation*}
By assumption, using Definition~\ref{definition:growth}, we observe that:
\begin{equation*}
    \frac{\Delta\phi^\prime_{ij}}{\Delta\phi_{ij}} \geq \frac{\Delta\omega^\prime_{ji}}{\Delta\omega_{ji}} \implies
    \frac{\Delta\phi^\prime_{ij}}{\Delta\omega^\prime_{ji}} \geq \frac{\Delta\phi_{ij}}{\Delta\omega_{ji}}.
\end{equation*}
As such, the lower-bound on $\alpha^\prime$ imposed by documents $d_i$ and $d_j$ of query $q$, $L^\prime_{ij}(q)$, is smaller than the lower-bound on $\alpha$, $L_{ij}(q)$. Like $\alpha$, this case does not additionally constrain $\alpha^\prime$ from above (i.e., the upper-bound does not change: $U^\prime_{ij}(q) = U_{ij}(q) = 1$).

\emph{Case 2: $\Delta\phi_{ij} < 0$, $\Delta\omega_{ji} < 0$}. Once again, due to monotonicity, it is easy to see that $\Delta\phi^\prime_{ij} < 0$ and $\Delta\omega^\prime_{ji} < 0$. Equation~(\ref{equation:convex:conditions}) tells us that, for the order to be preserved under $f^\prime_\textsc{Convex}$, we must similarly have that:
\begin{equation*}
    \alpha^\prime < 1/(1 + \frac{\Delta\phi^\prime_{ij}}{\Delta\omega^\prime_{ji}}).
\end{equation*}
Once again, by assumption we have that the upper-bound on $\alpha^\prime$ is a translation of the upper-bound on $\alpha$ to the left. The lower-bound is unaffected and remains $0$.

For $f^\prime_\textsc{Convex}$ to induce the same order as $f_\textsc{Convex}$ among all pairs of documents for all queries in $Q$, the intersection of the intervals produced by the constraints on $\alpha^\prime$ has to be non-empty:

\begin{equation*}
    I^\prime \triangleq \bigcap_{q} [\max_{ij} L^\prime_{ij}(q), \min_{ij} U^\prime_{ij}(q)] = [\max_{q, ij} L^\prime_{ij}(q), \min_{q, ij} U^\prime_{ij}(q)] \neq \emptyset.
\end{equation*}
We next prove that $I^\prime$ is always non-empty to conclude the proof of the theorem.

By Equation~(\ref{equation:convex:conditions}) and the existence of $\alpha$, we know that $\max_{q,ij} L_{ij}(q) \leq \min_{q,ij} U_{ij}(q)$. Suppose that documents $d_i$ and $d_j$ of query $q_1$ maximize the lower-bound, and that documents $d_m$ and $d_n$ of query $q_2$ minimize the upper-bound. We therefore have that:
\begin{equation*}
    1 / (1 + \frac{\Delta \phi_{ij}}{\Delta \omega_{ji}}) \leq 1 / (1 + \frac{\Delta \phi_{mn}}{\Delta \omega_{nm}}) \implies
    \frac{\Delta \phi_{ij}}{\Delta \omega_{ji}} \geq \frac{\Delta \phi_{mn}}{\Delta \omega_{nm}}
\end{equation*}
Because of the uniformity of the relative expansion rate, we can deduce that:
\begin{equation*}
    \frac{\Delta \phi^\prime_{ij}}{\Delta \omega^\prime_{ji}} \geq \frac{\Delta \phi^\prime_{mn}}{\Delta \omega^\prime_{nm}} \implies \max_{q,ij} L^\prime_{ij}(q) \leq \min_{q,ij} U^\prime_{ij}(q).
\end{equation*}
\end{proof}

It is easy to show that the theorem above also holds when the condition is updated to reflect a shift of lower- and upper-bounds to the right, which happens when $\phi^\prime$ \emph{contracts} with respect to $\phi$ more rapidly than $\omega^\prime$ does with respect to $\omega$.

The picture painted by Theorem~\ref{theorem:main} is that switching from min-max scaling to z-score normalization or any other linear transformation that is bounded and does not severely distort the distribution of scores, especially among the top-ranking documents, results in a rank-equivalent function. At most, for any given value of the ranking metric of interest such as NDCG, we should observe a shift of the weight in the convex combination to the right or left. Figure~\ref{figure:convex:normalization} illustrates this effect empirically on select datasets. As anticipated, the peak performance in terms of NDCG shifts to the left or right depending on the type of normalization.

\begin{figure}[t]
\begin{center}
\centerline{
\subfloat[MS MARCO]{
\includegraphics[height=2in]{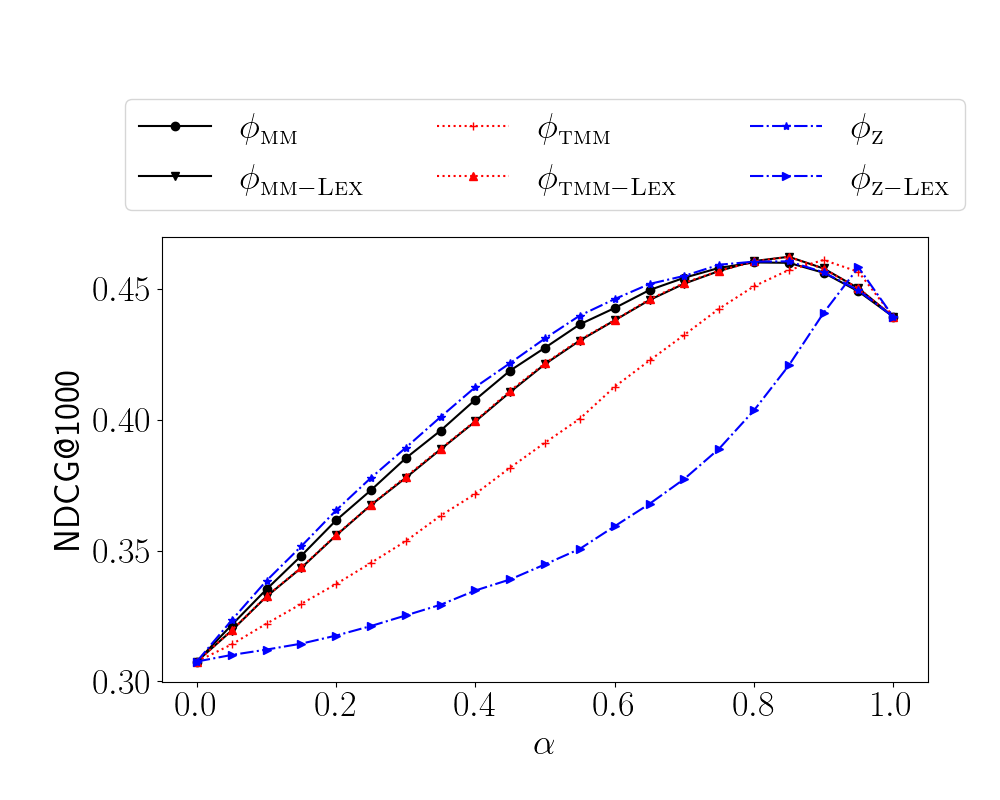}}
\subfloat[\textsc{Quora}]{
\includegraphics[height=2in]{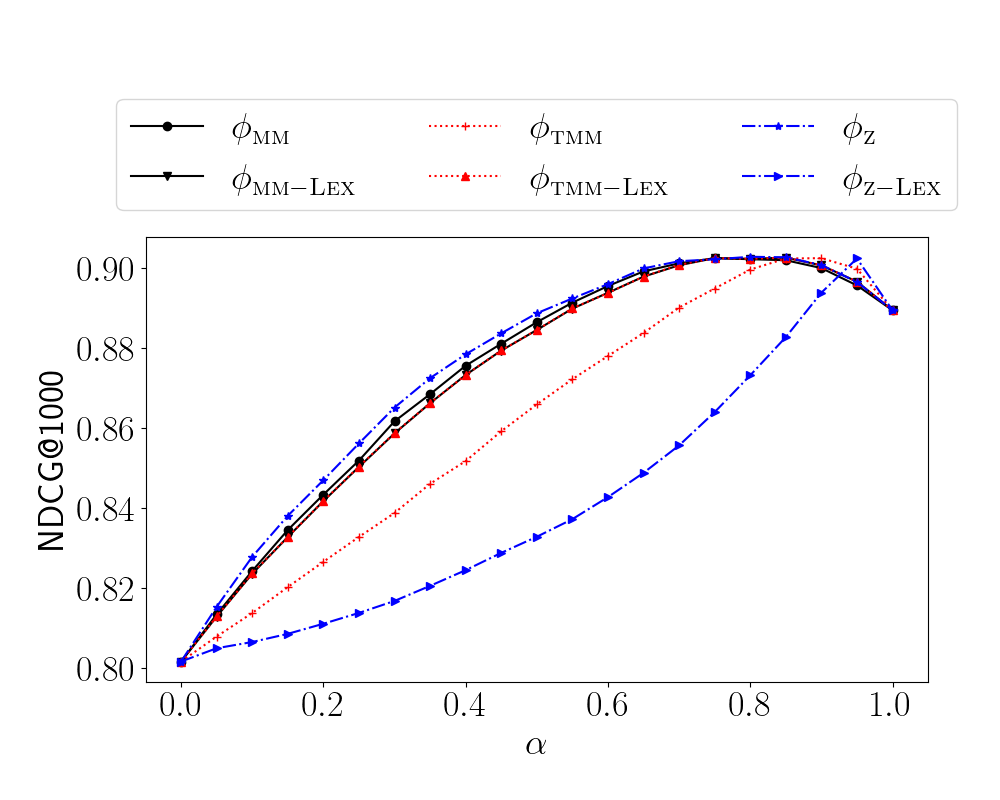}}
}
\centerline{
\subfloat[\textsc{HotpotQA}]{
\includegraphics[height=2in]{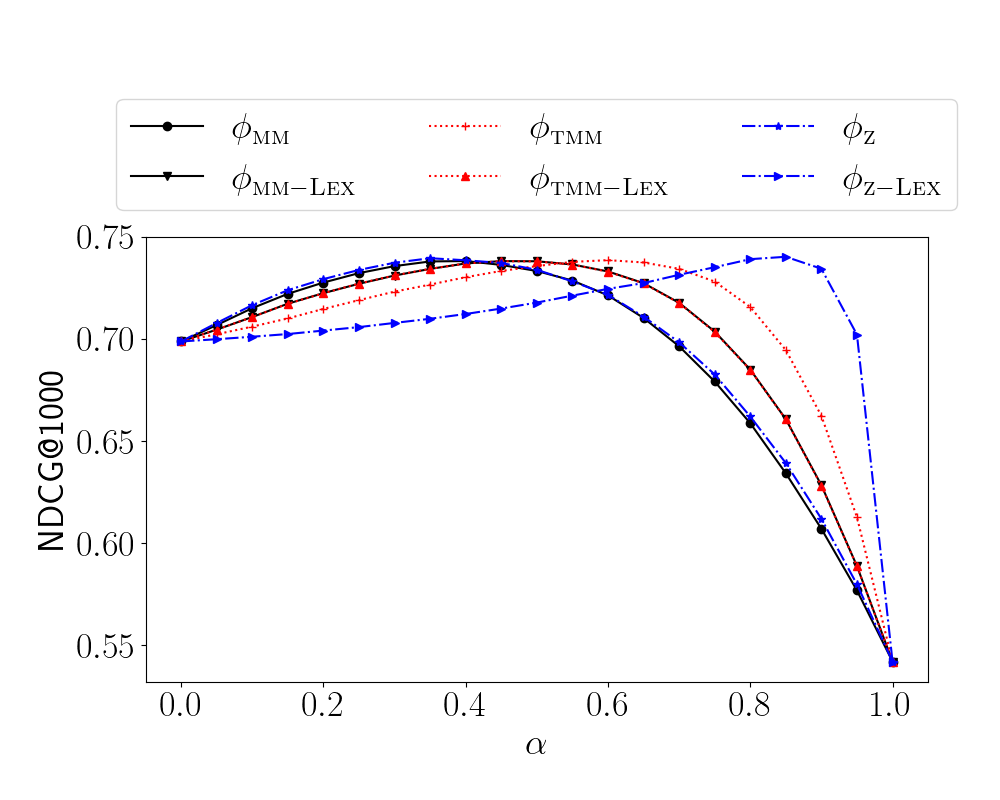}}
\subfloat[\textsc{FiQA}]{
\includegraphics[height=2in]{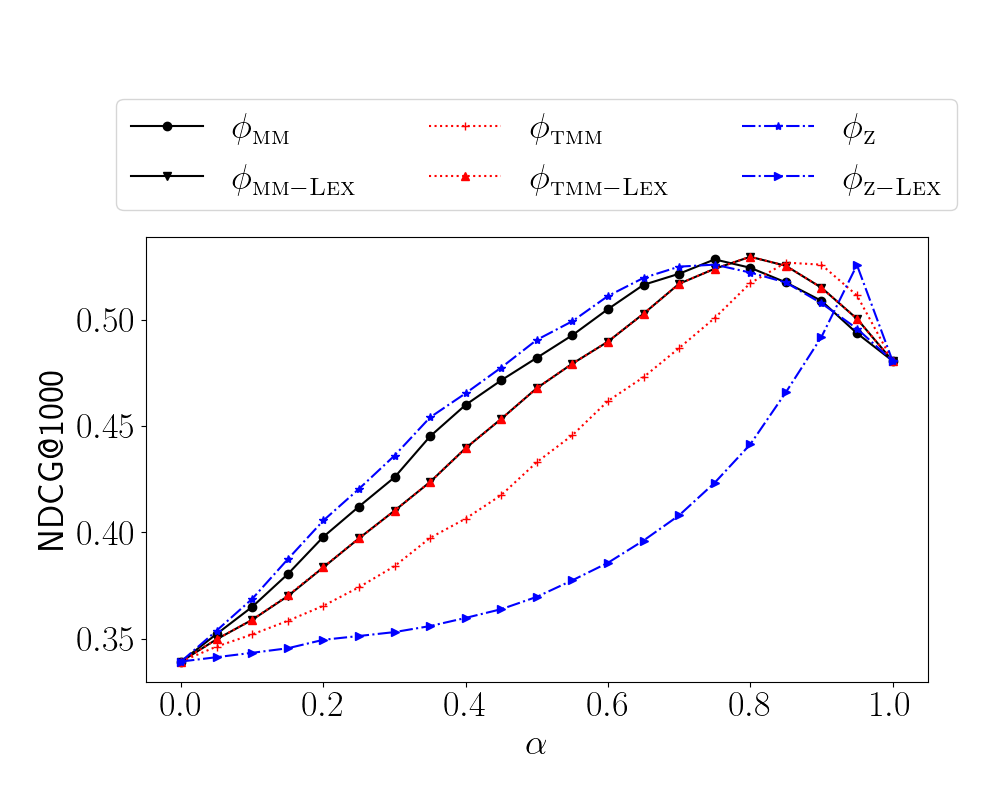}}
}
\caption{Effect of normalization on the performance of $f_\textsc{Convex}$ as a function of $\alpha$ on the validation set.}
\label{figure:convex:normalization}
\end{center}
\end{figure}

The uniformity requirement on the relative expansion rate, $\lambda$, in Theorem~\ref{theorem:main} is not as strict and restrictive as it may appear. First, it is only necessary for $\lambda$ to be stable on the set of ordered pairs of documents as ranked by $f_\textsc{Convex}$:
\begin{equation*}
    \frac{|\Delta \phi^\prime_{ij}| / |\Delta \phi_{ij}|}{|\Delta \omega^\prime_{ji}| / |\Delta \omega_{ji}|} = \lambda,\, \forall
    (d_i, d_j) \textit{ st } f_\textsc{Convex}(d_i) > f_\textsc{Convex}(d_j).
\end{equation*}
Second, as we will observe experimentally, $\lambda$ being \emph{concentrated} around one value, rather than being the same constant everywhere as uniformity requires, is often sufficient for the effect to materialize in practice. We observe this phenomenon empirically by fixing the parameter $\alpha$ in $f_\textsc{Convex}$ with one transformation and forming ranked lists, then choosing another transformation and computing its relative expansion rate $\lambda$ on all ordered pairs of documents. We show the measured relative expansion rate in Figure~\ref{figure:relative_expansion} for various transformations.

\begin{figure}[t]
\begin{center}
\centerline{
\subfloat[MS MARCO]{
\includegraphics[height=2in]{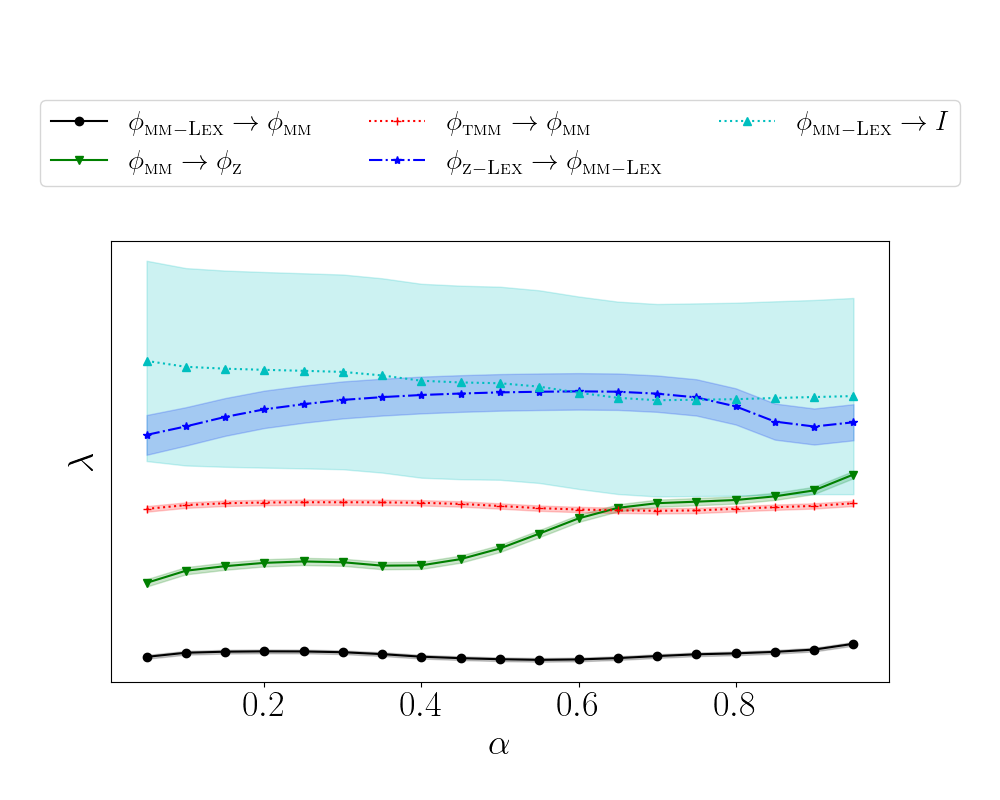}}
\subfloat[\textsc{Quora}]{
\includegraphics[height=2in]{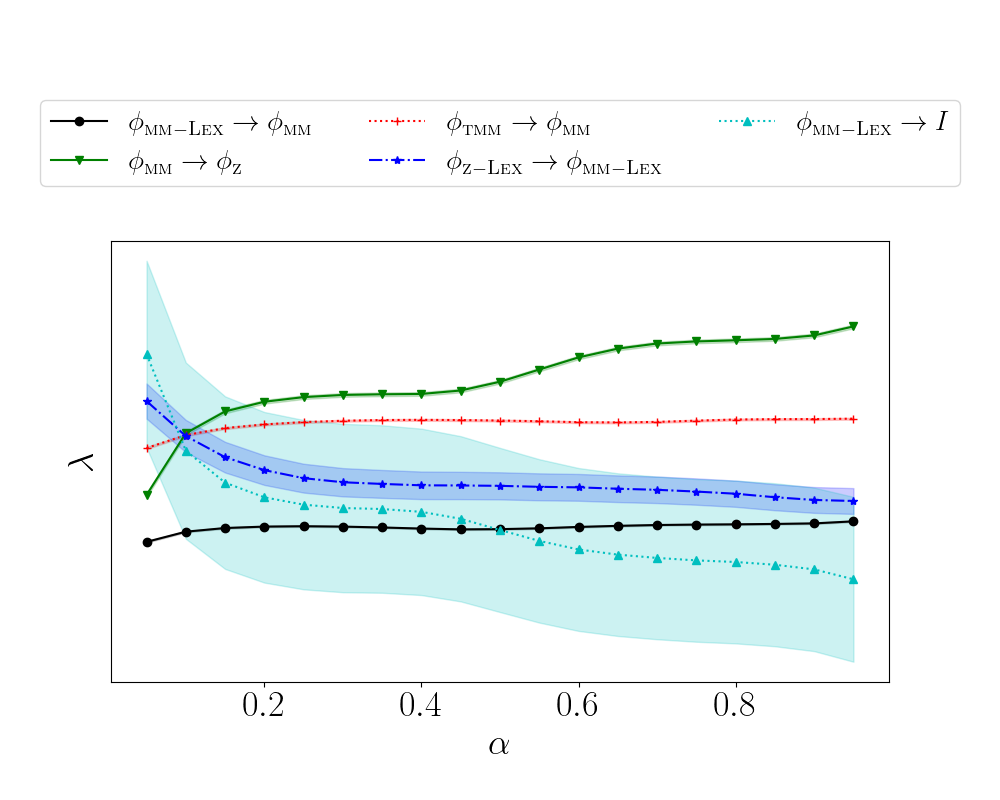}}
}
\centerline{
\subfloat[\textsc{HotpotQA}]{
\includegraphics[height=2in]{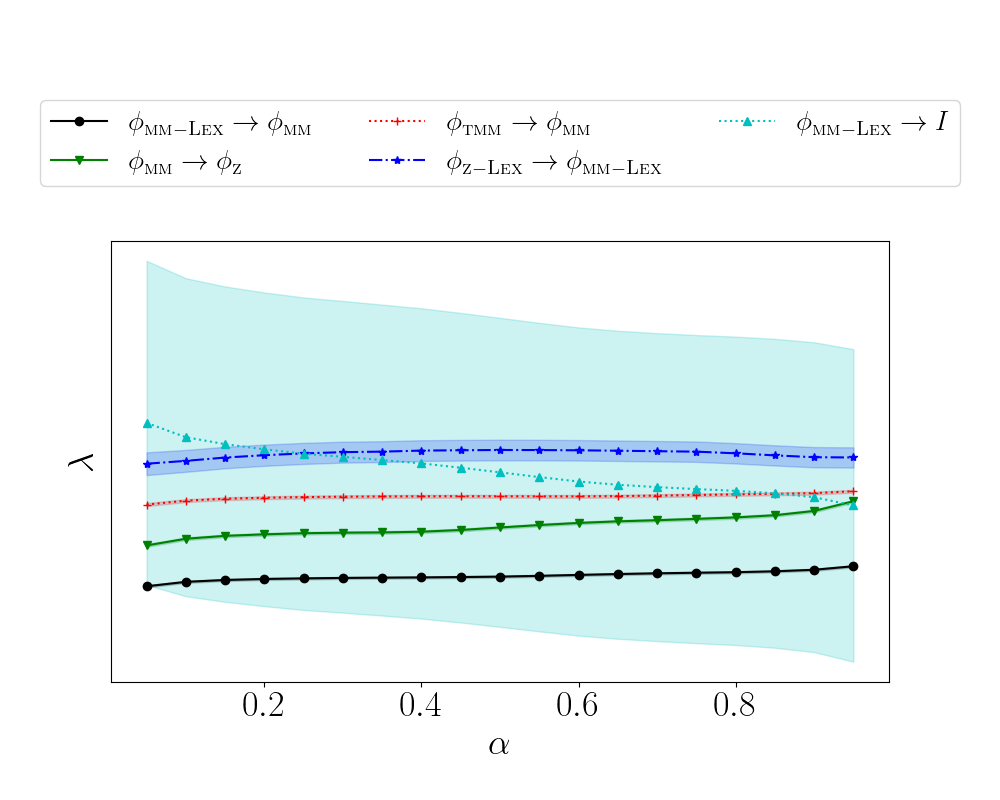}}
\subfloat[\textsc{FiQA}]{
\includegraphics[height=2in]{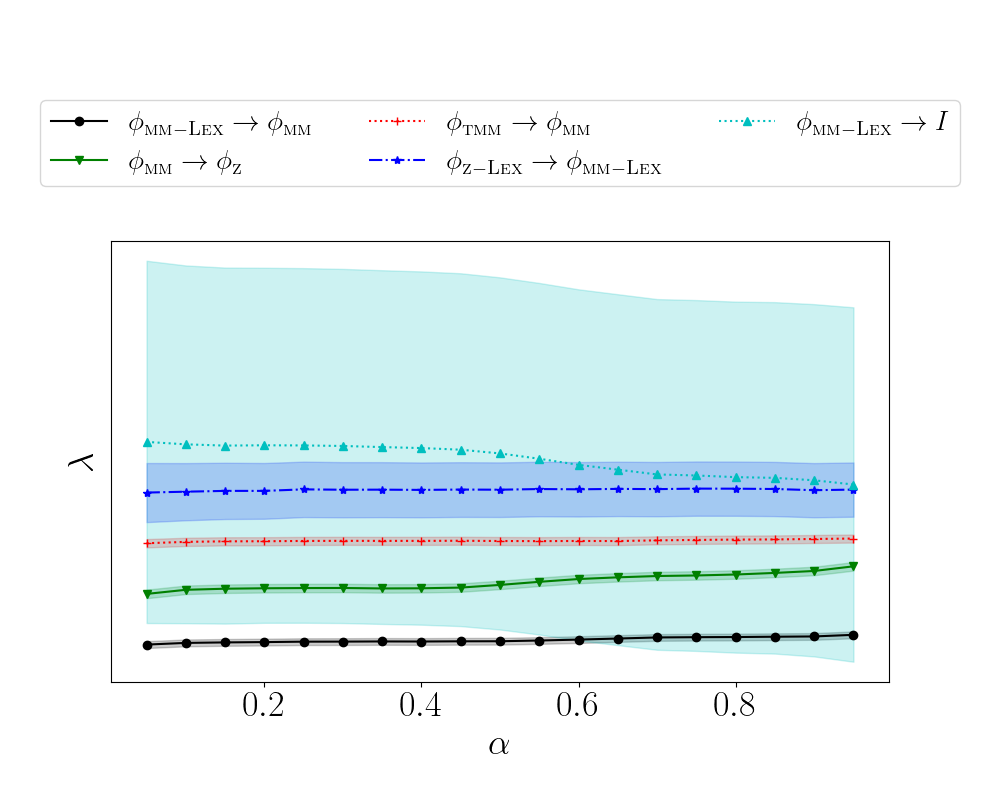}}
}
\caption{Relative expansion rate of semantic scores with respect to lexical scores, $\lambda$, when changing from one transformation to another, with $95\%$ confidence intervals. Prior to visualization, we normalize values of $\lambda$ to bring them into a similar scale---this only affects aesthetics and readability, but is the reason why the vertical axis is not scaled. For most transformations and every value of $\alpha$, we observe a stable relative rate of expansion where $\lambda$ concentrates around one value for the vast majority of queries.}
\label{figure:relative_expansion}
\end{center}
\end{figure}

Figure~\ref{figure:relative_expansion} shows that most pairs of transformations yield a stable relative expansion rate. For example, if $f_\textsc{Convex}$ uses $\phi_\textsc{tmm}$ and $f^\prime_\textsc{Convex}$ uses $\phi_\textsc{mm}$---denoted by $\phi_\textsc{tmm} \rightarrow \phi_\textsc{mm}$---for every choice of $\alpha$, the relative expansion rate $\lambda$ is concentrated around a constant value. This implies that any ranked list obtained from $f_\textsc{Convex}$ can be reconstructed by $f^\prime_\textsc{Convex}$. Interestingly, $\phi_{\textsc{z}-\textsc{Lex}} \rightarrow \phi_{\textsc{mm}-\textsc{Lex}}$ has a comparatively less stable $\lambda$, but removing normalization altogether (i.e., $\phi_{\textsc{mm}-\textsc{Lex}} \rightarrow I$) dramatically distorts the expansion rates. This goes some way to explain why normalization and boundedness are important properties.

This connection between boundedness and the effect that changing the normalization function has on ranking quality, is clearer in the experiments presented in Appendix~\ref{appendix:splade-bm25} through Appendix~\ref{appendix:tasb-minilm}. In general, when a function is unbounded, fusing with normalization versus without results in a relative expansion rate $\lambda$ with a high variance, which leads to relatively different classes of rankings: there exists ranked lists which are produced by a fusion \emph{with} normalization that cannot be reconstructed by a fusion \emph{without} normalization. For pairs of normalization functions whose relative expansion rate is stable and highly concentrated, on the other hand, the curves showing the effect of $\alpha$ on $f_\textsc{Convex}$ are translations of each other, as Theorem~\ref{theorem:main} predicts: every ranked list produced by a fusion using one normalization function can be reproduced by a fusion using another.

In the last two sections, we have answered RQ1: convex combination is an appropriate fusion function and its performance is not sensitive to the choice of normalization so long as the transformation has reasonable properties. Interestingly, the behavior of $\phi_\textsc{tmm}$ appears to be more robust to the data distribution---its peak remains within a small neighborhood as we move from one dataset to another. We believe the reason $\phi_\textsc{tmm}$-normalized scores are more stable is because it has one fewer data-dependent statistic in the transformation (i.e., minimum score in the retrieved set is replaced with minimum feasible value regardless of the candidate set). In the remainder of this work, we use $\phi_\textsc{tmm}$ and denote a convex combination of scores normalized by it by TM2C2 for brevity.
Where the theoretical minimum does not exist (e.g., with the \textsc{Tas-B} model), we use $\phi_\textsc{mm}$ instead and denote it by M2C2.

\section{Analysis of Reciprocal Rank Fusion}
\label{section:analysis:rrf}

Chen \emph{et al.}~\cite{chen2022ecir} show that \ac{rrf} performs better and more reliably than a convex combination of normalized scores. \ac{rrf} is computed as follows:
\begin{equation}
    f_\textsc{RRF}(q, d)= \frac{1}{\eta + \pi_\textsc{Lex}(q, d)} + \frac{1}{\eta + \pi_\textsc{Sem}(q, d)},
    \label{equation:rrf}
\end{equation}
where $\eta$ is a free parameter. The authors of~\cite{chen2022ecir} take a non-parametric view of \ac{rrf}, where the parameter $\eta$ is set to its default value $60$, in order to apply the fusion to out-of-domain datasets in a zero-shot manner. In this work, we additionally take a parametric view of \ac{rrf}, where as we elaborate later, the number of free parameters is the same as the number of functions being fused together, a quantity that is always larger than the number of parameters in a convex combination.

Let us begin by comparing the performance of \ac{rrf} and TM2C2 empirically to get a sense of their relative efficacy. We first verify whether hybrid retrieval leads to significant gains in in-domain and out-of-domain experiments. In a way, we seek to confirm the findings reported in~\cite{chen2022ecir} and compare the two fusion functions in the process.

Table~\ref{table:rq1} summarizes our results for our primary models, with results for the remaining fusions reported in the appendices. We note that, we set \ac{rrf}'s $\eta$ to $60$ per~\cite{chen2022ecir} but tuned TM2C2's $\alpha$ on the validation set of the in-domain datasets and found that $\alpha=0.8$ works well for the three datasets. In the experiments leading to Table~\ref{table:rq1}, we fix $\alpha=0.8$ and evaluate methods on the test split of the datasets. Per~\cite{chen2022ecir,wang2021bert}, we have also included the performance of an oracle system that uses a per-query $\alpha$, to establish an upper-bound---the oracle knows which value of $\alpha$ works best for any given query.

\begin{table*}[t]
\caption{Recall@1000 and NDCG@1000 (except \textsc{SciFact} and \textsc{NFCorpus} where cutoff is $100$) on the test split of various datasets for lexical and semantic search as well as hybrid retrieval using \ac{rrf}~\cite{chen2022ecir} ($\eta=60$) and TM2C2 ($\alpha=0.8$). The symbols $\ddagger$ and $\ast$ indicate statistical significance ($p$-value $< 0.01$) with respect to TM2C2 and RRF respectively, according to a paired two-tailed $t$-test.}
\label{table:rq1}
\begin{center}
\begin{sc}
\begin{tabular}{cc|cccc|ccccc}
& & \multicolumn{4}{c}{Recall} & \multicolumn{5}{c}{NDCG} \\
\toprule
& \small{Dataset} & \small{Lex.} & \small{Sem.} & \small{TM2C2} & \small{\ac{rrf}} & \small{Lex.} & \small{Sem.} & \small{TM2C2} & \small{\ac{rrf}} & \small{Oracle}  \\
\midrule
\parbox[t]{1mm}{\multirow{2}{*}{\rotatebox[origin=c]{90}{in-domain}}}
& \small{MS MARCO} & $0.836^{\ddagger\ast}$ & $0.964^{\ddagger\ast}$ & \textbf{0.974} & $0.969^\ddagger$ &
  $0.309^{\ddagger\ast}$ & $0.441^{\ddagger\ast}$ & \textbf{0.454} & $0.425^{\ddagger}$ & 0.547 \\
& \small{NQ} & $0.886^{\ddagger\ast}$ & $0.978^{\ddagger\ast}$ & \textbf{0.985} & $0.984$ &
  $0.382^{\ddagger\ast}$ & $0.505^{\ddagger}$ & \textbf{0.542} & $0.514^{\ddagger}$ & 0.637 \\
& \small{Quora} & $0.992^{\ddagger\ast}$ & \textbf{0.999} & \textbf{0.999} & \textbf{0.999} &
  $0.800^{\ddagger\ast}$ & $0.889^{\ddagger\ast}$ & \textbf{0.901} & $0.877^{\ddagger}$ & 0.936 \\
& & & & & & & \\
\midrule
\parbox[t]{1mm}{\multirow{6}{*}{\rotatebox[origin=c]{90}{zero-shot}}}
& \small{NFCorpus} & $0.255^{\ddagger\ast}$ & $0.320^{\ddagger\ast}$ & \textbf{0.338} & $0.327$ &
  $0.268^{\ddagger\ast}$ & $0.296^{\ddagger\ast}$ & \textbf{0.327} & $0.312^{\ddagger}$ & 0.371 \\
& \small{HotpotQA} & $0.878^{\ddagger\ast}$ & $0.756^{\ddagger\ast}$ & $0.884$ & \textbf{0.888} &
  $0.682^{\ddagger\ast}$ & $0.520^{\ddagger\ast}$ & \textbf{0.699} & $0.675^{\ddagger}$ & 0.767 \\
& \small{FEVER} & $0.969^{\ddagger\ast}$ & $0.931^{\ddagger\ast}$ & \textbf{0.972} & \textbf{0.972} &
  $0.689^{\ddagger\ast}$ & $0.558^{\ddagger\ast}$ & \textbf{0.744} & $0.721^{\ddagger}$ & 0.814 \\
& \small{SciFact} & $0.900^{\ddagger\ast}$ & $0.932^{\ddagger\ast}$ & \textbf{0.958} & $0.955$ &
  $0.698^{\ddagger\ast}$ & $0.681^{\ddagger\ast}$ & \textbf{0.753} & $0.730^{\ddagger}$ & 0.796 \\
& \small{DBPedia} & $0.540^{\ddagger\ast}$ & $0.408^{\ddagger\ast}$ & $0.564$ & \textbf{0.567} &
  $0.415^{\ddagger\ast}$ & $0.425^{\ddagger\ast}$ & \textbf{0.512} & $0.489^{\ddagger}$ & 0.553 \\
& \small{FiQA} & $0.720^{\ddagger\ast}$ & \textbf{0.908} & $0.907$ & $0.904$ &
  $0.315^{\ddagger\ast}$ & $0.467^{\ddagger}$ & \textbf{0.496} & $0.464^{\ddagger}$ & 0.561 \\
\bottomrule
\end{tabular}
\end{sc}
\end{center}
\end{table*}

Our results show that hybrid retrieval using \ac{rrf} outperforms pure-lexical and pure-semantic retrieval on most datasets. This fusion method is particularly effective on out-of-domain datasets, rendering the observation of~\cite{chen2022ecir} a robust finding and asserting once more the remarkable performance of \ac{rrf} in zeros-shot settings.

Contrary to~\cite{chen2022ecir}, however, we find that TM2C2 significantly outperforms \ac{rrf} on all datasets in terms of NDCG, and does generally better in terms of Recall. Our observation is consistent with~\cite{wang2021bert} that TM2C2 substantially boosts NDCG even on in-domain datasets.

To contextualize the effect of $\alpha$ on ranking quality, we visualize a parameter sweep on the validation split of in-domain datasets in Figure~\subref*{figure:TM2C2_sweep:in_domain}, and for completeness, on the test split of out-of-domain datasets in Figure~\subref*{figure:TM2C2_sweep:out_of_domain}. These figures also compare the performance of TM2C2 with \ac{rrf} by reporting the difference between NDCG of the two methods. These plots show that there always exists an interval of $\alpha$ for which $f_\textsc{TM2C2} \succ f_\textsc{\ac{rrf}}$ with $\succ$ indicating better rank quality.

\begin{figure}[t]
\begin{center}
\centerline{
\subfloat[in-domain]{
\includegraphics[height=2in]{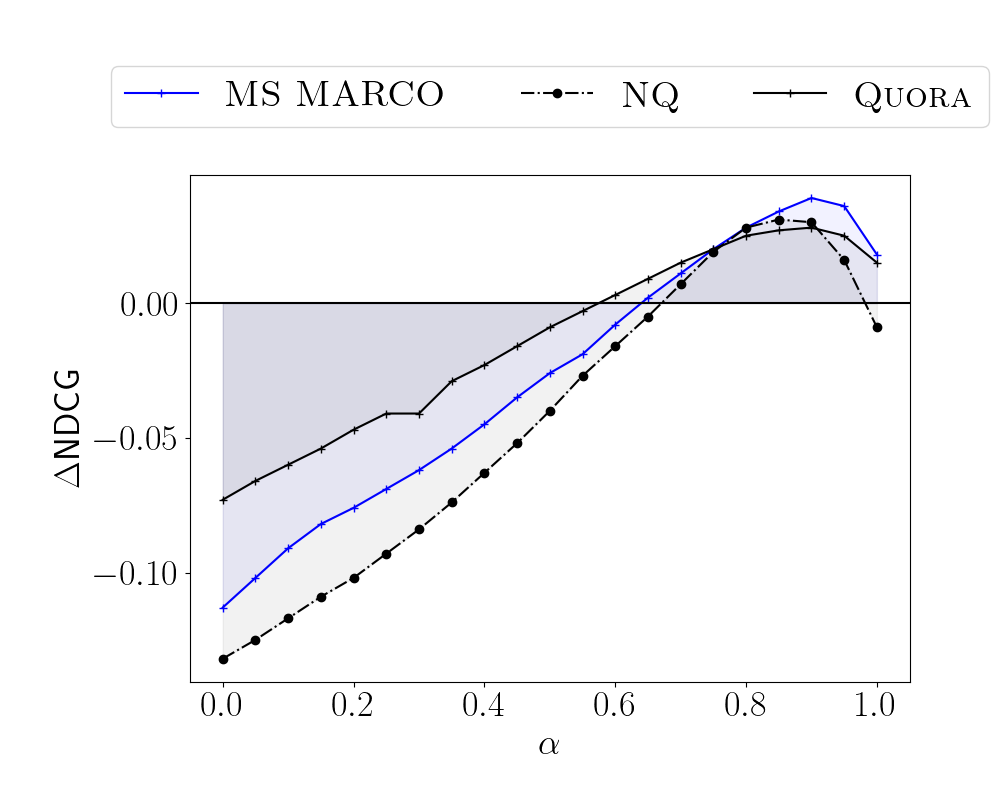}\label{figure:TM2C2_sweep:in_domain}}
\subfloat[out-of-domain]{
\includegraphics[height=2in]{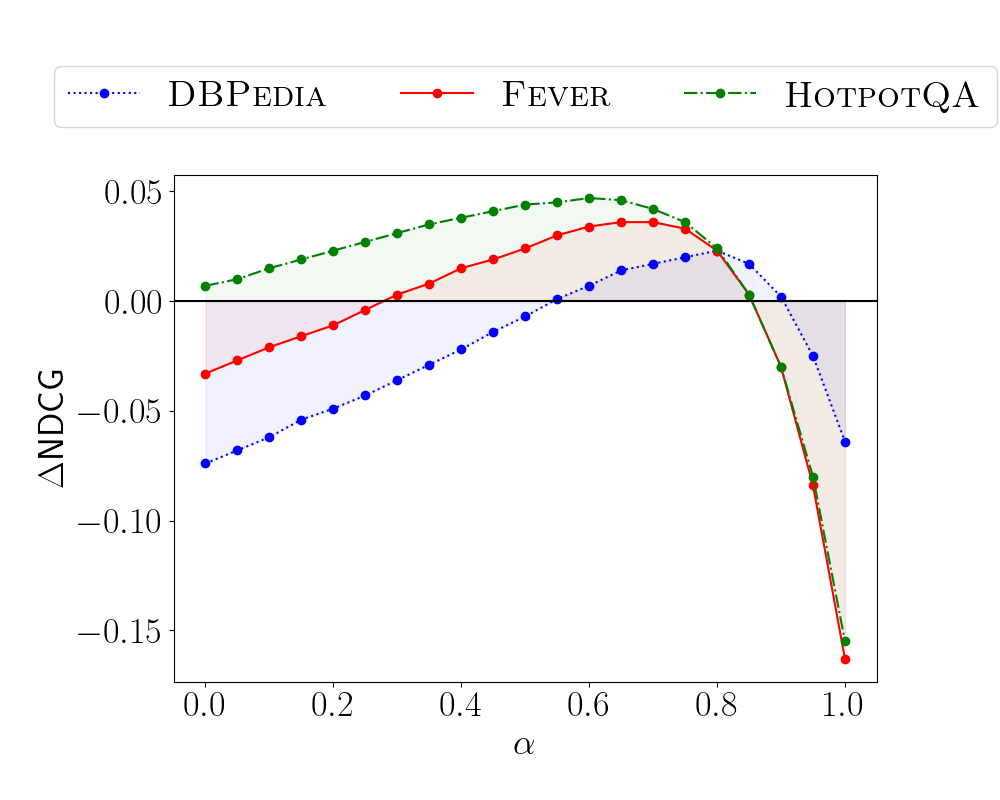}\label{figure:TM2C2_sweep:out_of_domain}}
}
\caption{Difference in NDCG@$1000$ of TM2C2 and \ac{rrf} (positive indicates better ranking quality by TM2C2) as a function of $\alpha$. When $\alpha=0$ the model is rank-equivalent to lexical search while $\alpha=1$ is rank-equivalent to semantic search.}
\label{figure:TM2C2_sweep}
\end{center}
\end{figure}

\subsection{Effect of Parameters}
Chen \emph{et al.}~\cite{chen2022ecir} rightly argue that because \ac{rrf} is merely a function of ranks, rather than scores, it naturally addresses the scale and range problem without requiring normalization---which, as we showed, is not a consequential choice anyway. While that statement is accurate, we believe it introduces new problems that must be recognized too.

The first, more minor issue is that ranks cannot be computed \emph{exactly} unless the entire collection $\mathcal{D}$ is ranked by retrieval system $\textsc{o}$ for every query. That is because, there may be documents that appear in the union set, but not in one of the individual top-$k$ sets. Their true rank is therefore unknown, though is often approximated by ranking documents within the union set. We take this approach when computing ranks.

The second issue is that, unlike TM2C2, \ac{rrf} ignores the raw scores and discards information about their distribution. In this regime, whether or not a document has a low or high semantic score does not matter so long as its rank in $R^k_\textsc{Sem}$ stays the same. It is arguable in this case whether rank is a stronger signal of relevance than score, a measurement in a metric space where \emph{distance} matters greatly. We intuit that, such distortion of distances may result in a loss of valuable information that would lead to better final ranked lists.

To understand these issues better, let us first repeat the exercise in Section~\ref{section:analysis:convex:suitability} for \ac{rrf}. In Figure~\ref{figure:rrf:suitability}, we have plotted the reciprocal rank (i.e., $rr(\pi_\textsc{o}) = 1 / (\eta + \pi_\textsc{o})$ with $\eta=60$) for sampled query-document pairs as before. We choose $\eta=60$ per the setup in~\cite{chen2022ecir}, but note that changing this value leads to different distributions: as $\eta$ approaches $\infty$, for example, all scores will collapse to a single point regardless of the original ranks ($\pi_\textsc{o}$).
\begin{figure*}[t]
\begin{center}
\centerline{
\subfloat[MS MARCO]{
\includegraphics[height=1.6in]{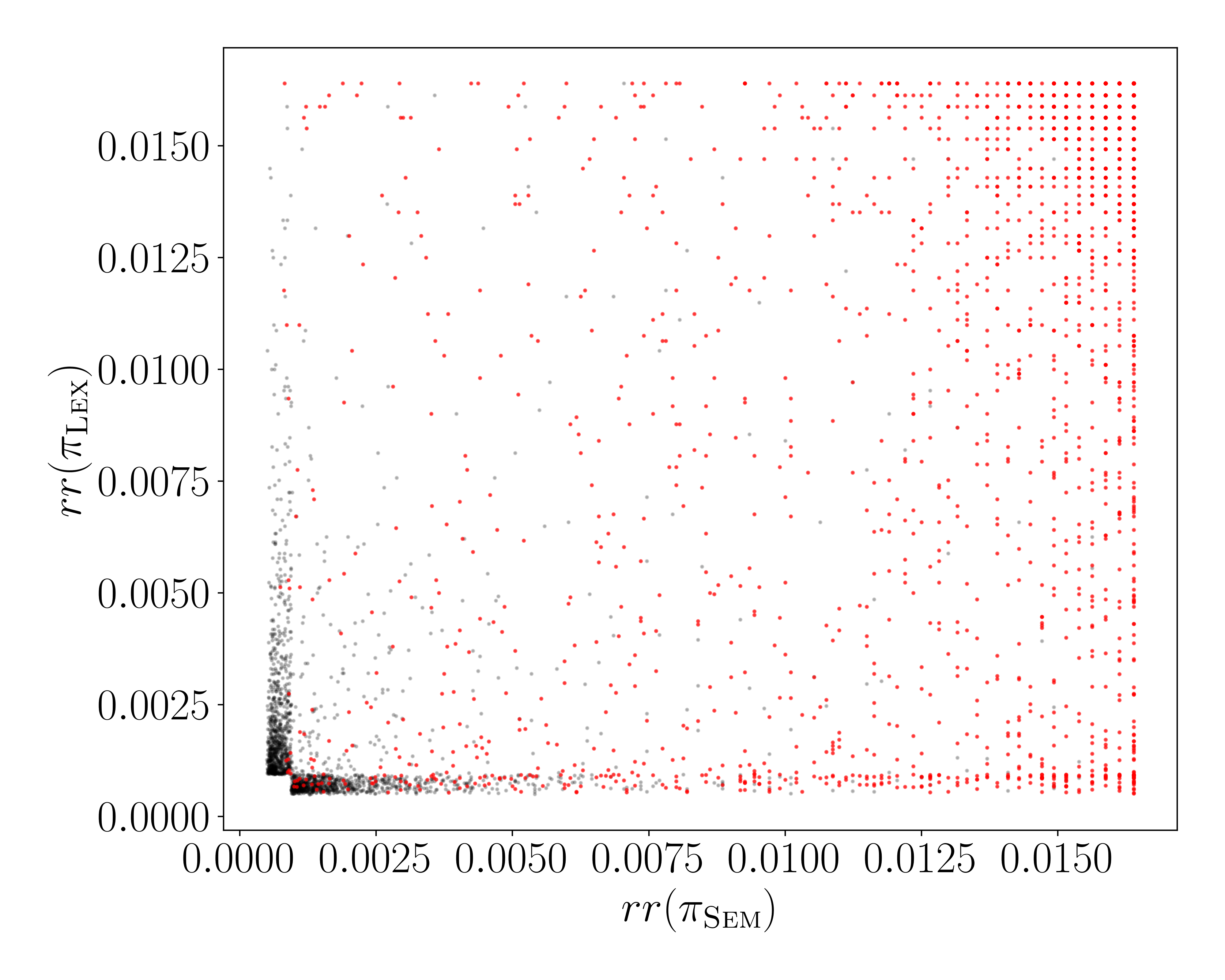}}
\subfloat[\textsc{Quora}]{
\includegraphics[height=1.6in]{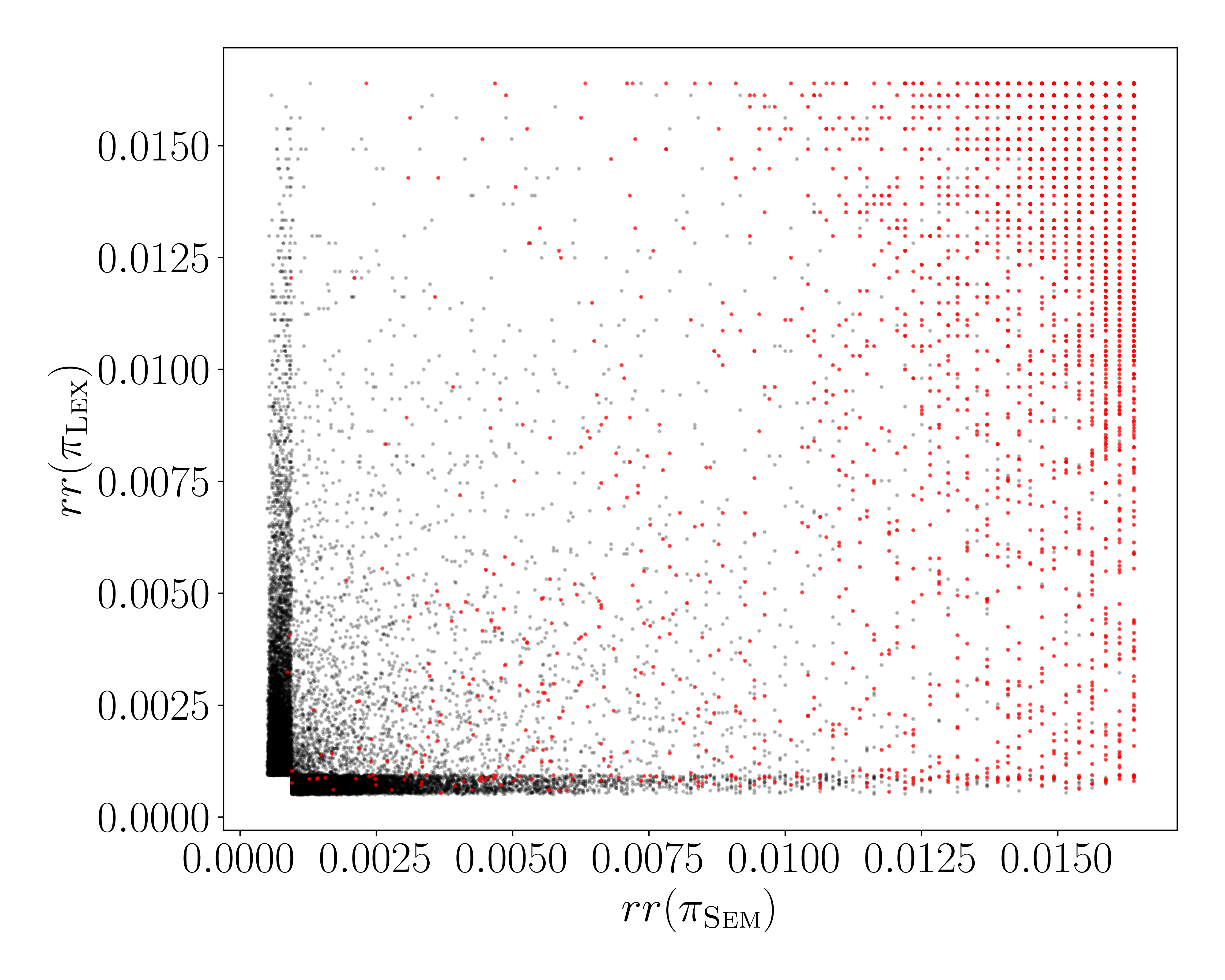}}
}
\centerline{
\subfloat[\textsc{NQ}]{
\includegraphics[height=1.6in]{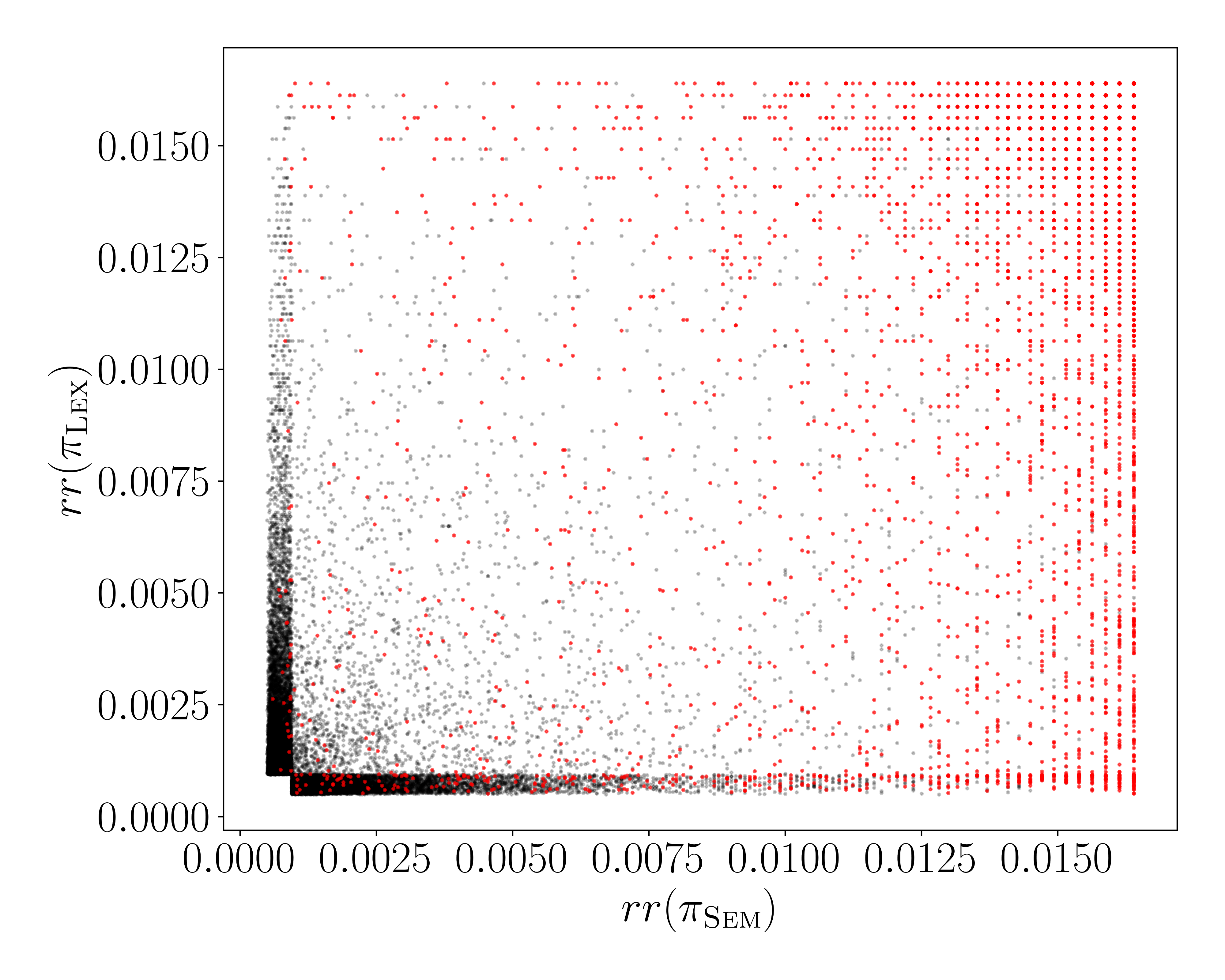}}
\subfloat[\textsc{FiQA}]{
\includegraphics[height=1.6in]{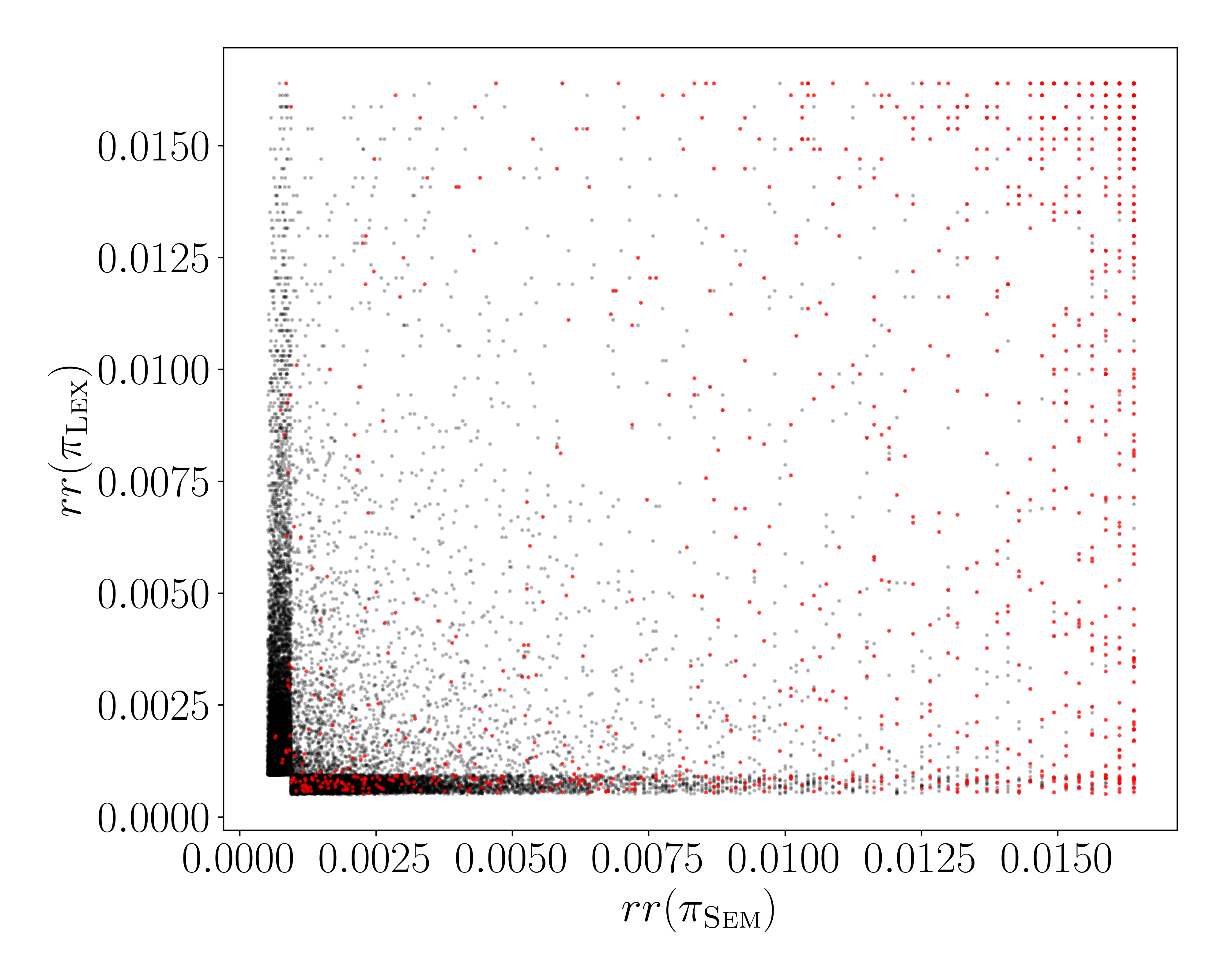}}
}
\centerline{
\subfloat[\textsc{HotpotQA}]{
\includegraphics[height=1.6in]{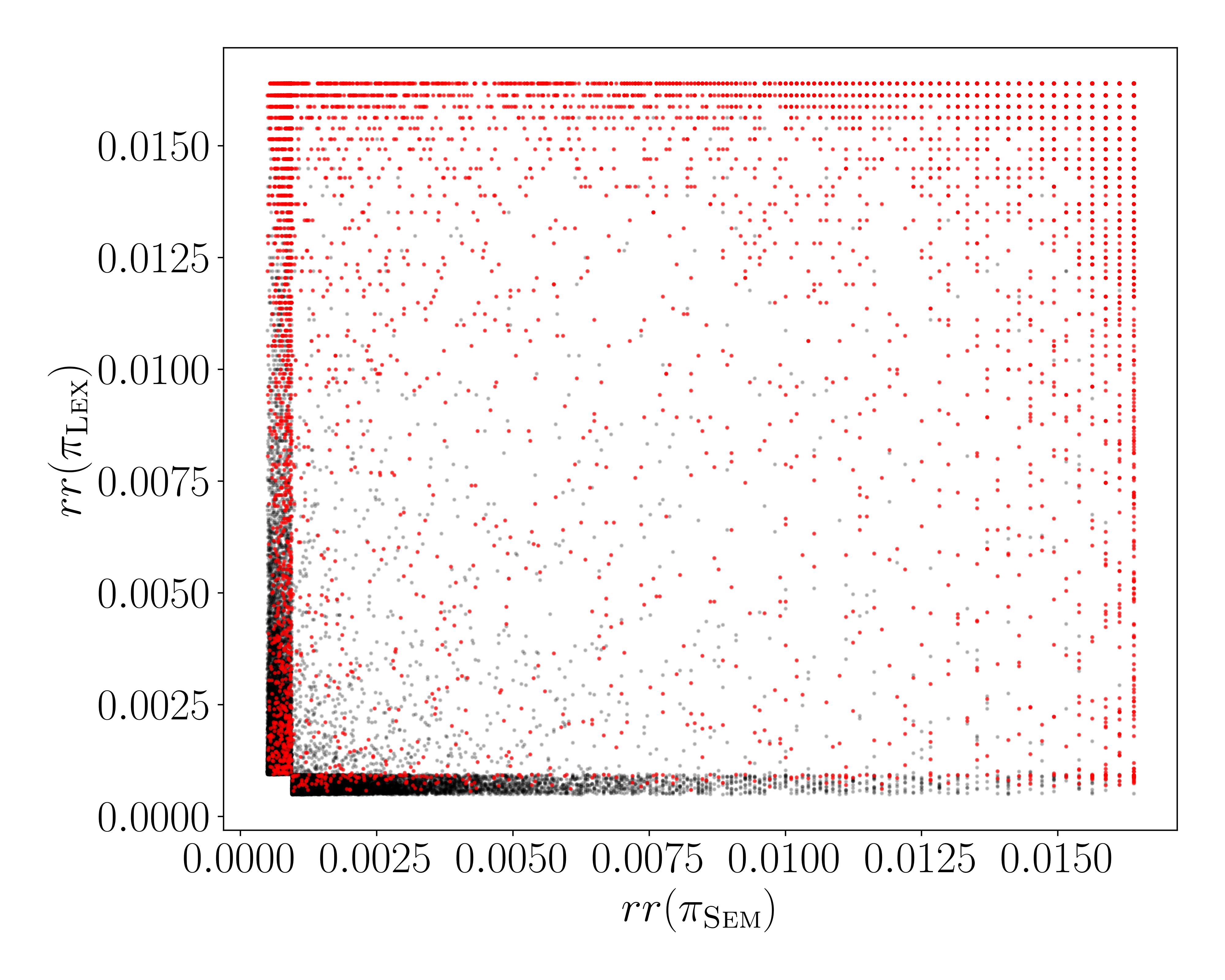}}
\subfloat[\textsc{Fever}]{
\includegraphics[height=1.6in]{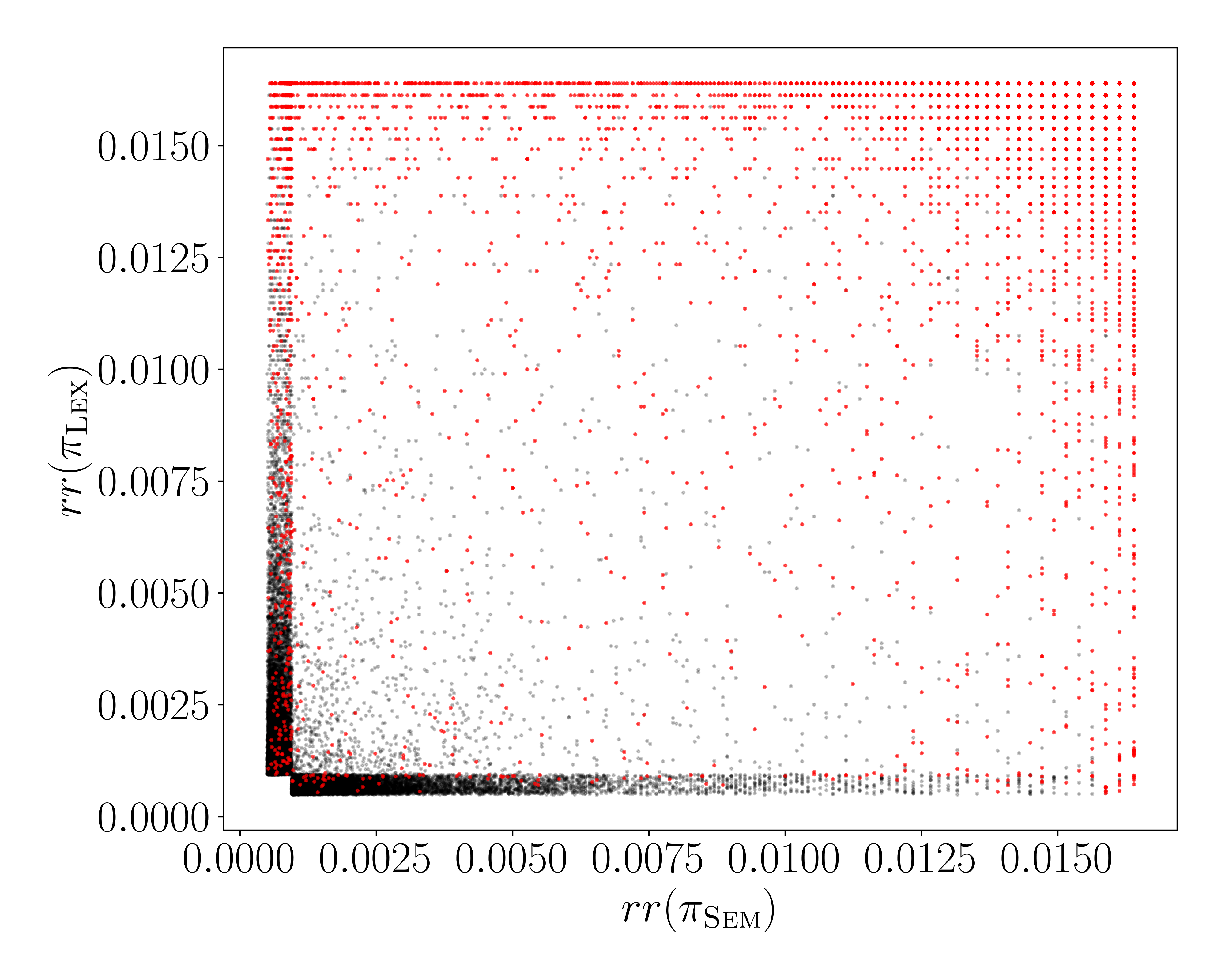}}
}
\caption{Visualization of the reciprocal rank determined by lexical ($rr(\pi_\textsc{Lex}) = 1 / (60 + \pi_\textsc{Lex})$) and semantic ($rr(\pi_\textsc{Sem}) = 1 / (60 + \pi_\textsc{Sem})$) retrieval for query-document pairs sampled from the validation split of each dataset. Shown in red are up to $20{,}000$ positive samples where document is relevant to query, and in black up to the same number of negative samples.}
\label{figure:rrf:suitability}
\end{center}
\end{figure*}
From the figure, 
we can see that samples are pulled towards one of the poles at $(0, 0)$ and $(1/61, 1/61)$. The former attracts a higher concentration of negative samples while the latter positive samples. While this separation is somewhat consistent across datasets, the concentration around poles and axes changes. Indeed, on \textsc{HotpotQA} and \textsc{Fever} there is a higher concentration of positive documents near the top, whereas on \textsc{FiQA} and the in-domain datasets more positive samples end up along the vertical line at $rr(\pi_\textsc{Sem}) = 1/61$, indicating that lexical ranks matter less. This suggests that a simple addition of reciprocal ranks does not behave consistently across domains.

We argued earlier that \ac{rrf} is parametric and that it, in fact, has as many parameters as there are retrieval functions to fuse. To see this more clearly, let us rewrite Equation~(\ref{equation:rrf}) as follows:
\begin{equation}
    f_\textsc{RRF}(q, d)= \frac{1}{\eta_\textsc{Lex} + \pi_\textsc{Lex}(q, d)} + \frac{1}{\eta_\textsc{Sem} + \pi_\textsc{Sem}(q, d)}.
    \label{equation:rrf_parametric}
\end{equation}

We study the effect of parameters on $f_\textsc{RRF}$ by comparing the NDCG obtained from \ac{rrf} with a particular choice of $\eta_\textsc{Lex}$ and $\eta_\textsc{Sem}$ against a realization of \ac{rrf} with $\eta_\textsc{Lex}=\eta_\textsc{Sem}=60$. In this way, we are able to visualize the impact on performance relative to the baseline configuration that is typically used in the literature. This difference in NDCG is rendered as a heatmap in Figure~\ref{figure:parameters} for select datasets---figures for all other datasets show a similar pattern. We note that, we select $\eta_\textsc{Lex}$ and $\eta_\textsc{Sem}$ from the set $\{1, 2, \ldots, 100\}$, but selectively illustrated a subset of values in Figure~\ref{figure:parameters} to make the figures more readable; the omitted combinations of $\eta_\textsc{Lex}$ and $\eta_\textsc{Sem}$ do not bring more insight than what can already be deduced from these figures.

\begin{figure}[t]
\begin{center}
\centerline{
\subfloat[MS MARCO]{
\includegraphics[height=2in]{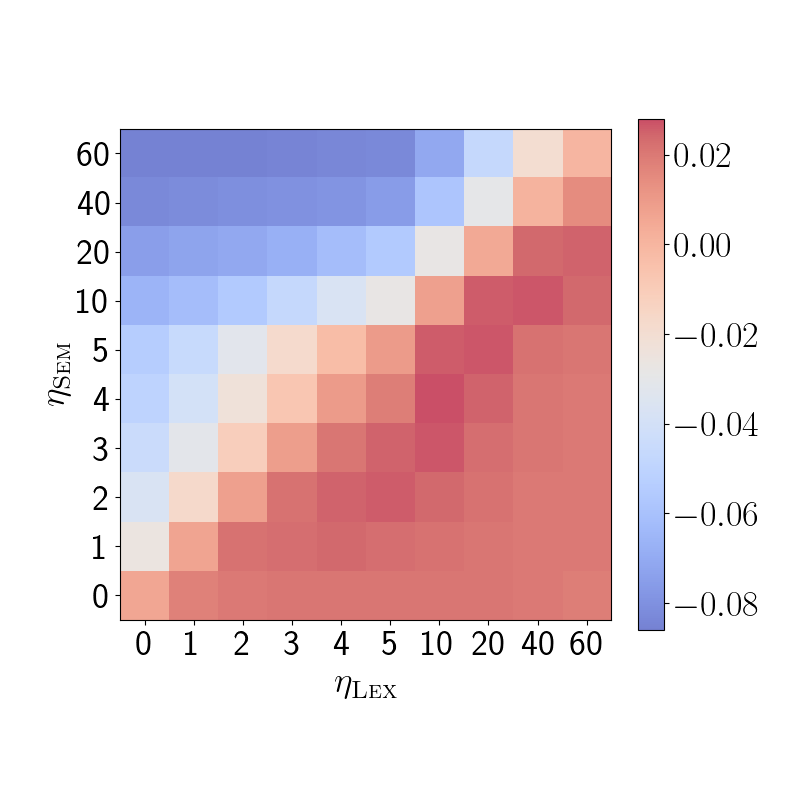}\label{figure:parameters:msmarco}}
\subfloat[\textsc{HotpotQA}]{
\includegraphics[height=2in]{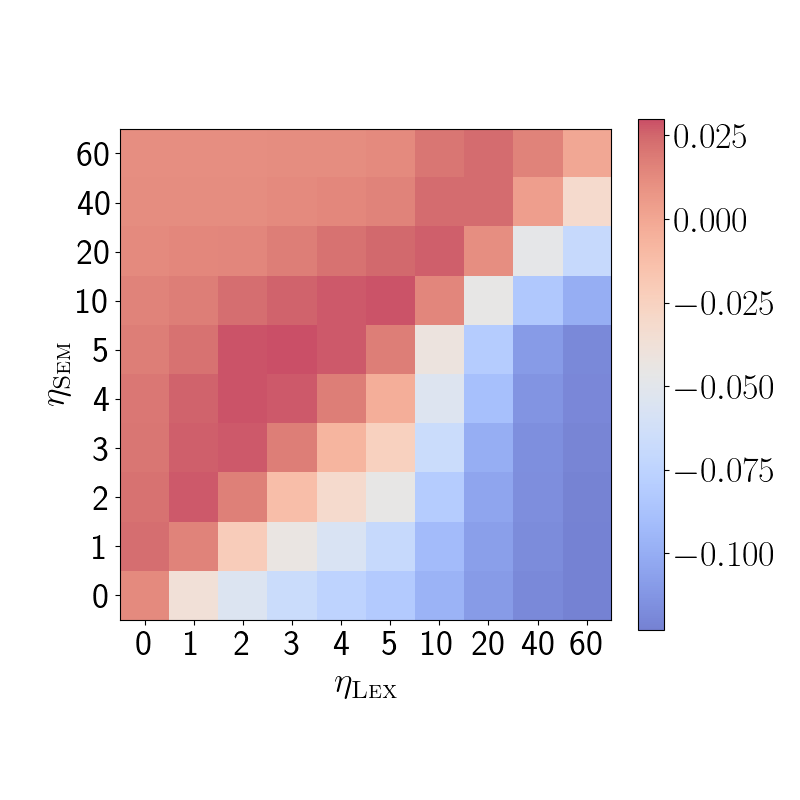}\label{figure:parameters:hotpotqa}}
}
\caption{Difference in NDCG@1000 of $f_\textsc{RRF}$ with distinct values $\eta_\textsc{Lex}$ and $\eta_\textsc{Sem}$, and $f_\textsc{RRF}$ with $\eta_\textsc{Lex}=\eta_\textsc{Sem}=60$ (positive indicates better ranking quality by the former). On MS MARCO, an in-domain dataset, NDCG improves when $\eta_\textsc{Lex} > \eta_\textsc{Sem}$, while the opposite effect can be seen for \textsc{HotpotQA}, an out-of-domain dataset.}
\label{figure:parameters}
\end{center}
\end{figure}

\begin{figure}[t]
\begin{center}
\centerline{
\subfloat[$\eta_\textsc{Lex}=60$, $\eta_\textsc{Sem}=60$]{
\includegraphics[height=1.4in]{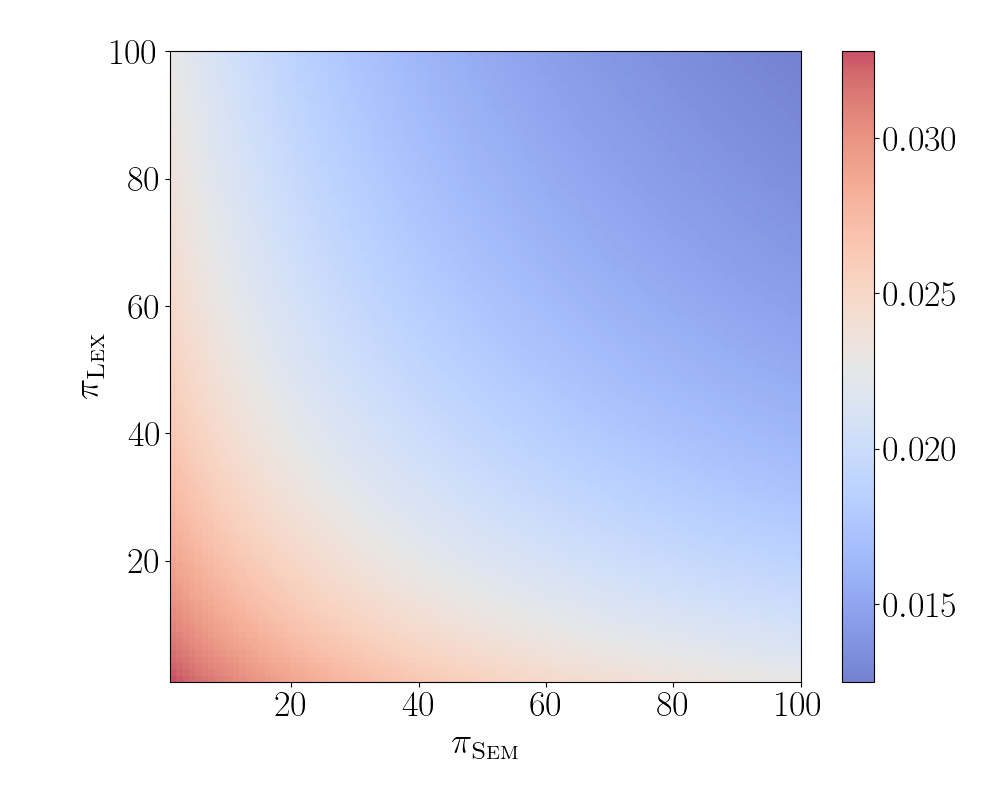}}
\subfloat[$\eta_\textsc{Lex}=10$, $\eta_\textsc{Sem}=4$]{
\includegraphics[height=1.4in]{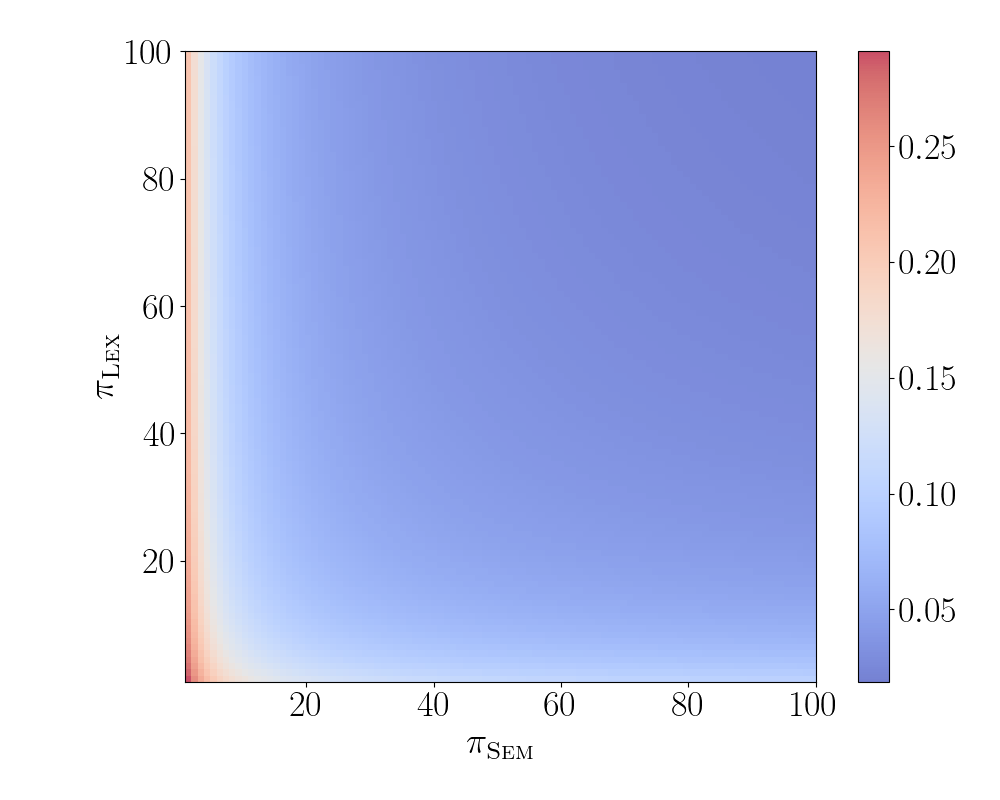}}
\subfloat[\textsc{$\eta_\textsc{Lex}=3$, $\eta_\textsc{Sem}=5$}]{
\includegraphics[height=1.4in]{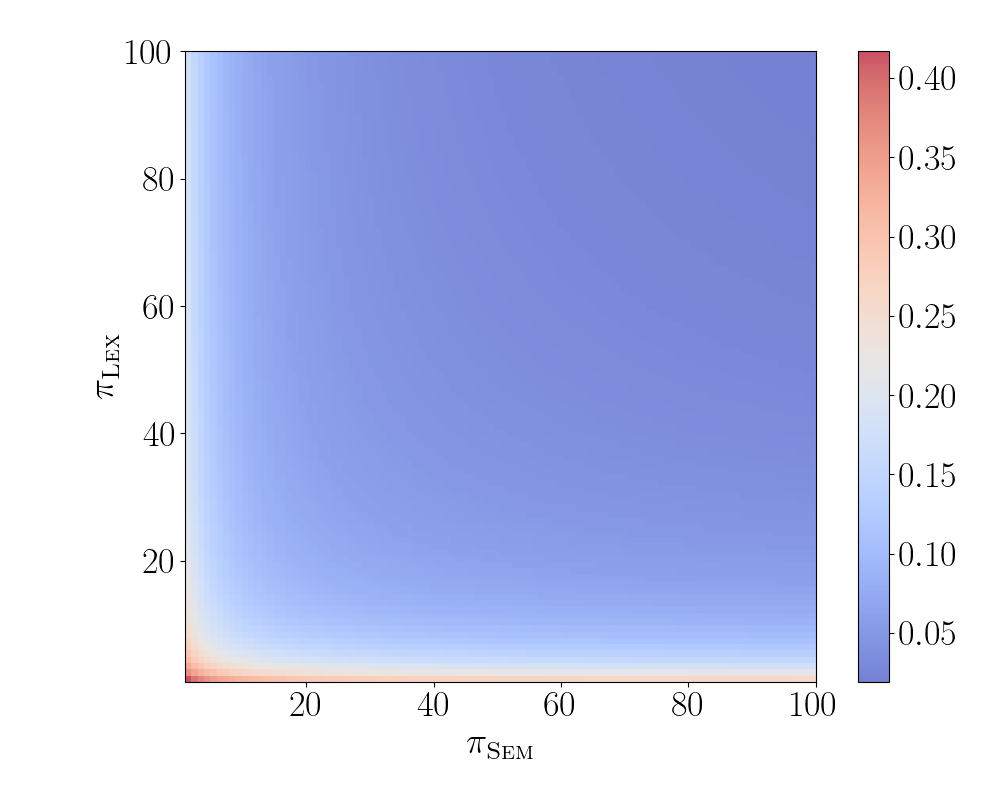}}
}
\caption{Effect of $f_\textsc{RRF}$ with select configurations of $\eta_\textsc{Lex}$ and $\eta_\textsc{Sem}$ on pairs of ranks from lexical and semantic systems. When $\eta_\textsc{Lex} > \eta_\textsc{Sem}$, the fusion function discounts the lexical system's contribution.}
\label{figure:rrf:reference}
\end{center}
\end{figure}

As a general observation, we note that NDCG swings wildly as a function of \ac{rrf} parameters. Crucially, performance improves off-diagonal, where the parameter takes on different values for the semantic and lexical components. On MS MARCO, shown in Figure~\subref*{figure:parameters:msmarco}, NDCG improves when $\eta_\textsc{Lex} > \eta_\textsc{Sem}$, while the opposite effect can be seen for \textsc{HotpotQA}, an out-of-domain dataset. This can be easily explained by the fact that increasing $\eta_\textsc{o}$ for retrieval system $\textsc{o}$ effectively discounts the contribution of ranks from $\textsc{o}$ to the final hybrid score. On in-domain datasets where the semantic model already performs strongly, for example, discounting the lexical system by increasing $\eta_\textsc{Lex}$ leads to better performance.

\begin{table*}[t]
\caption{Mean NDCG@1000 (NDCG@100 for \textsc{SciFact} and \textsc{NFCorpus}) on the test split of various datasets for hybrid retrieval using TM2C2 ($\alpha=0.8$) and \ac{rrf} ($\eta_\textsc{Lex}$, $\eta_\textsc{Sem}$). The symbols $\ddagger$ and $\ast$ indicate statistical significance ($p$-value $< 0.01$) with respect to TM2C2 and baseline RRF ($60, 60$) respectively, according to a paired two-tailed $t$-test.}
\label{table:tuned_rrf}
\begin{center}
\begin{sc}
\begin{tabular}{cc|cc|cc}
& & \multicolumn{4}{c}{NDCG} \\
\toprule
& Dataset & TM2C2 & \ac{rrf} $(60, 60)$ & \ac{rrf} $(5, 5)$ & \ac{rrf} $(10, 4)$\\
\midrule
\parbox[t]{2mm}{\multirow{2}{*}{\rotatebox[origin=c]{90}{in-domain}}}
& MS MARCO & \textbf{0.454} & $0.425^{\ddagger}$ & $0.435^{\ddagger\ast}$ & $0.451^{\ast}$ \\
& NQ & \textbf{0.542} & $0.514^{\ddagger}$ & $0.521^{\ddagger\ast}$ & $0.528^{\ddagger\ast}$ \\
& Quora & \textbf{0.901} & $0.877^{\ddagger}$ & $0.885^{\ddagger\ast}$ &  $0.896^{\ast}$ \\
& & & &  \\
\midrule
\parbox[t]{2mm}{\multirow{6}{*}{\rotatebox[origin=c]{90}{zero-shot}}}
& NFCorpus & \textbf{0.327} & $0.312^{\ddagger}$ & $0.318^{\ddagger\ast}$ & $0.310^{\ddagger}$ \\
& HotpotQA & \textbf{0.699} & $0.675^{\ddagger}$ & $0.693^{\ast}$ & $0.621^{\ddagger\ast}$ \\
& FEVER & \textbf{0.744} & $0.721^{\ddagger}$ & $0.727^{\ddagger\ast}$ & $0.649^{\ddagger\ast}$ \\
& SciFact & \textbf{0.753} & $0.730^{\ddagger}$ & $0.738^{\ddagger}$ & $0.715^{\ddagger\ast}$\\
& DBPedia & \textbf{0.512} & $0.489^{\ddagger}$ & $0.489^{\ddagger}$ & $0.480^{\ddagger\ast}$\\
& FiQA & \textbf{0.496} & $0.464^{\ddagger}$ & $0.470^{\ddagger\ast}$ & $0.482^{\ddagger\ast}$ \\
\bottomrule
\end{tabular}
\end{sc}
\end{center}
\end{table*}

Having observed that tuning \ac{rrf} potentially leads to gains in NDCG, we ask if tuned parameters generalize on out-of-domain datasets. To investigate that question, we tune \ac{rrf} on in-domain datasets and pick the value of parameters that maximize NDCG on the validation split of in-domain datasets, and measure the performance of the resulting function on the test split of all (in-domain and out-of-domain) datasets. We present the results in Table~\ref{table:tuned_rrf}. While tuning a parametric RRF does indeed lead to gains in NDCG on in-domain datasets, the tuned function does not generalize well to out-of-domain datasets.

The poor generalization can be explained by the reversal of patterns observed in Figure~\ref{figure:parameters} where $\eta_\textsc{Lex} > \eta_\textsc{Sem}$ suits in-domain datasets better but the opposite is true for out-of-domain datasets. By modifying $\eta_\textsc{Lex}$ and $\eta_\textsc{Sem}$ we modify the fusion of ranks and boost certain regions and discount others in an imbalanced manner. Figure~\ref{figure:rrf:reference} visualizes this effect on $f_\textsc{RRF}$ for particular values of its parameters. This addresses RQ2.

\subsection{Effect of Lipschitz Continuity}
In the previous section, we stated an intuition that because \ac{rrf} does not preserve the distribution of raw scores, it loses valuable information in the process of fusing retrieval systems. In our final research question, RQ3, we investigate if this indeed matters in practice.

The notion of ``preserving'' information is well captured by the concept of Lipschitz continuity.\footnote{A function $f$ is Lipschitz continous with constant $L$ if $|| f(y) - f(x) ||_o \leq L ||y - x||_i$ for some norm $||\cdot||_o$ and $||\cdot||_i$ on the output and input space of $f$.} When a function is Lipschitz continuous with a small Lipschitz constant, it does not oscillate wildly with a small change to its input. Intuitively, and in the context of this work, this means that a small change to the individual scores does not lead to a sudden and exaggerated effect on the fused score; the fusion function does not dramatically distort the distribution of scores or ranks, an effect we characterize informally as ``preserving information.'' \ac{rrf} does not have this property because the moment one lexical (or semantic) score becomes larger than another the function makes a hard transition to a new value.

We can therefore cast RQ3 as a question of whether Lipschitz continuity is an important property in practice. To put that hypothesis to the test, we design a smooth approximation of \ac{rrf} using known techniques~\cite{bruch2019approxndcg,qin2010general}.

As expressed in Equation~(\ref{equation:rank}), the rank of a document is simply the sum of indicators. It is thus trivial to approximate this quantity using a generalized sigmoid with parameter $\beta$: $\sigma_\beta(x) = 1 / (1 + \exp(-\beta x))$. As $\beta$ approaches $1$, the sigmoid takes its usual \emph{S} shape, while $\beta \rightarrow \infty$ produces a very close approximation of the indicator. Interestingly, the Lipschitz constant of $\sigma_\beta(\cdot)$ is, in fact, $\beta$. As $\beta$ increases, the approximation of ranks becomes more accurate, but the Lipschitz constant becomes larger. When $\beta$ is too small, however, the approximation breaks down but the function transitions more slowly, thereby preserving much of the characteristics of the underlying data distribution.

\ac{rrf} being a function of ranks can now be approximated by plugging in approximate ranks in Equation~(\ref{equation:rrf}), resulting in SRRF:
\begin{equation}
    f_\textsc{SRRF}(q, d)= \frac{1}{\eta + \tilde{\pi}_\textsc{Lex}(q, d)} + \frac{1}{\eta + \tilde{\pi}_\textsc{Sem}(q, d)},
    \label{equation:srrf}
\end{equation}
where $\tilde{\pi}_\textsc{o}(q, d_i) = 0.5 + \sum_{d_j \in R^{k}_\textsc{o}(q)} \sigma_\beta(f_\textsc{o}(q, d_j) - f_\textsc{o}(q, d_i))$.
By increasing $\beta$ we increase the Lipschitz constant of $f_\textsc{SRRF}$. This is the lever we need to test the idea that Lipschitz continuity matters and that functions that do not distort the distributional properties of raw scores lead to better ranking quality.

\begin{figure}[t]
\begin{center}
\centerline{
\subfloat[in-domain]{
\includegraphics[height=2in]{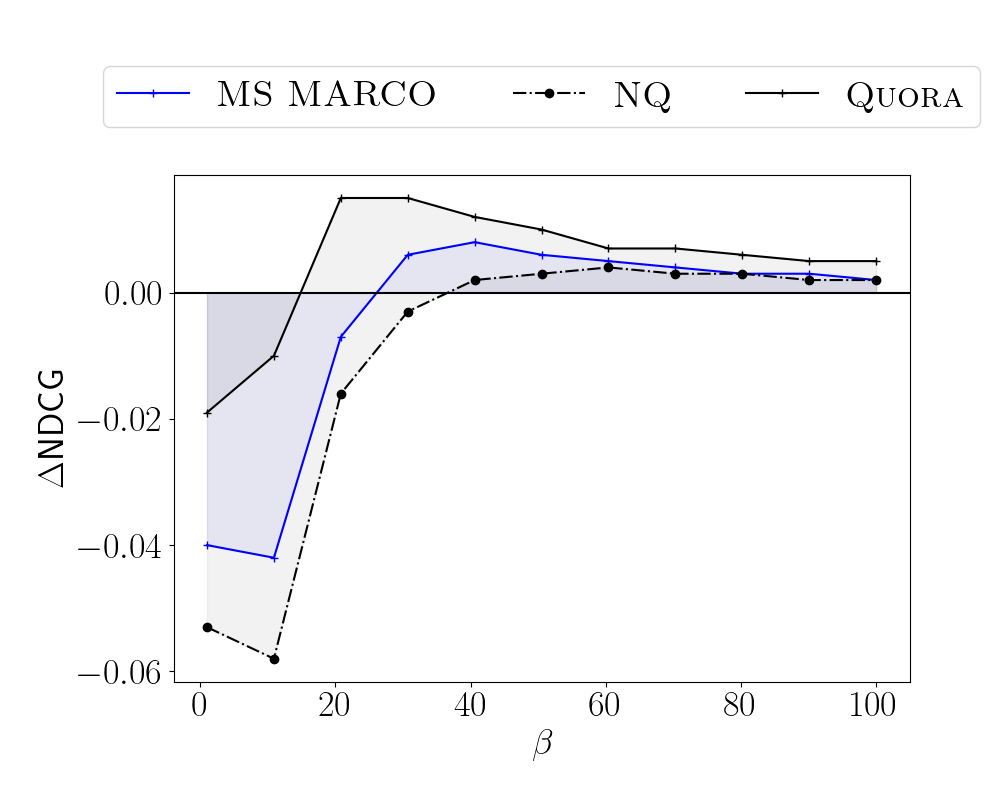}\label{figure:lipschitz:in_domain}}
\subfloat[out-of-domain]{
\includegraphics[height=2in]{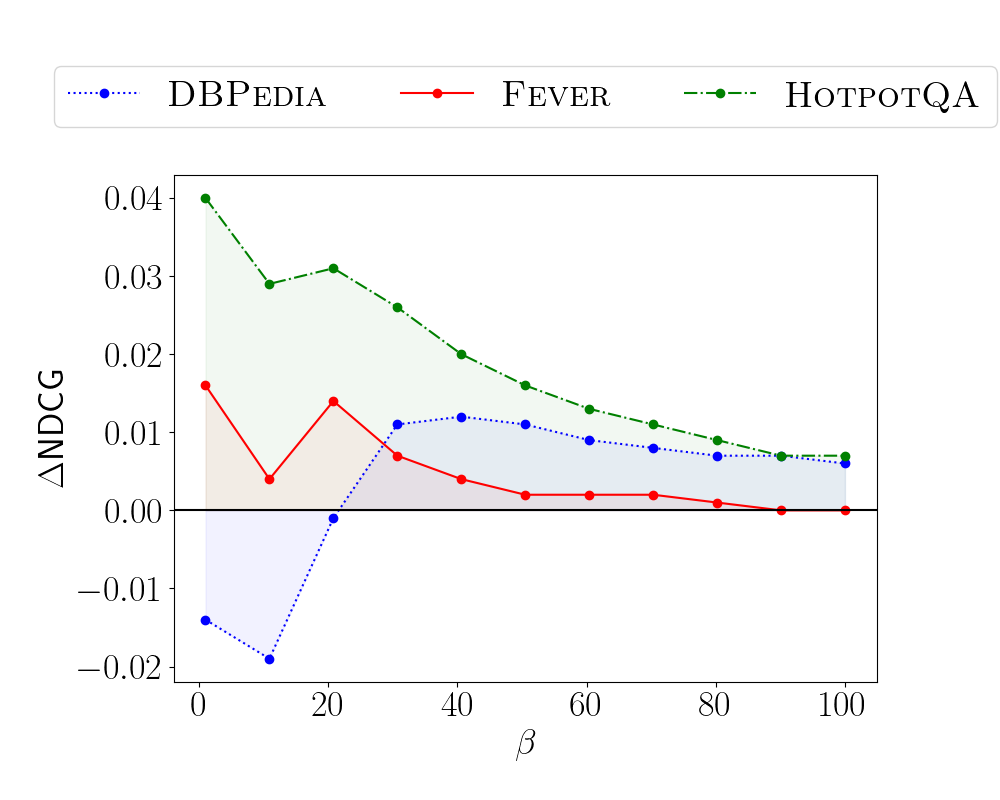}\label{figure:lipschitz:out_of_domain}}
}
\caption{The difference in NDCG@1000 of $f_\textsc{SRRF}$ and $f_\textsc{RRF}$ with $\eta=60$ (positive indicates better ranking quality by \textsc{SRRF}) as a function of $\beta$.}
\label{figure:lipschitz}
\end{center}
\end{figure}

\begin{figure}[t]
\begin{center}
\centerline{
\subfloat[in-domain]{
\includegraphics[height=2in]{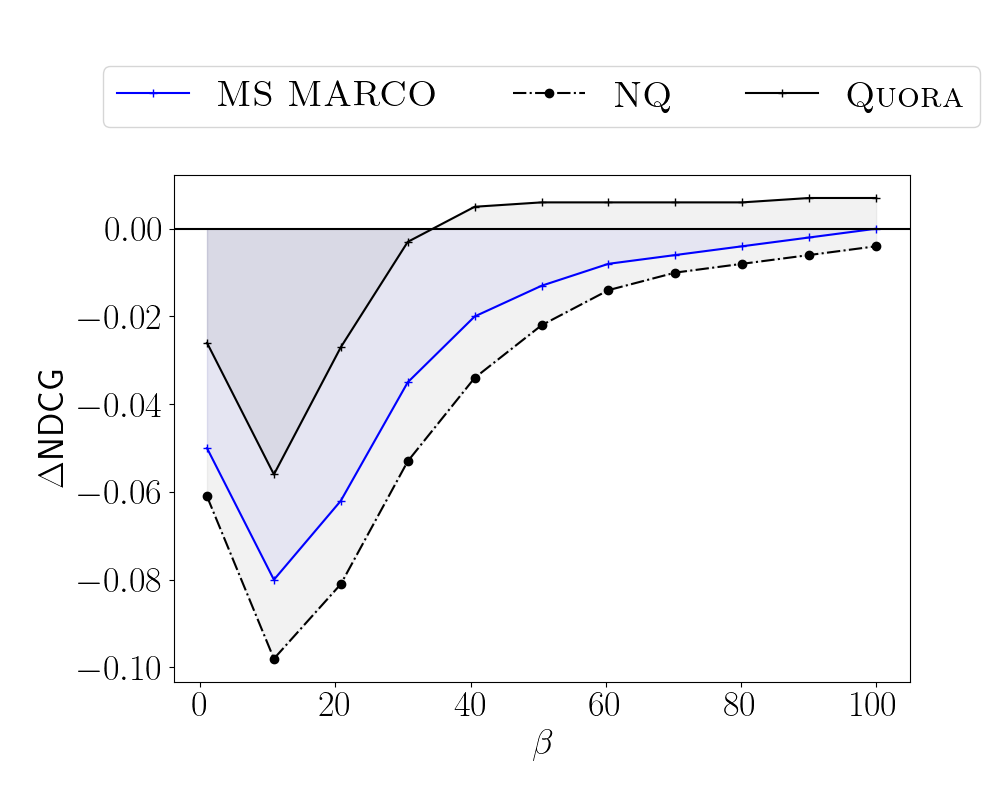}\label{figure:lipschitz:in_domain}}
\subfloat[out-of-domain]{
\includegraphics[height=2in]{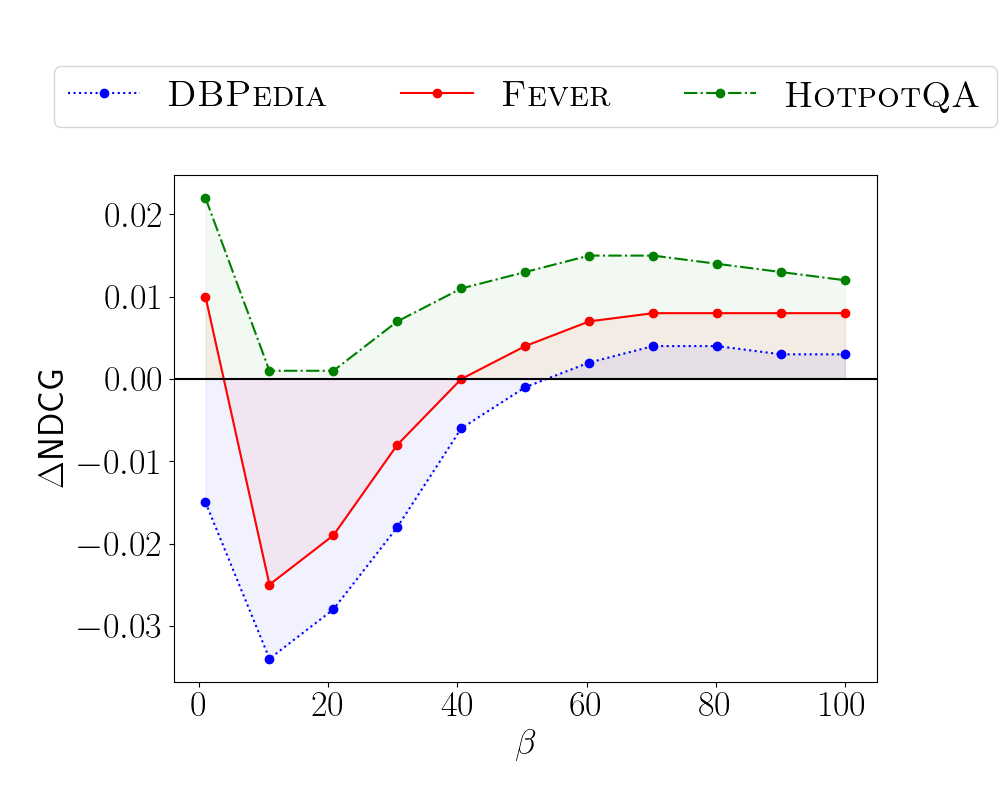}\label{figure:lipschitz:out_of_domain}}
}
\caption{The difference in NDCG@1000 of $f_\textsc{SRRF}$ and $f_\textsc{RRF}$ with $\eta=5$ (positive indicates better ranking quality by \textsc{SRRF}) as a function of $\beta$.}
\label{figure:lipschitz:eta5}
\end{center}
\end{figure}

Figures~\ref{figure:lipschitz} and~\ref{figure:lipschitz:eta5} visualize the difference between SRRF and \ac{rrf} for two settings of $\eta$ selected based on the results in Table~\ref{table:tuned_rrf}. As anticipated, when $\beta$ is too small, the approximation error is large and ranking quality degrades. As $\beta$ becomes larger, ranking quality trends in the direction of \ac{rrf}. Interestingly, as $\beta$ becomes gradually smaller, the performance of SRRF improves over the \ac{rrf} baseline. This effect is more pronounced for the $\eta=60$ setting of \ac{rrf}, as well as on the out-of-domain datasets. Empirical evidence reported for other fusions in Appendices~\ref{appendix:splade-bm25} through~\ref{appendix:tasb-minilm} shows this finding to be robust.

While we acknowledge the possibility that the approximation in Equation~(\ref{equation:srrf}) may cause a change in ranking quality, we expected that change to be a degradation, not an improvement. However, given we do observe gains by smoothing the function, and that the only other difference between SRRF and \ac{rrf} is their Lipschitz constant, we believe these results highlight the role of Lipschitz continuity in ranking quality. For completeness, we have also included a comparison of SRRF, \ac{rrf}, and TM2C2 in Table~\ref{table:rq3}.

\begin{table*}[t]
\caption{Mean NDCG@1000 (NDCG@100 for \textsc{SciFact} and \textsc{NFCorpus}) on the test split of various datasets for hybrid retrieval using TM2C2 ($\alpha=0.8$), \ac{rrf} ($\eta$), and SRRF($\eta$, $\beta$). The parameters $\beta$ are fixed to values that maximize NDCG on the validation split of in-domain datasets. The symbols $\ddagger$ and $\ast$ indicate statistical significance ($p$-value $< 0.01$) with respect to TM2C2 and RRF respectively, according to a paired two-tailed $t$-test.}
\label{table:rq3}
\begin{center}
\begin{sc}
\begin{tabular}{cc|c|cc|cc}
& & \multicolumn{3}{c}{NDCG} \\
\toprule
& Dataset & TM2C2 & \ac{rrf}($60$) & SRRF ($60$, $40$) & \ac{rrf}($5$) & SRRF ($5$, $100$) \\
\midrule
\parbox[t]{2mm}{\multirow{2}{*}{\rotatebox[origin=c]{90}{in-domain}}}
& MS MARCO & \textbf{0.454} & $0.425^{\ddagger}$ & $0.431^{\ddagger\ast}$ & $0.435^{\ddagger}$ & $0.431^{\ddagger\ast}$ \\
& NQ & \textbf{0.542} & $0.514^{\ddagger}$ & $0.516^{\ddagger}$ & $0.521^{\ddagger}$ & $0.517^{\ddagger}$ \\
& Quora & \textbf{0.901} & $0.877^{\ddagger}$ & $0.889^{\ddagger\ast}$ & $0.885^{\ddagger}$ & $0.889^{\ddagger\ast}$ \\
& & & &  \\
\midrule
\parbox[t]{2mm}{\multirow{6}{*}{\rotatebox[origin=c]{90}{zero-shot}}}
& NFCorpus & \textbf{0.327} & $0.312^{\ddagger}$ & $0.323^{\ddagger\ast}$ & $0.318^{\ddagger}$ & $0.322^{\ddagger}$\\
& HotpotQA & 0.699 & $0.675^{\ddagger}$ & $0.695^{\ast}$ & $0.693^{\ddagger}$ & $\textbf{0.705}^{\ddagger\ast}$ \\
& FEVER & \textbf{0.744} & $0.721^{\ddagger}$ & $0.725^{\ddagger}$ & $0.727^{\ddagger}$ & $0.735^{\ddagger\ast}$ \\
& SciFact & \textbf{0.753} & $0.730^{\ddagger}$ & $0.740^{\ddagger}$ & $0.738^{\ddagger}$ & $0.740^{\ddagger}$ \\
& DBPedia & \textbf{0.512} & $0.489^{\ddagger}$ & $0.501^{\ddagger\ast}$ & $0.489^{\ddagger}$ & $0.492^{\ddagger}$ \\
& FiQA & \textbf{0.496} & $0.464^{\ddagger}$ & $0.468^{\ddagger}$ & $0.470^{\ddagger}$ & $0.469^{\ddagger}$ \\
\bottomrule
\end{tabular}
\end{sc}
\end{center}
\end{table*}

\section{Discussion}
\label{section:discussion}

The analysis in this work motivates us to identify and document the properties of a well-behaved fusion function, and present the principles that, we hope, will guide future research in this space. These desiderata are stated below.

\textbf{Monotonicity}: When $f_\textsc{o}$ is positively correlated with a target ranking metric (i.e., ordering documents in decreasing order of $f_\textsc{o}$ must lead to higher quality), then it is natural to require that $f_\textsc{Hybrid}$ be monotone increasing in its arguments. We have already seen and indeed used this property in our analysis of the convex combination fusion function. It is trivial to show why this property is crucial.

\textbf{Homogeneity}: The order induced by a fusion function must be unaffected by a positive re-scaling of query and document vectors. That is: $f_\textsc{Hybrid}(q, d) \overset{\pi}{=} f_\textsc{Hybrid}(q, \gamma d) \overset{\pi}{=} f_\textsc{Hybrid}(\gamma q,d)$ where $\overset{\pi}{=}$ denotes rank-equivalence and $\gamma > 0$. This property prevents any retrieval system from inflating its contribution to the final hybrid score by simply boosting its document or query vectors.

\textbf{Boundedness}: Recall that, a convex combination without score normalization is often ineffective and inconsistent because BM25 is unbounded and that lexical and semantic scores are on different scales. To see this effect we turn to Figure~\ref{figure:boundedness}.

We observe in Figure~\subref*{figure:boundedness:in_domain} that, for in-domain datasets, adding the unnormalized lexical scores using a convex combination leads to a severe \emph{degradation} of ranking quality. We believe this is because of the fact that the semantic retrieval model, which is fine-tuned on these datasets, already produces ranked lists of high quality, and that adding the lexical scores which are on a very different scale distorts the rankings and leads to poor performance. In out-of-domain experiments as shown in Figure~\subref*{figure:boundedness:out_of_domain}, however, the addition of lexical scores leads to often significant gains in quality. We believe this can be explained exactly as the in-domain observations: the semantic model generally does poorly on out-of-domain datasets while the lexical retriever does well. But because the semantic scores are bounded and relatively small, they do not significantly distort the rankings produced by the lexical retriever.

\begin{figure}[t]
\begin{center}
\centerline{
\subfloat[in-domain]{
\includegraphics[height=2in]{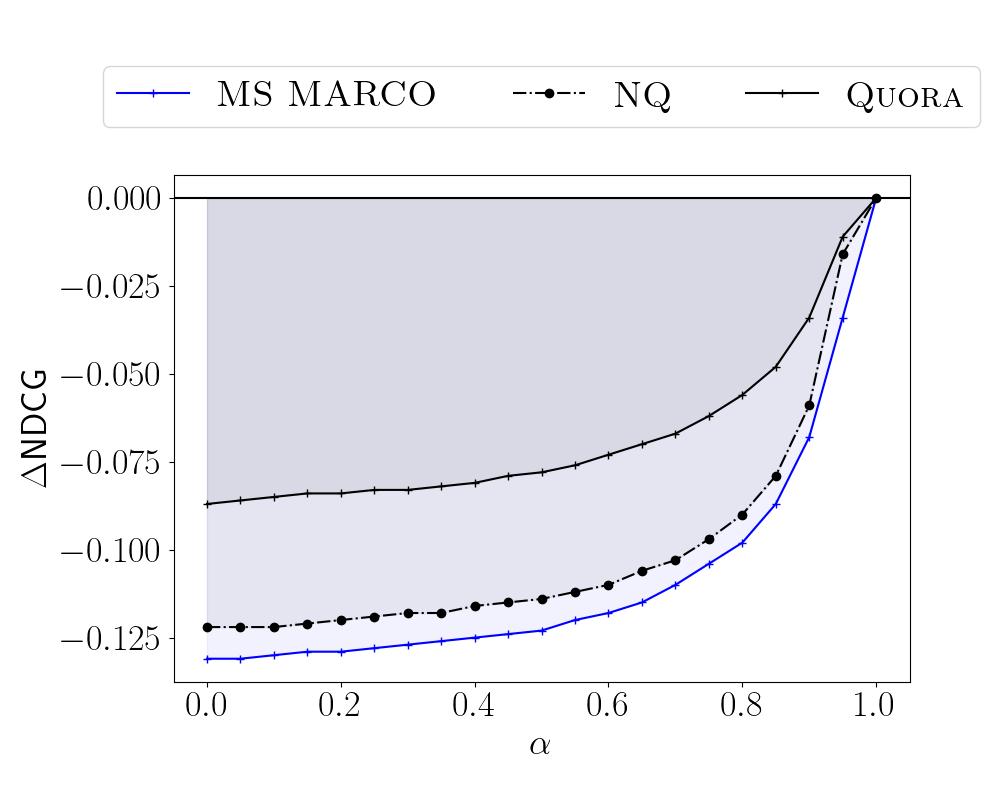}\label{figure:boundedness:in_domain}}
\subfloat[out-of-domain]{
\includegraphics[height=2in]{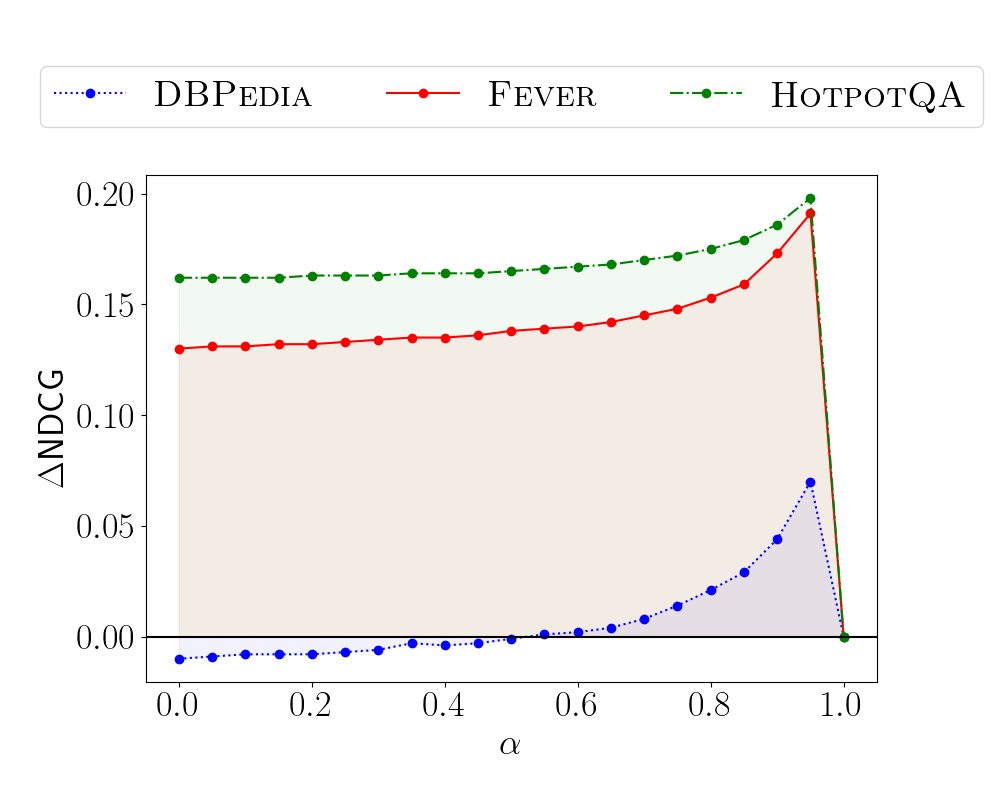}\label{figure:boundedness:out_of_domain}}
}
\caption{The difference in NDCG of convex combination of unnormalized scores and a pure semantic search (positive indicates better ranking quality by a convex combination) as a function of $\alpha$.}
\label{figure:boundedness}
\end{center}
\end{figure}

To avoid that pitfall, we require that $f_\textsc{Hybrid}$ be bounded: $|f_\textsc{Hybrid}| \leq M$ for some $M > 0$. As we have seen before, normalizing the raw scores addresses this issue.

\textbf{Lipschitz Continuity}: We argued that because \ac{rrf} does not take into consideration the raw scores, it distorts their distribution and thereby loses valuable information. On the other hand, TM2C2 (or any convex combination of scores) is a smooth function of scores and preserves much of the characteristics of its underlying distribution. We formalized this idea using the notion of Lipschitz continuity: A larger Lipschitz constant leads to a larger distortion of retrieval score distribution.

\begin{figure}[!h]
\begin{center}
\centerline{
\subfloat[MS MARCO]{
\includegraphics[width=0.45\linewidth]{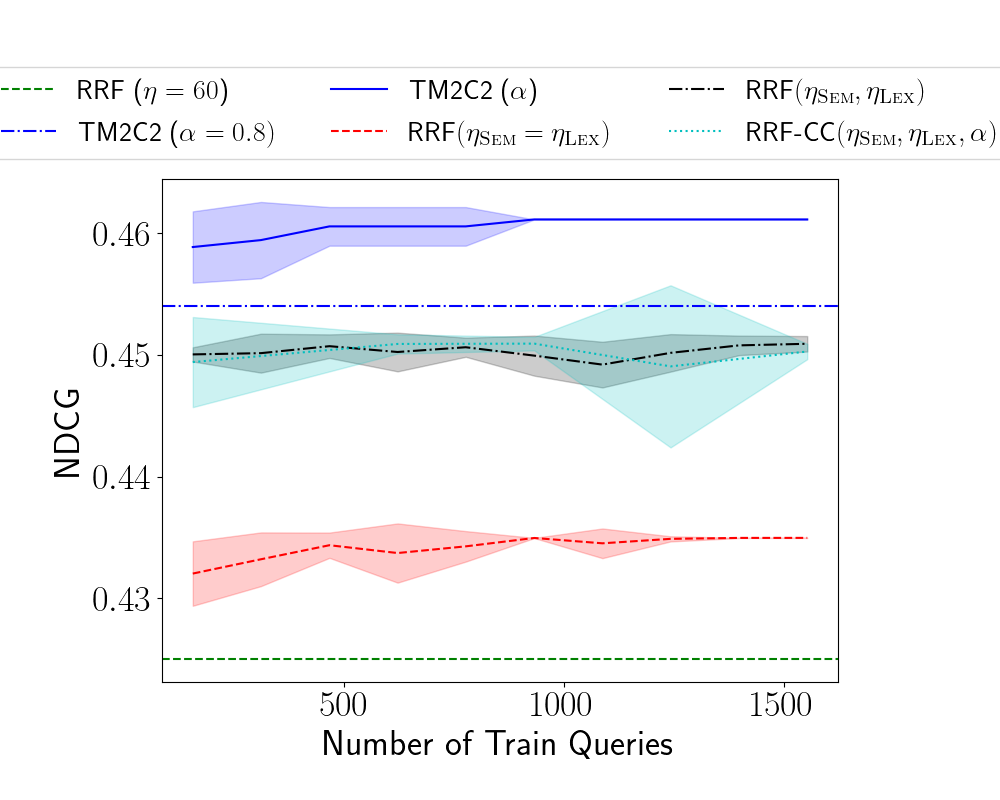}}
\subfloat[\textsc{Quora}]{
\includegraphics[width=0.45\linewidth]{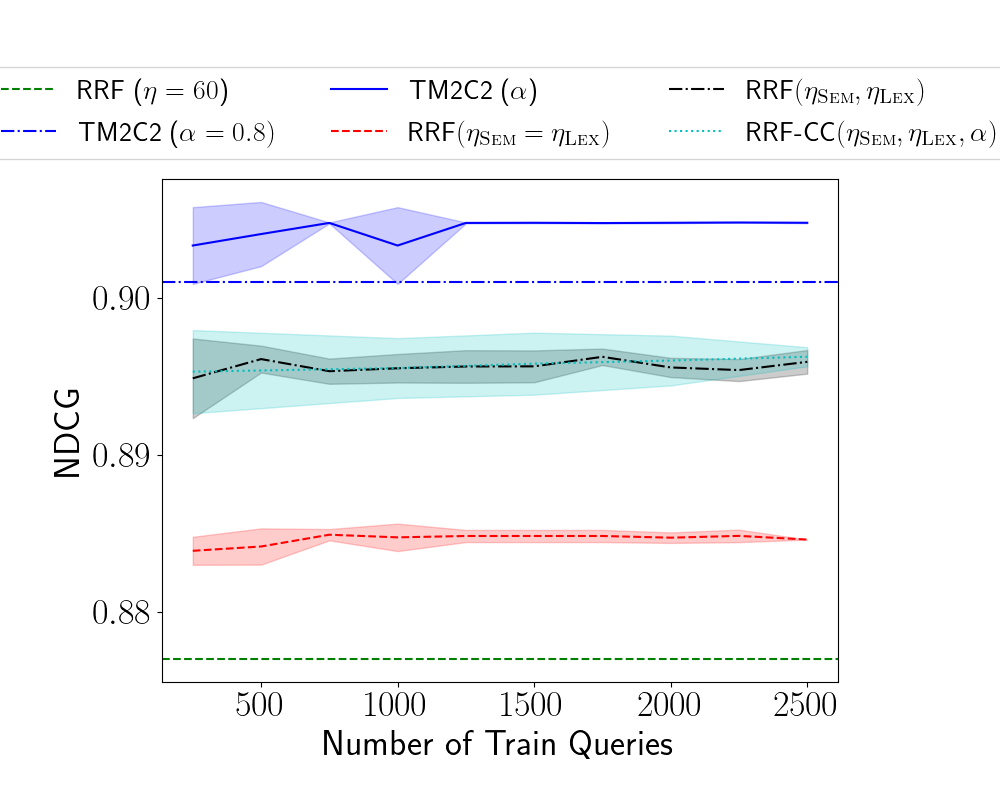}}
}
\centerline{
\subfloat[\textsc{HotpotQA}]{
\includegraphics[trim={0 0 0 4.2cm},clip,width=0.45\linewidth]{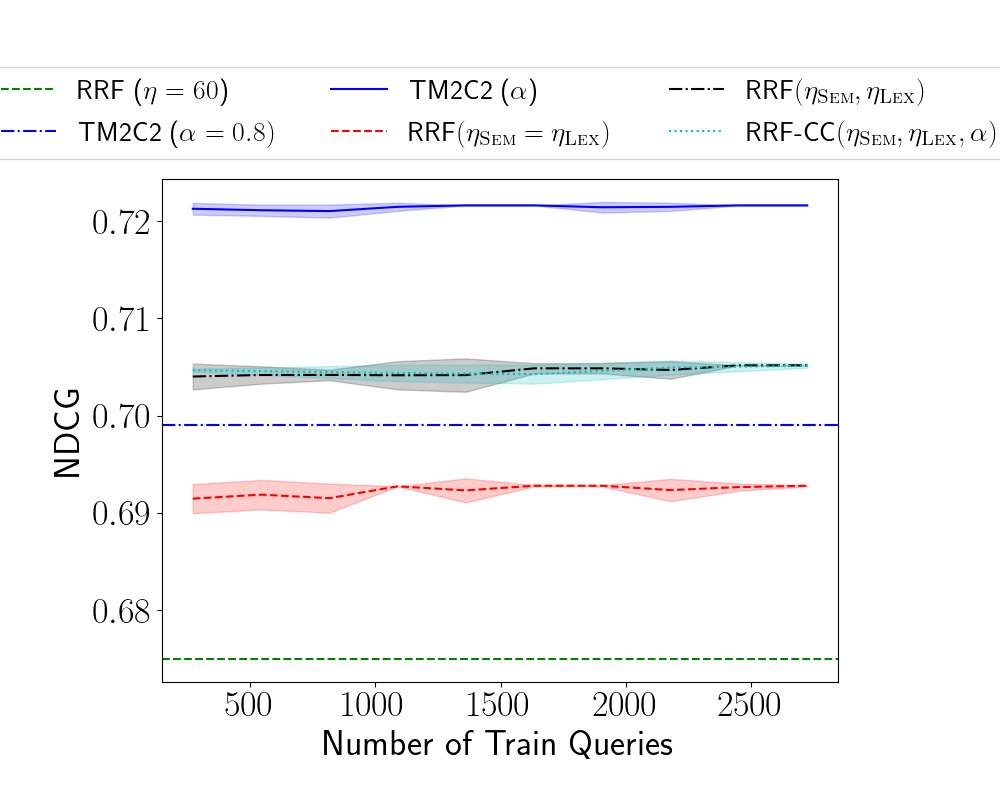}}
\subfloat[\textsc{Fever}]{
\includegraphics[trim={0 0 0 4.2cm},clip,width=0.45\linewidth]{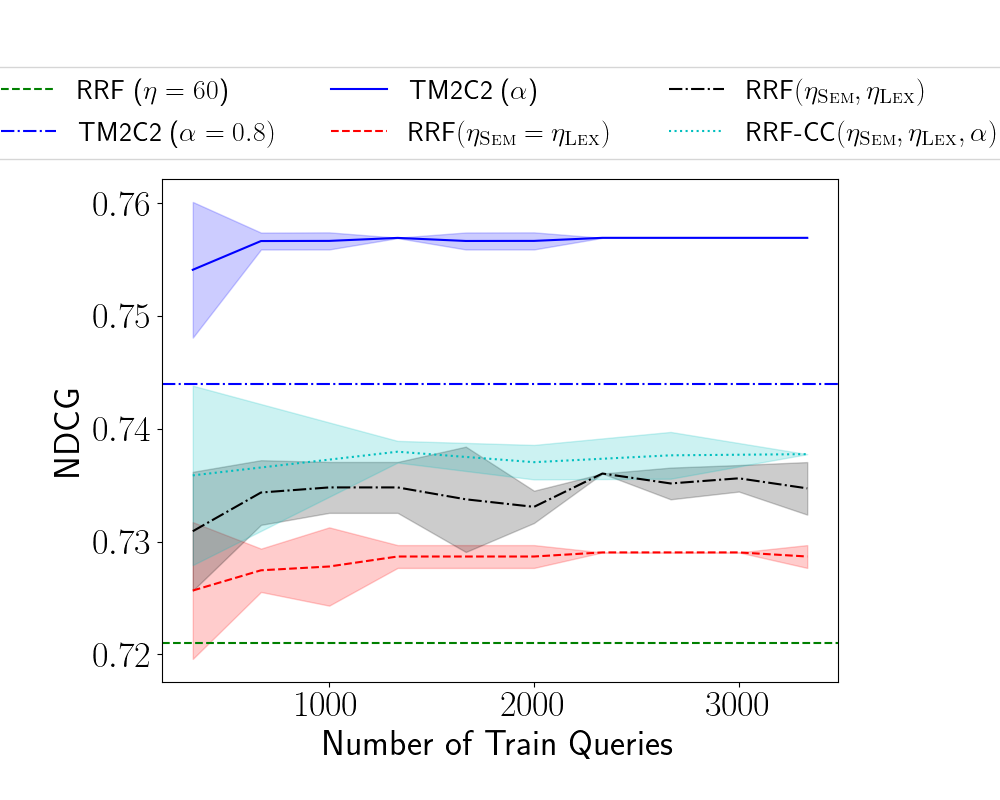}}
}
\centerline{
\subfloat[\textsc{NFCorpus}]{
\includegraphics[trim={0 0 0 4.2cm},clip,width=0.45\linewidth]{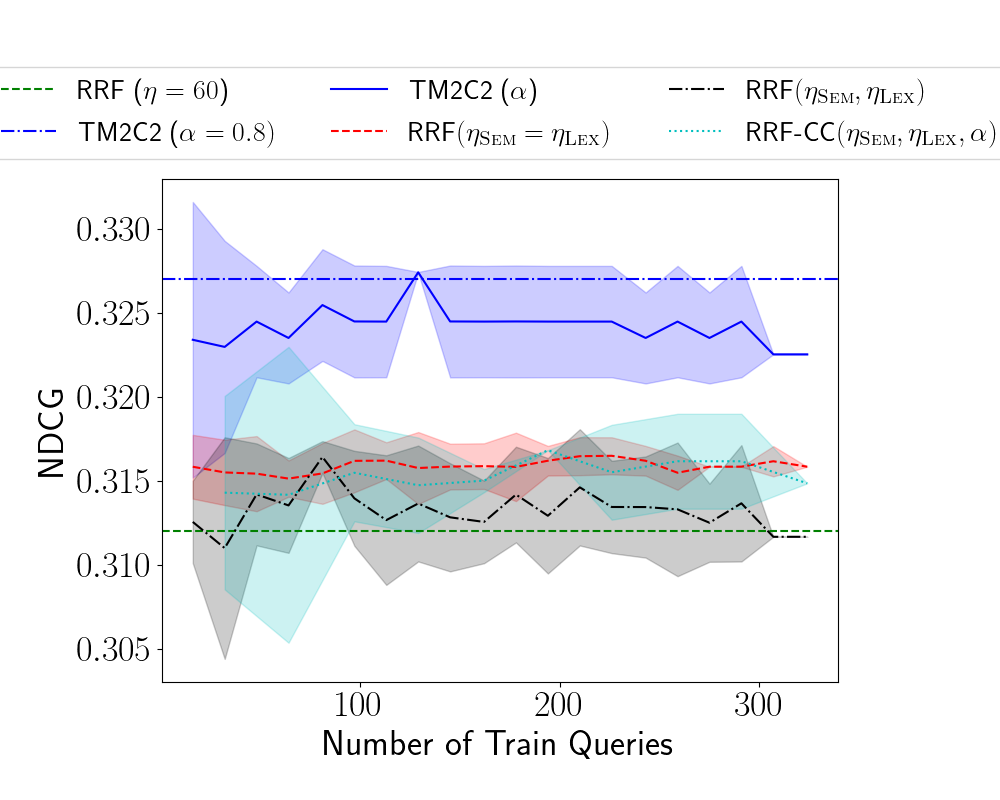}}
\subfloat[\textsc{FiQA}]{
\includegraphics[trim={0 0 0 4.2cm},clip,width=0.45\linewidth]{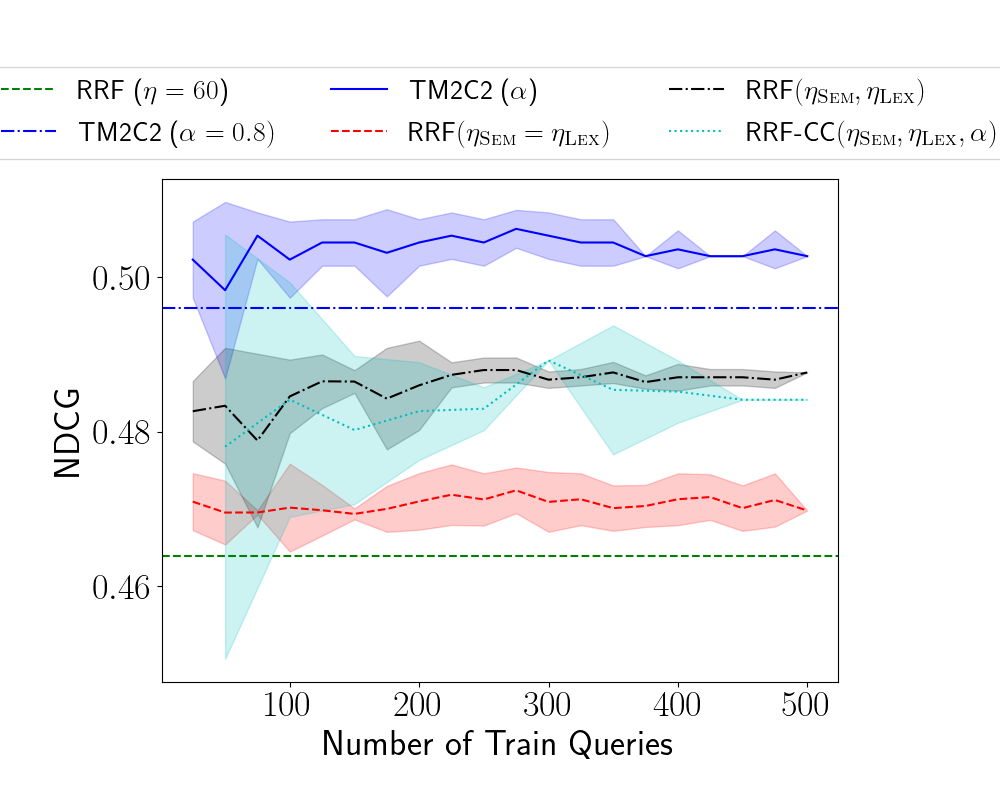}}
}
\caption{Sample efficiency of TM2C2 and the parameterized variants of RRF (single parameter where $\eta_\textsc{Sem}=\eta_\textsc{Lex}$, and two parameters where we allow different values of $\eta_\textsc{Sem}$ and $\eta_\textsc{Lex}$, and a third variation that is a convex combination of \ac{rrf} terms defined in Equation~\ref{equation:rrf-cc}). We sample progressively larger subsets of the training set (with a step size of $5\%$), tune the parameters of each function on the resulting set, and evaluate the resulting function on the test split. These figures depict NDCG@1000 as a function of the size of the tuning set, averaged over $5$ trials with the shaded regions illustrating the $95\%$ confidence intervals. For reference, we have also plotted NDCG on the test split for \ac{rrf} ($\eta=60$) and TM2C2 with $\alpha=0.8$ from Table~\ref{table:rq1}.}
\label{figure:sample_efficiency}
\end{center}
\end{figure}

\textbf{Interpretability and Sample Efficiency}: The question of hybrid retrieval is an important topic in \ac{ir}. What makes it particularly pertinent is its zero-shot applicability, a property that makes deep models \emph{reusable}, reducing computational costs and emissions as a result~\cite{bruch2022reneuir,scells2022sigir-green-ir}, and enabling resource-constrained research labs to innovate. Given the strong evidence supporting the idea that hybrid retrieval is most valuable when applied to out-of-domain datasets~\cite{chen2022ecir}, we believe that $f_\textsc{Hybrid}$ should be robust to distributional shifts and should not need training or fine-tuning on target datasets. This implies that either the function must be non-parametric, that its parameters can be tuned efficiently with respect to the training samples required, or that they are highly interpretable such that their value can be guided by expert knowledge.

In the absence of a truly non-parametric approach, however, we believe a fusion that is more sample-efficient to tune is preferred. Because convex combination has fewer parameters than the fully parameterized \ac{rrf}, we believe it should have this property. To confirm, we ask how many training queries it takes to converge to the correct $\alpha$ on a target dataset.

Figure~\ref{figure:sample_efficiency} visualizes our experiments, where we plot NDCG of \ac{rrf} ($\eta=60$) and TM2C2 with $\alpha=0.8$ from Table~\ref{table:rq1}. Additionally, we take the train split of each dataset and sample from it progressively larger subsets (with a step size of $5\%$), and use it to tune the parameters of each function. We then measure NDCG of the tuned functions on the test split. For the depicted datasets as well as all other datasets in this work, we observe a similar trend: with less than $5\%$ of the training data, which is often a small set of queries, TM2C2's $\alpha$ converges, regardless of the magnitude of domain shift. This sample efficiency is remarkable because it enables significant gains with little labeling effort. Finally, while \ac{rrf} does not settle on a value and its parameters are sensitive to the training sample, its performance does more or less converge. However, the performance of the fully parameterized \ac{rrf} is still sub-optimal compared with TM2C2.

In Figure~\ref{figure:sample_efficiency}, we also include a convex combination of fully parameterized RRF terms, denoted by \ac{rrf}-CC and defined as:
\begin{equation}
    f_\textsc{RRF}(q, d)= (1 - \alpha) \frac{1}{\eta_\textsc{Lex} + \pi_\textsc{Lex}(q, d)} + \alpha \frac{1}{\eta_\textsc{Sem} + \pi_\textsc{Sem}(q, d)},
    \label{equation:rrf-cc}
\end{equation}
where $\alpha$, $\eta_\textsc{Lex}$, and $\eta_\textsc{Sem}$ are tunable parameters. The question this particular formulation tries to answer is whether adding an additional weight to the combination of the \ac{rrf} terms affects retrieval quality. From the figure, it is clear that the addition of this parameter does not have a significant impact on the overall performance. This also serves as additional evidence supporting the claim that Lipschitz continuity is an important property.

\section{Conclusion}
\label{section:conclusion}
We studied the behavior of two popular functions that fuse together lexical and semantic retrieval to produce hybrid retrieval, and identified their advantages and pitfalls. Importantly, we investigated several questions and claims in prior work. We established theoretically that the choice of normalization is not as consequential as once thought for a convex combination-based fusion function. We found that \ac{rrf} is sensitive to its parameters. We also observed empirically that convex combination of normalized scores outperforms \ac{rrf} on in-domain and out-of-domain datasets---a finding that is in disagreement with~\cite{chen2022ecir}.

We believe that a convex combination with theoretical minimum-maximum normalization (TM2C2) indeed enjoys properties that are important in a fusion function. Its parameter, too, can be tuned sample-efficiently or set to a reasonable value based on domain knowledge. In our experiments, for example, we found the range $\alpha \in [0.6, 0.8]$ to consistently lead to improvements.

While we observed that a line appears to be appropriate for a collection of query-document pairs, we acknowledge that, that may change if our analysis was conducted on a per-query basis---itself a rather non-trivial effort. For example, it is unclear if bringing non-linearity to the design of the fusion function or the normalization itself leads to a more accurate prediction of $\alpha$ on a per-query basis. We leave an exploration of this question to future work.

We also note that, while our analysis does not exclude the use of multiple retrieval engines as input, and indeed can be extended, both theoretically and empirically, to a setting where we have more than just lexical and semantic scores, it is nonetheless important to conduct experiments and validate that our findings generalize. In particular, we ask if the role of normalization changes when fusing three or more models, and if so, what is the behavior of convex combination given a particular normalization function, and how does that compare with \ac{rrf}. We believe, however, that our current assumptions are practical and are reflective of the current state of hybrid search where we typically fuse only lexical and semantic retrieval systems. As such, we leave an extended analysis of fusion on multiple retrieval systems to future work.

\begin{acks}
We benefited greatly from conversations with Brian Hentschel, Edo Liberty, and Michael Bendersky. We are grateful to them for their insight and time. We further thank the anonymous reviewers for their meticulous examination of our claims and for their insightful feedback.
\end{acks}

\newpage
\bibliographystyle{ACM-Reference-Format}
\bibliography{main}


\begin{thebibliography}{47}


\ifx \showCODEN    \undefined \def \showCODEN     #1{\unskip}     \fi
\ifx \showDOI      \undefined \def \showDOI       #1{#1}\fi
\ifx \showISBNx    \undefined \def \showISBNx     #1{\unskip}     \fi
\ifx \showISBNxiii \undefined \def \showISBNxiii  #1{\unskip}     \fi
\ifx \showISSN     \undefined \def \showISSN      #1{\unskip}     \fi
\ifx \showLCCN     \undefined \def \showLCCN      #1{\unskip}     \fi
\ifx \shownote     \undefined \def \shownote      #1{#1}          \fi
\ifx \showarticletitle \undefined \def \showarticletitle #1{#1}   \fi
\ifx \showURL      \undefined \def \showURL       {\relax}        \fi
\providecommand\bibfield[2]{#2}
\providecommand\bibinfo[2]{#2}
\providecommand\natexlab[1]{#1}
\providecommand\showeprint[2][]{arXiv:#2}

\bibitem[Asadi(2013)]%
        {asadi2013phd}
\bibfield{author}{\bibinfo{person}{Nima Asadi}.}
  \bibinfo{year}{2013}\natexlab{}.
\newblock \bibinfo{booktitle}{\emph{Multi-Stage Search Architectures for
  Streaming Documents}}.
\newblock \bibinfo{publisher}{University of Maryland}.
\newblock


\bibitem[Asadi and Lin(2013)]%
        {asadi2013efficiency}
\bibfield{author}{\bibinfo{person}{Nima Asadi} {and} \bibinfo{person}{Jimmy
  Lin}.} \bibinfo{year}{2013}\natexlab{}.
\newblock \showarticletitle{Effectiveness/Efficiency Tradeoffs for Candidate
  Generation in Multi-Stage Retrieval Architectures}. In
  \bibinfo{booktitle}{\emph{Proceedings of the 36th International ACM SIGIR
  Conference on Research and Development in Information Retrieval}} (Dublin,
  Ireland). \bibinfo{pages}{997–1000}.
\newblock


\bibitem[Bruch et~al\mbox{.}(2022)]%
        {bruch2022reneuir}
\bibfield{author}{\bibinfo{person}{Sebastian Bruch}, \bibinfo{person}{Claudio
  Lucchese}, {and} \bibinfo{person}{Franco~Maria Nardini}.}
  \bibinfo{year}{2022}\natexlab{}.
\newblock \showarticletitle{ReNeuIR: Reaching Efficiency in Neural Information
  Retrieval}. In \bibinfo{booktitle}{\emph{Proceedings of the 45th
  International ACM SIGIR Conference on Research and Development in Information
  Retrieval}} (Madrid, Spain). \bibinfo{pages}{3462–3465}.
\newblock


\bibitem[Bruch et~al\mbox{.}(2019)]%
        {bruch2019approxndcg}
\bibfield{author}{\bibinfo{person}{Sebastian Bruch}, \bibinfo{person}{Masrour
  Zoghi}, \bibinfo{person}{Michael Bendersky}, {and} \bibinfo{person}{Marc
  Najork}.} \bibinfo{year}{2019}\natexlab{}.
\newblock \showarticletitle{Revisiting Approximate Metric Optimization in the
  Age of Deep Neural Networks}. In \bibinfo{booktitle}{\emph{Proceedings of the
  42nd International ACM SIGIR Conference on Research and Development in
  Information Retrieval}} (Paris, France). \bibinfo{pages}{1241–1244}.
\newblock


\bibitem[Chen et~al\mbox{.}(2022)]%
        {chen2022ecir}
\bibfield{author}{\bibinfo{person}{Tao Chen}, \bibinfo{person}{Mingyang Zhang},
  \bibinfo{person}{Jing Lu}, \bibinfo{person}{Michael Bendersky}, {and}
  \bibinfo{person}{Marc Najork}.} \bibinfo{year}{2022}\natexlab{}.
\newblock \showarticletitle{Out-of-Domain Semantics to the Rescue! Zero-Shot
  Hybrid Retrieval Models}. In \bibinfo{booktitle}{\emph{Advances in
  Information Retrieval: 44th European Conference on IR Research, ECIR 2022,
  Stavanger, Norway, April 10–14, 2022, Proceedings, Part I}} (Stavanger,
  Norway). \bibinfo{pages}{95–110}.
\newblock


\bibitem[Cormack et~al\mbox{.}(2009)]%
        {cormack2009rrf}
\bibfield{author}{\bibinfo{person}{Gordon~V. Cormack}, \bibinfo{person}{Charles
  L~A Clarke}, {and} \bibinfo{person}{Stefan Buettcher}.}
  \bibinfo{year}{2009}\natexlab{}.
\newblock \showarticletitle{Reciprocal Rank Fusion Outperforms Condorcet and
  Individual Rank Learning Methods}. In \bibinfo{booktitle}{\emph{Proceedings
  of the 32nd International ACM SIGIR Conference on Research and Development in
  Information Retrieval}} (Boston, MA, USA). \bibinfo{pages}{758–759}.
\newblock


\bibitem[Dang et~al\mbox{.}(2013)]%
        {ecir13}
\bibfield{author}{\bibinfo{person}{Van Dang}, \bibinfo{person}{Michael
  Bendersky}, {and} \bibinfo{person}{W~Bruce Croft}.}
  \bibinfo{year}{2013}\natexlab{}.
\newblock \showarticletitle{Two-Stage learning to rank for information
  retrieval}.
\newblock In \bibinfo{booktitle}{\emph{Advances in Information Retrieval}}.
  \bibinfo{publisher}{Springer}, \bibinfo{pages}{423--434}.
\newblock


\bibitem[Devlin et~al\mbox{.}(2019)]%
        {devlin2019bert}
\bibfield{author}{\bibinfo{person}{Jacob Devlin}, \bibinfo{person}{Ming-Wei
  Chang}, \bibinfo{person}{Kenton Lee}, {and} \bibinfo{person}{Kristina
  Toutanova}.} \bibinfo{year}{2019}\natexlab{}.
\newblock \showarticletitle{{BERT}: Pre-training of Deep Bidirectional
  Transformers for Language Understanding}. In
  \bibinfo{booktitle}{\emph{Proceedings of the 2019 Conference of the North
  {A}merican Chapter of the Association for Computational Linguistics: Human
  Language Technologies, Volume 1 (Long and Short Papers)}}.
  \bibinfo{pages}{4171--4186}.
\newblock


\bibitem[Formal et~al\mbox{.}(2022)]%
        {formal2022splade}
\bibfield{author}{\bibinfo{person}{Thibault Formal}, \bibinfo{person}{Carlos
  Lassance}, \bibinfo{person}{Benjamin Piwowarski}, {and}
  \bibinfo{person}{St\'{e}phane Clinchant}.} \bibinfo{year}{2022}\natexlab{}.
\newblock \showarticletitle{From Distillation to Hard Negative Sampling: Making
  Sparse Neural IR Models More Effective}. In
  \bibinfo{booktitle}{\emph{Proceedings of the 45th International ACM SIGIR
  Conference on Research and Development in Information Retrieval}} (Madrid,
  Spain). \bibinfo{pages}{2353–2359}.
\newblock


\bibitem[Hofst\"{a}tter et~al\mbox{.}(2021)]%
        {tas-b}
\bibfield{author}{\bibinfo{person}{Sebastian Hofst\"{a}tter},
  \bibinfo{person}{Sheng-Chieh Lin}, \bibinfo{person}{Jheng-Hong Yang},
  \bibinfo{person}{Jimmy Lin}, {and} \bibinfo{person}{Allan Hanbury}.}
  \bibinfo{year}{2021}\natexlab{}.
\newblock \showarticletitle{Efficiently Teaching an Effective Dense Retriever
  with Balanced Topic Aware Sampling}. In \bibinfo{booktitle}{\emph{Proceedings
  of the 44th International ACM SIGIR Conference on Research and Development in
  Information Retrieval}} (Virtual Event, Canada). \bibinfo{pages}{113–122}.
\newblock


\bibitem[J{\"a}rvelin and Kek{\"a}l{\"a}inen(2000)]%
        {jarvelin2000ir}
\bibfield{author}{\bibinfo{person}{Kalervo J{\"a}rvelin} {and}
  \bibinfo{person}{Jaana Kek{\"a}l{\"a}inen}.} \bibinfo{year}{2000}\natexlab{}.
\newblock \showarticletitle{IR evaluation methods for retrieving highly
  relevant documents}. In \bibinfo{booktitle}{\emph{Proceedings of the 23rd
  annual international ACM SIGIR conference on Research and development in
  information retrieval}}. ACM, \bibinfo{pages}{41--48}.
\newblock


\bibitem[Johnson et~al\mbox{.}(2021)]%
        {Johnson2021faiss}
\bibfield{author}{\bibinfo{person}{Jeff Johnson}, \bibinfo{person}{Matthijs
  Douze}, {and} \bibinfo{person}{Herv{\'e} J{\'e}gou}.}
  \bibinfo{year}{2021}\natexlab{}.
\newblock \showarticletitle{Billion-Scale Similarity Search with GPUs}.
\newblock \bibinfo{journal}{\emph{IEEE Transactions on Big Data}}
  \bibinfo{volume}{7} (\bibinfo{year}{2021}), \bibinfo{pages}{535--547}.
\newblock


\bibitem[Karpukhin et~al\mbox{.}(2020)]%
        {karpukhin-etal-2020-dense}
\bibfield{author}{\bibinfo{person}{Vladimir Karpukhin}, \bibinfo{person}{Barlas
  Oguz}, \bibinfo{person}{Sewon Min}, \bibinfo{person}{Patrick Lewis},
  \bibinfo{person}{Ledell Wu}, \bibinfo{person}{Sergey Edunov},
  \bibinfo{person}{Danqi Chen}, {and} \bibinfo{person}{Wen-tau Yih}.}
  \bibinfo{year}{2020}\natexlab{}.
\newblock \showarticletitle{Dense Passage Retrieval for Open-Domain Question
  Answering}. In \bibinfo{booktitle}{\emph{Proceedings of the 2020 Conference
  on Empirical Methods in Natural Language Processing (EMNLP)}}.
\newblock


\bibitem[Kuzi et~al\mbox{.}(2020)]%
        {Kuzi2020LeveragingSA}
\bibfield{author}{\bibinfo{person}{Saar Kuzi}, \bibinfo{person}{Mingyang
  Zhang}, \bibinfo{person}{Cheng Li}, \bibinfo{person}{Michael Bendersky},
  {and} \bibinfo{person}{Marc Najork}.} \bibinfo{year}{2020}\natexlab{}.
\newblock \showarticletitle{Leveraging Semantic and Lexical Matching to Improve
  the Recall of Document Retrieval Systems: A Hybrid Approach}.
\newblock  (\bibinfo{year}{2020}).
\newblock
\showeprint[arxiv]{2010.01195}~[cs.IR]


\bibitem[Li et~al\mbox{.}(2022)]%
        {li2022prf}
\bibfield{author}{\bibinfo{person}{Hang Li}, \bibinfo{person}{Shuai Wang},
  \bibinfo{person}{Shengyao Zhuang}, \bibinfo{person}{Ahmed Mourad},
  \bibinfo{person}{Xueguang Ma}, \bibinfo{person}{Jimmy Lin}, {and}
  \bibinfo{person}{Guido Zuccon}.} \bibinfo{year}{2022}\natexlab{}.
\newblock \showarticletitle{To Interpolate or Not to Interpolate: PRF, Dense
  and Sparse Retrievers}. In \bibinfo{booktitle}{\emph{Proceedings of the 45th
  International ACM SIGIR Conference on Research and Development in Information
  Retrieval}} (Madrid, Spain). \bibinfo{pages}{2495–2500}.
\newblock


\bibitem[Lin et~al\mbox{.}(2021)]%
        {lin2021pretrained}
\bibfield{author}{\bibinfo{person}{Jimmy Lin}, \bibinfo{person}{Rodrigo
  Nogueira}, {and} \bibinfo{person}{Andrew Yates}.}
  \bibinfo{year}{2021}\natexlab{}.
\newblock \bibinfo{title}{Pretrained Transformers for Text Ranking: BERT and
  Beyond}.
\newblock
\newblock
\showeprint[arxiv]{2010.06467}~[cs.IR]


\bibitem[Liu(2009)]%
        {LTR-Liu09}
\bibfield{author}{\bibinfo{person}{Tie{-}Yan Liu}.}
  \bibinfo{year}{2009}\natexlab{}.
\newblock \showarticletitle{Learning to Rank for Information Retrieval}.
\newblock \bibinfo{journal}{\emph{Foundations and Trends in Information
  Retrieval}} \bibinfo{volume}{3}, \bibinfo{number}{3} (\bibinfo{year}{2009}),
  \bibinfo{pages}{225--331}.
\newblock


\bibitem[Luan et~al\mbox{.}(2021)]%
        {luan-etal-2021-sparse}
\bibfield{author}{\bibinfo{person}{Yi Luan}, \bibinfo{person}{Jacob
  Eisenstein}, \bibinfo{person}{Kristina Toutanova}, {and}
  \bibinfo{person}{Michael Collins}.} \bibinfo{year}{2021}\natexlab{}.
\newblock \showarticletitle{Sparse, Dense, and Attentional Representations for
  Text Retrieval}.
\newblock \bibinfo{journal}{\emph{Transactions of the Association for
  Computational Linguistics}}  \bibinfo{volume}{9} (\bibinfo{year}{2021}),
  \bibinfo{pages}{329--345}.
\newblock


\bibitem[Ma et~al\mbox{.}(2020)]%
        {Ma2020HybridFR}
\bibfield{author}{\bibinfo{person}{Ji Ma}, \bibinfo{person}{Ivan Korotkov},
  \bibinfo{person}{Keith Hall}, {and} \bibinfo{person}{Ryan~T. McDonald}.}
  \bibinfo{year}{2020}\natexlab{}.
\newblock \showarticletitle{Hybrid First-stage Retrieval Models for Biomedical
  Literature}. In \bibinfo{booktitle}{\emph{CLEF}}.
\newblock


\bibitem[Ma et~al\mbox{.}(2021)]%
        {Ma2021ARS}
\bibfield{author}{\bibinfo{person}{Xueguang Ma}, \bibinfo{person}{Kai Sun},
  \bibinfo{person}{Ronak Pradeep}, {and} \bibinfo{person}{Jimmy~J. Lin}.}
  \bibinfo{year}{2021}\natexlab{}.
\newblock \showarticletitle{A Replication Study of Dense Passage Retriever}.
\newblock  (\bibinfo{year}{2021}).
\newblock
\showeprint[arxiv]{2004.04906}~[cs.CL]


\bibitem[Macdonald et~al\mbox{.}(2013)]%
        {macdonald2013whens}
\bibfield{author}{\bibinfo{person}{Craig Macdonald},
  \bibinfo{person}{Rodrygo~LT Santos}, {and} \bibinfo{person}{Iadh Ounis}.}
  \bibinfo{year}{2013}\natexlab{}.
\newblock \showarticletitle{The whens and hows of learning to rank for web
  search}.
\newblock \bibinfo{journal}{\emph{Information Retrieval}} \bibinfo{volume}{16},
  \bibinfo{number}{5} (\bibinfo{year}{2013}), \bibinfo{pages}{584--628}.
\newblock


\bibitem[Malkov and Yashunin(2016)]%
        {malkov2016hnsw}
\bibfield{author}{\bibinfo{person}{Yu.~A. Malkov} {and} \bibinfo{person}{D.~A.
  Yashunin}.} \bibinfo{year}{2016}\natexlab{}.
\newblock \bibinfo{title}{Efficient and robust approximate nearest neighbor
  search using Hierarchical Navigable Small World graphs}.
\newblock
\newblock
\showeprint[arxiv]{1603.09320}~[cs.DS]


\bibitem[Mallia et~al\mbox{.}(2019)]%
        {MSMS2019}
\bibfield{author}{\bibinfo{person}{Antonio Mallia}, \bibinfo{person}{Michal
  Siedlaczek}, \bibinfo{person}{Joel Mackenzie}, {and} \bibinfo{person}{Torsten
  Suel}.} \bibinfo{year}{2019}\natexlab{}.
\newblock \showarticletitle{{PISA:} Performant Indexes and Search for
  Academia}. In \bibinfo{booktitle}{\emph{Proceedings of the Open-Source {IR}
  Replicability Challenge co-located with 42nd International {ACM} {SIGIR}
  Conference on Research and Development in Information Retrieval Paris,
  France, July 25, 2019.}} \bibinfo{pages}{50--56}.
\newblock


\bibitem[Matsubara et~al\mbox{.}(2020)]%
        {matsubara2020multistage}
\bibfield{author}{\bibinfo{person}{Yoshitomo Matsubara}, \bibinfo{person}{Thuy
  Vu}, {and} \bibinfo{person}{Alessandro Moschitti}.}
  \bibinfo{year}{2020}\natexlab{}.
\newblock \showarticletitle{Reranking for Efficient Transformer-Based Answer
  Selection}. In \bibinfo{booktitle}{\emph{Proceedings of the 43rd
  International ACM SIGIR Conference on Research and Development in Information
  Retrieval}}. \bibinfo{pages}{1577–1580}.
\newblock


\bibitem[Mitra et~al\mbox{.}(2016)]%
        {mitra2016dual}
\bibfield{author}{\bibinfo{person}{Bhaskar Mitra}, \bibinfo{person}{Eric
  Nalisnick}, \bibinfo{person}{Nick Craswell}, {and} \bibinfo{person}{Rich
  Caruana}.} \bibinfo{year}{2016}\natexlab{}.
\newblock \showarticletitle{A dual embedding space model for document ranking}.
\newblock  (\bibinfo{year}{2016}).
\newblock
\showeprint[arxiv]{1602.01137}~[cs.IR]


\bibitem[Nguyen et~al\mbox{.}(2016)]%
        {nguyen2016msmarco}
\bibfield{author}{\bibinfo{person}{Tri Nguyen}, \bibinfo{person}{Mir
  Rosenberg}, \bibinfo{person}{Xia Song}, \bibinfo{person}{Jianfeng Gao},
  \bibinfo{person}{Saurabh Tiwary}, \bibinfo{person}{Rangan Majumder}, {and}
  \bibinfo{person}{Li Deng}.} \bibinfo{year}{2016}\natexlab{}.
\newblock \showarticletitle{MS MARCO: A Human Generated MAchine Reading
  COmprehension Dataset}.
\newblock  (\bibinfo{date}{November} \bibinfo{year}{2016}).
\newblock


\bibitem[Nogueira and Cho(2020)]%
        {nogueira2020passage}
\bibfield{author}{\bibinfo{person}{Rodrigo Nogueira} {and}
  \bibinfo{person}{Kyunghyun Cho}.} \bibinfo{year}{2020}\natexlab{}.
\newblock \bibinfo{title}{Passage Re-ranking with BERT}.
\newblock
\newblock
\showeprint[arxiv]{1901.04085}~[cs.IR]


\bibitem[Nogueira et~al\mbox{.}(2020)]%
        {nogueira2020monot5}
\bibfield{author}{\bibinfo{person}{Rodrigo Nogueira}, \bibinfo{person}{Zhiying
  Jiang}, \bibinfo{person}{Ronak Pradeep}, {and} \bibinfo{person}{Jimmy Lin}.}
  \bibinfo{year}{2020}\natexlab{}.
\newblock \showarticletitle{Document Ranking with a Pretrained
  Sequence-to-Sequence Model}. In \bibinfo{booktitle}{\emph{Findings of the
  Association for Computational Linguistics: EMNLP 2020}}.
  \bibinfo{pages}{708--718}.
\newblock


\bibitem[Nogueira et~al\mbox{.}(2019a)]%
        {nogueira2019multi}
\bibfield{author}{\bibinfo{person}{Rodrigo Nogueira}, \bibinfo{person}{Wei
  Yang}, \bibinfo{person}{Kyunghyun Cho}, {and} \bibinfo{person}{Jimmy Lin}.}
  \bibinfo{year}{2019}\natexlab{a}.
\newblock \showarticletitle{Multi-stage document ranking with BERT}.
\newblock  (\bibinfo{year}{2019}).
\newblock
\showeprint{1910.14424}~[cs.IR]


\bibitem[Nogueira et~al\mbox{.}(2019b)]%
        {nogueira2019document}
\bibfield{author}{\bibinfo{person}{Rodrigo Nogueira}, \bibinfo{person}{Wei
  Yang}, \bibinfo{person}{Jimmy Lin}, {and} \bibinfo{person}{Kyunghyun Cho}.}
  \bibinfo{year}{2019}\natexlab{b}.
\newblock \showarticletitle{Document Expansion by Query Prediction}.
\newblock  (\bibinfo{year}{2019}).
\newblock
\showeprint[arxiv]{1904.08375}~[cs.IR]


\bibitem[Qin et~al\mbox{.}(2010)]%
        {qin2010general}
\bibfield{author}{\bibinfo{person}{Tao Qin}, \bibinfo{person}{Tie-Yan Liu},
  {and} \bibinfo{person}{Hang Li}.} \bibinfo{year}{2010}\natexlab{}.
\newblock \showarticletitle{A general approximation framework for direct
  optimization of information retrieval measures}.
\newblock \bibinfo{journal}{\emph{Information retrieval}} \bibinfo{volume}{13},
  \bibinfo{number}{4} (\bibinfo{year}{2010}), \bibinfo{pages}{375--397}.
\newblock


\bibitem[Reimers and Gurevych(2019)]%
        {reimers-2019-sentence-bert}
\bibfield{author}{\bibinfo{person}{Nils Reimers} {and} \bibinfo{person}{Iryna
  Gurevych}.} \bibinfo{year}{2019}\natexlab{}.
\newblock \showarticletitle{Sentence-BERT: Sentence Embeddings using Siamese
  BERT-Networks}. In \bibinfo{booktitle}{\emph{Proceedings of the 2019
  Conference on Empirical Methods in Natural Language Processing}}.
  \bibinfo{publisher}{Association for Computational Linguistics}.
\newblock


\bibitem[Robertson and Zaragoza(2009)]%
        {bm25}
\bibfield{author}{\bibinfo{person}{Stephen Robertson} {and}
  \bibinfo{person}{Hugo Zaragoza}.} \bibinfo{year}{2009}\natexlab{}.
\newblock \showarticletitle{The Probabilistic Relevance Framework: BM25 and
  Beyond}.
\newblock \bibinfo{journal}{\emph{Foundations and Trends in Information
  Retrieval}} \bibinfo{volume}{3}, \bibinfo{number}{4} (\bibinfo{date}{April}
  \bibinfo{year}{2009}), \bibinfo{pages}{333--389}.
\newblock
\showISSN{1554-0669}


\bibitem[Robertson et~al\mbox{.}(1994)]%
        {bm25original}
\bibfield{author}{\bibinfo{person}{Stephen~E. Robertson},
  \bibinfo{person}{Steve Walker}, \bibinfo{person}{Susan Jones},
  \bibinfo{person}{Micheline Hancock-Beaulieu}, {and} \bibinfo{person}{Mike
  Gatford}.} \bibinfo{year}{1994}\natexlab{}.
\newblock \showarticletitle{Okapi at TREC-3}. In
  \bibinfo{booktitle}{\emph{TREC}} \emph{(\bibinfo{series}{NIST Special
  Publication}, Vol.~\bibinfo{volume}{500-225})}. \bibinfo{publisher}{National
  Institute of Standards and Technology (NIST)}, \bibinfo{pages}{109--126}.
\newblock


\bibitem[Scells et~al\mbox{.}(2022)]%
        {scells2022sigir-green-ir}
\bibfield{author}{\bibinfo{person}{Harrisen Scells}, \bibinfo{person}{Shengyao
  Zhuang}, {and} \bibinfo{person}{Guido Zuccon}.}
  \bibinfo{year}{2022}\natexlab{}.
\newblock \showarticletitle{Reduce, Reuse, Recycle: Green Information Retrieval
  Research}. In \bibinfo{booktitle}{\emph{Proceedings of the 45th International
  ACM SIGIR Conference on Research and Development in Information Retrieval}}
  (Madrid, Spain). \bibinfo{pages}{2825–2837}.
\newblock


\bibitem[Tao et~al\mbox{.}(2006)]%
        {tao2006expansion}
\bibfield{author}{\bibinfo{person}{Tao Tao}, \bibinfo{person}{Xuanhui Wang},
  \bibinfo{person}{Qiaozhu Mei}, {and} \bibinfo{person}{ChengXiang Zhai}.}
  \bibinfo{year}{2006}\natexlab{}.
\newblock \showarticletitle{Language Model Information Retrieval with Document
  Expansion}. In \bibinfo{booktitle}{\emph{Proceedings of the Main Conference
  on Human Language Technology Conference of the North American Chapter of the
  Association of Computational Linguistics}} (New York, New York).
  \bibinfo{pages}{407–414}.
\newblock


\bibitem[Thakur et~al\mbox{.}(2021)]%
        {thakur2021beir}
\bibfield{author}{\bibinfo{person}{Nandan Thakur}, \bibinfo{person}{Nils
  Reimers}, \bibinfo{person}{Andreas R{\"u}ckl{\'e}}, \bibinfo{person}{Abhishek
  Srivastava}, {and} \bibinfo{person}{Iryna Gurevych}.}
  \bibinfo{year}{2021}\natexlab{}.
\newblock \showarticletitle{{BEIR}: A Heterogeneous Benchmark for Zero-shot
  Evaluation of Information Retrieval Models}. In
  \bibinfo{booktitle}{\emph{Proceedings of the 35th Conference on Neural
  Information Processing Systems Datasets and Benchmarks Track (Round 2)}}.
\newblock


\bibitem[Vaswani et~al\mbox{.}(2017)]%
        {vaswani2017attention}
\bibfield{author}{\bibinfo{person}{Ashish Vaswani}, \bibinfo{person}{Noam
  Shazeer}, \bibinfo{person}{Niki Parmar}, \bibinfo{person}{Jakob Uszkoreit},
  \bibinfo{person}{Llion Jones}, \bibinfo{person}{Aidan~N. Gomez},
  \bibinfo{person}{\L{}ukasz Kaiser}, {and} \bibinfo{person}{Illia
  Polosukhin}.} \bibinfo{year}{2017}\natexlab{}.
\newblock \showarticletitle{Attention is All You Need}. In
  \bibinfo{booktitle}{\emph{Proceedings of the 31st International Conference on
  Neural Information Processing Systems}} (Long Beach, California, USA).
  \bibinfo{pages}{6000–6010}.
\newblock


\bibitem[Wang et~al\mbox{.}(2011)]%
        {wang2011cascade}
\bibfield{author}{\bibinfo{person}{Lidan Wang}, \bibinfo{person}{Jimmy Lin},
  {and} \bibinfo{person}{Donald Metzler}.} \bibinfo{year}{2011}\natexlab{}.
\newblock \showarticletitle{A cascade ranking model for efficient ranked
  retrieval}. In \bibinfo{booktitle}{\emph{Proceedings of the 34th
  international ACM SIGIR conference on Research and development in Information
  Retrieval}}. ACM, \bibinfo{pages}{105--114}.
\newblock


\bibitem[Wang et~al\mbox{.}(2021)]%
        {wang2021bert}
\bibfield{author}{\bibinfo{person}{Shuai Wang}, \bibinfo{person}{Shengyao
  Zhuang}, {and} \bibinfo{person}{Guido Zuccon}.}
  \bibinfo{year}{2021}\natexlab{}.
\newblock \showarticletitle{BERT-Based Dense Retrievers Require Interpolation
  with BM25 for Effective Passage Retrieval}. In
  \bibinfo{booktitle}{\emph{Proceedings of the 2021 ACM SIGIR International
  Conference on Theory of Information Retrieval}} (Virtual Event, Canada).
  \bibinfo{pages}{317–324}.
\newblock


\bibitem[Wu et~al\mbox{.}(2010)]%
        {lambdamart}
\bibfield{author}{\bibinfo{person}{Qiang Wu}, \bibinfo{person}{Christopher~J.C.
  Burges}, \bibinfo{person}{Krysta~M. Svore}, {and} \bibinfo{person}{Jianfeng
  Gao}.} \bibinfo{year}{2010}\natexlab{}.
\newblock \showarticletitle{Adapting boosting for information retrieval
  measures}.
\newblock \bibinfo{journal}{\emph{Information Retrieval}}
  (\bibinfo{year}{2010}).
\newblock


\bibitem[Wu et~al\mbox{.}(2019)]%
        {Wu2019EfficientIP}
\bibfield{author}{\bibinfo{person}{Xiang Wu}, \bibinfo{person}{Ruiqi Guo},
  \bibinfo{person}{David Simcha}, \bibinfo{person}{Dave Dopson}, {and}
  \bibinfo{person}{Sanjiv Kumar}.} \bibinfo{year}{2019}\natexlab{}.
\newblock \showarticletitle{Efficient Inner Product Approximation in Hybrid
  Spaces}.
\newblock  (\bibinfo{year}{2019}).
\newblock
\showeprint[arxiv]{1903.08690}~[cs.LG]


\bibitem[Wu et~al\mbox{.}(2016)]%
        {wordpiece}
\bibfield{author}{\bibinfo{person}{Yonghui Wu}, \bibinfo{person}{Mike
  Schuster}, \bibinfo{person}{Zhifeng Chen}, \bibinfo{person}{Quoc~V. Le},
  \bibinfo{person}{Mohammad Norouzi}, \bibinfo{person}{Wolfgang Macherey},
  \bibinfo{person}{Maxim Krikun}, \bibinfo{person}{Yuan Cao},
  \bibinfo{person}{Qin Gao}, \bibinfo{person}{Klaus Macherey},
  \bibinfo{person}{Jeff Klingner}, \bibinfo{person}{Apurva Shah},
  \bibinfo{person}{Melvin Johnson}, \bibinfo{person}{Xiaobing Liu},
  \bibinfo{person}{{\L}ukasz Kaiser}, \bibinfo{person}{Stephan Gouws},
  \bibinfo{person}{Yoshikiyo Kato}, \bibinfo{person}{Taku Kudo},
  \bibinfo{person}{Hideto Kazawa}, \bibinfo{person}{Keith Stevens},
  \bibinfo{person}{George Kurian}, \bibinfo{person}{Nishant Patil},
  \bibinfo{person}{Wei Wang}, \bibinfo{person}{Cliff Young},
  \bibinfo{person}{Jason Smith}, \bibinfo{person}{Jason Riesa},
  \bibinfo{person}{Alex Rudnick}, \bibinfo{person}{Oriol Vinyals},
  \bibinfo{person}{Greg Corrado}, \bibinfo{person}{Macduff Hughes}, {and}
  \bibinfo{person}{Jeffrey Dean}.} \bibinfo{year}{2016}\natexlab{}.
\newblock \bibinfo{title}{Google's Neural Machine Translation System: Bridging
  the Gap between Human and Machine Translation}.
\newblock
\newblock
\showeprint[arxiv]{1609.08144}~[cs.CL]


\bibitem[Xiong et~al\mbox{.}(2021)]%
        {xiong2021approximate}
\bibfield{author}{\bibinfo{person}{Lee Xiong}, \bibinfo{person}{Chenyan Xiong},
  \bibinfo{person}{Ye Li}, \bibinfo{person}{Kwok-Fung Tang},
  \bibinfo{person}{Jialin Liu}, \bibinfo{person}{Paul Bennett},
  \bibinfo{person}{Junaid Ahmed}, {and} \bibinfo{person}{Arnold Overwijk}.}
  \bibinfo{year}{2021}\natexlab{}.
\newblock \showarticletitle{Approximate Nearest Neighbor Negative Contrastive
  Learning for Dense Text Retrieval}. In
  \bibinfo{booktitle}{\emph{International Conference on Learning
  Representations}}.
\newblock


\bibitem[Yin et~al\mbox{.}(2016)]%
        {yin2016ranking}
\bibfield{author}{\bibinfo{person}{Dawei Yin}, \bibinfo{person}{Yuening Hu},
  \bibinfo{person}{Jiliang Tang}, \bibinfo{person}{Tim Daly},
  \bibinfo{person}{Mianwei Zhou}, \bibinfo{person}{Hua Ouyang},
  \bibinfo{person}{Jianhui Chen}, \bibinfo{person}{Changsung Kang},
  \bibinfo{person}{Hongbo Deng}, \bibinfo{person}{Chikashi Nobata},
  {et~al\mbox{.}}} \bibinfo{year}{2016}\natexlab{}.
\newblock \showarticletitle{Ranking relevance in yahoo search}. In
  \bibinfo{booktitle}{\emph{Proceedings of the 22nd ACM SIGKDD International
  Conference on Knowledge Discovery and Data Mining}}. ACM,
  \bibinfo{pages}{323--332}.
\newblock


\bibitem[Zamani et~al\mbox{.}(2022)]%
        {Zamani2022cascade}
\bibfield{author}{\bibinfo{person}{Hamed Zamani}, \bibinfo{person}{Mike
  Bendersky}, \bibinfo{person}{Donald Metzler}, \bibinfo{person}{Honglei
  Zhuang}, {and} \bibinfo{person}{Marc Najork}.}
  \bibinfo{year}{2022}\natexlab{}.
\newblock \showarticletitle{Stochastic Retrieval-Conditioned Reranking}. In
  \bibinfo{booktitle}{\emph{Proceedings of the 2022 ACM SIGIR International
  Conference on the Theory of Information Retrieval}} (Madrid, Spain).
\newblock


\bibitem[Zhan et~al\mbox{.}(2020)]%
        {zhan2020repbert}
\bibfield{author}{\bibinfo{person}{Jingtao Zhan}, \bibinfo{person}{Jiaxin Mao},
  \bibinfo{person}{Yiqun Liu}, \bibinfo{person}{Min Zhang}, {and}
  \bibinfo{person}{Shaoping Ma}.} \bibinfo{year}{2020}\natexlab{}.
\newblock \bibinfo{title}{RepBERT: Contextualized Text Embeddings for
  First-Stage Retrieval}.
\newblock
\newblock
\showeprint[arxiv]{2006.15498}~[cs.IR]


\end{thebibliography}

\newpage
\appendix
\section{Fusion of \textsc{Splade} and BM25}
\label{appendix:splade-bm25}

\begin{figure}[h]
\begin{center}
\centerline{
\subfloat[MS MARCO]{
\includegraphics[width=0.36\linewidth,height=4cm]{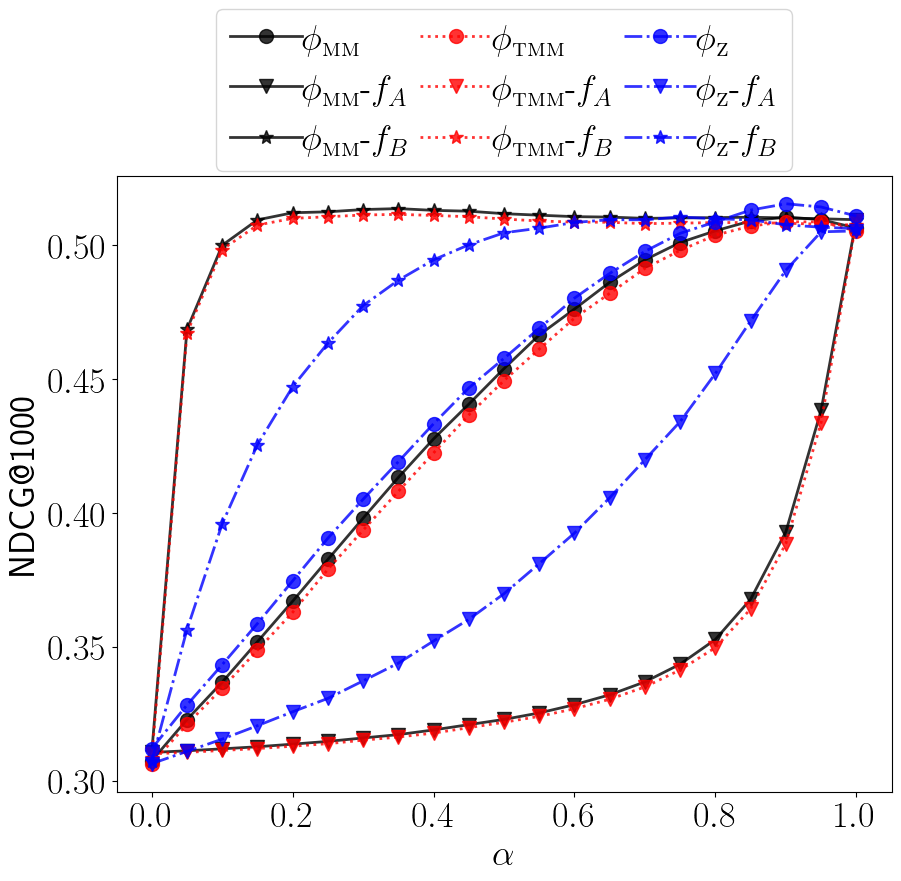}
\includegraphics[width=0.36\linewidth,height=4cm]{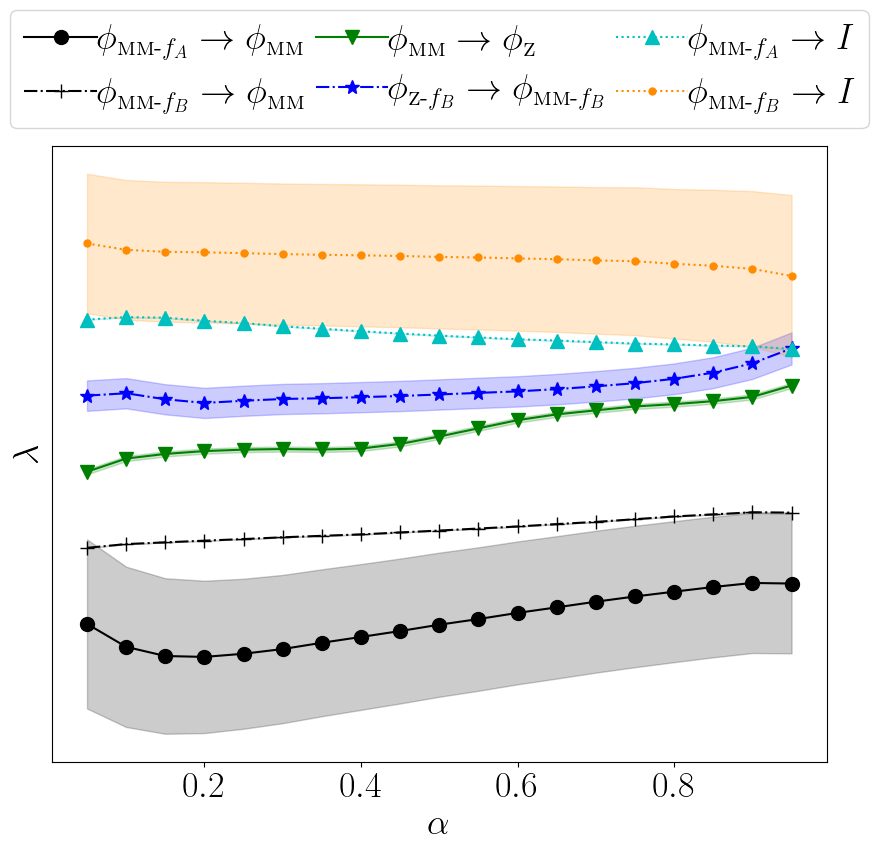}}
}
\centerline{
\subfloat[\textsc{Quora}]{
\includegraphics[trim={0 0 0 4cm},clip,width=0.36\linewidth,height=3cm]{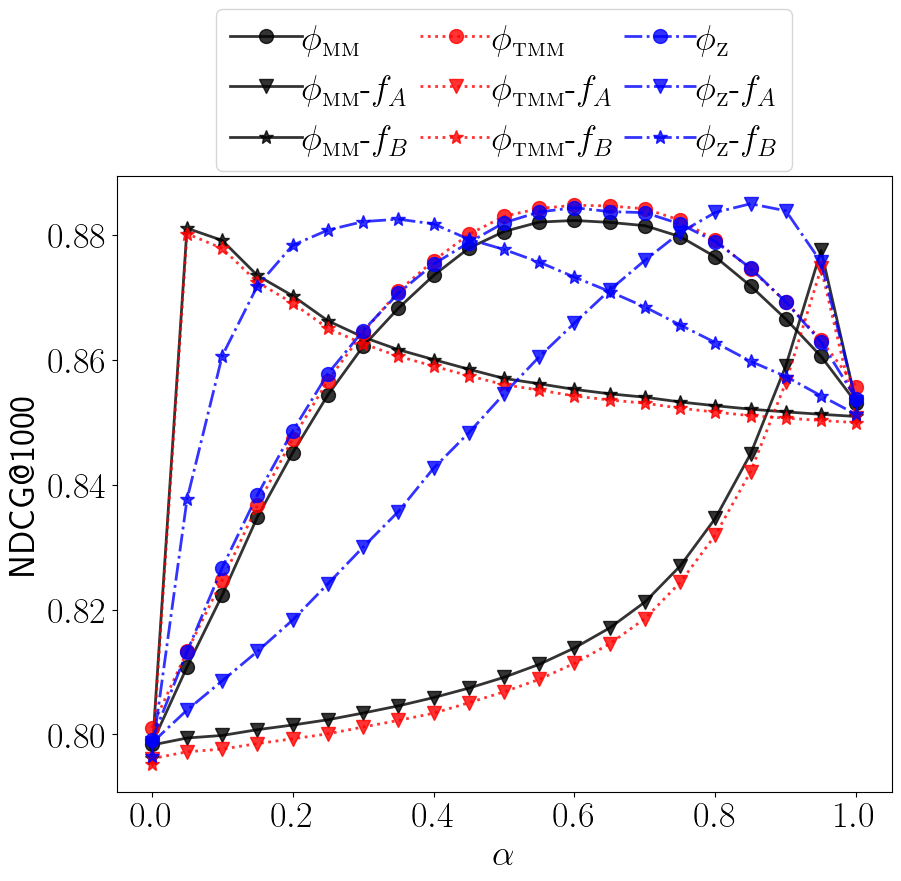}
\includegraphics[trim={0 0 0 3.6cm},clip,width=0.36\linewidth,height=3cm]{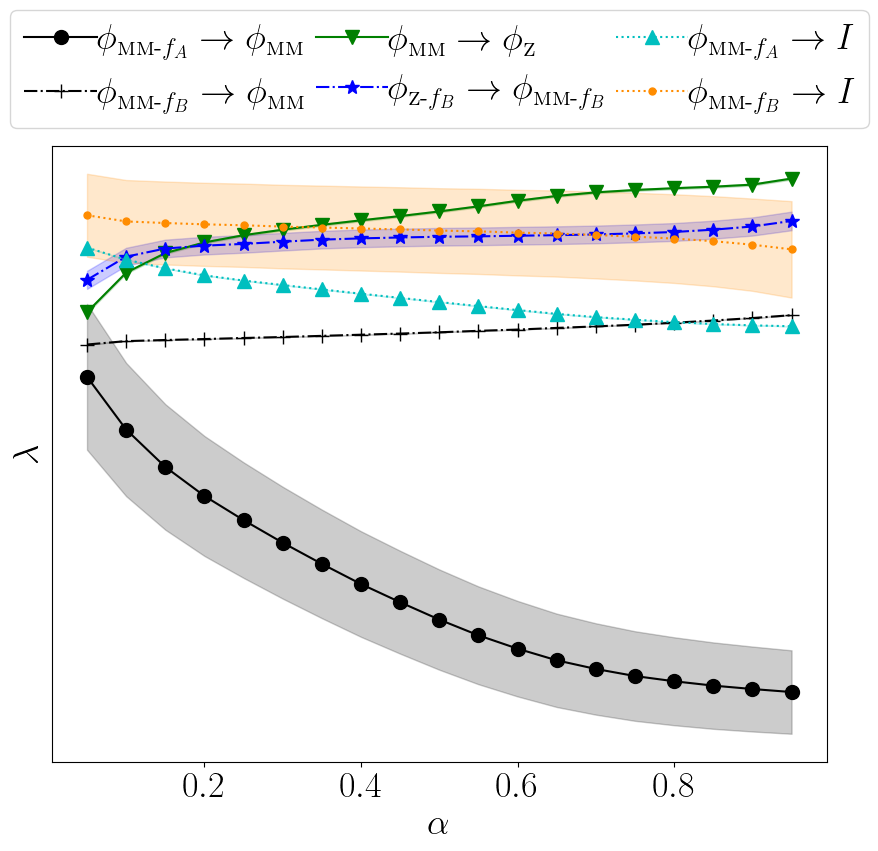}}
}
\centerline{
\subfloat[\textsc{HotpotQA}]{
\includegraphics[trim={0 0 0 4cm},clip,width=0.36\linewidth,height=3cm]{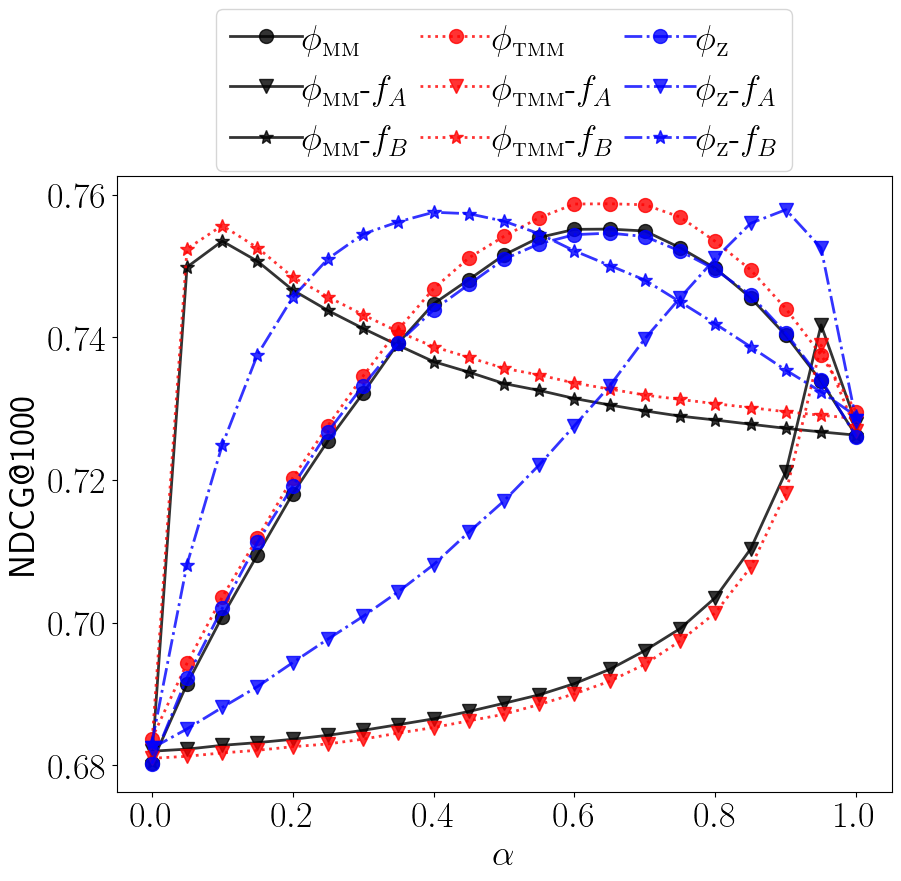}
\includegraphics[trim={0 0 0 3.6cm},clip,width=0.36\linewidth,height=3cm]{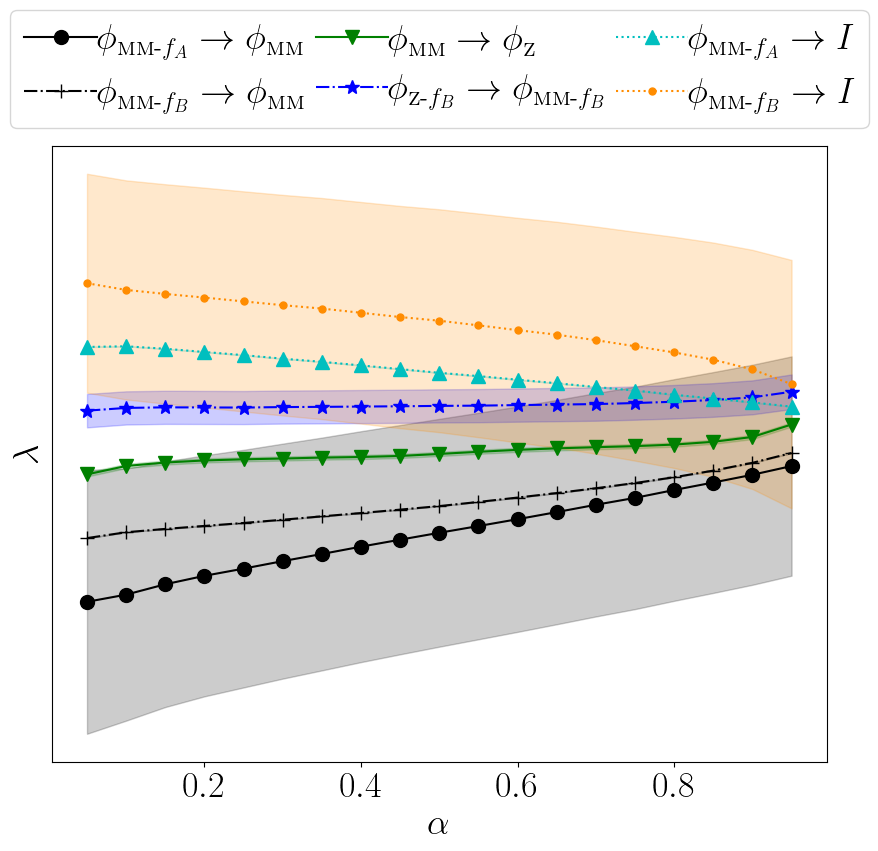}}
}
\centerline{
\subfloat[\textsc{FiQA}]{
\includegraphics[trim={0 0 0 4cm},clip,width=0.36\linewidth,height=3cm]{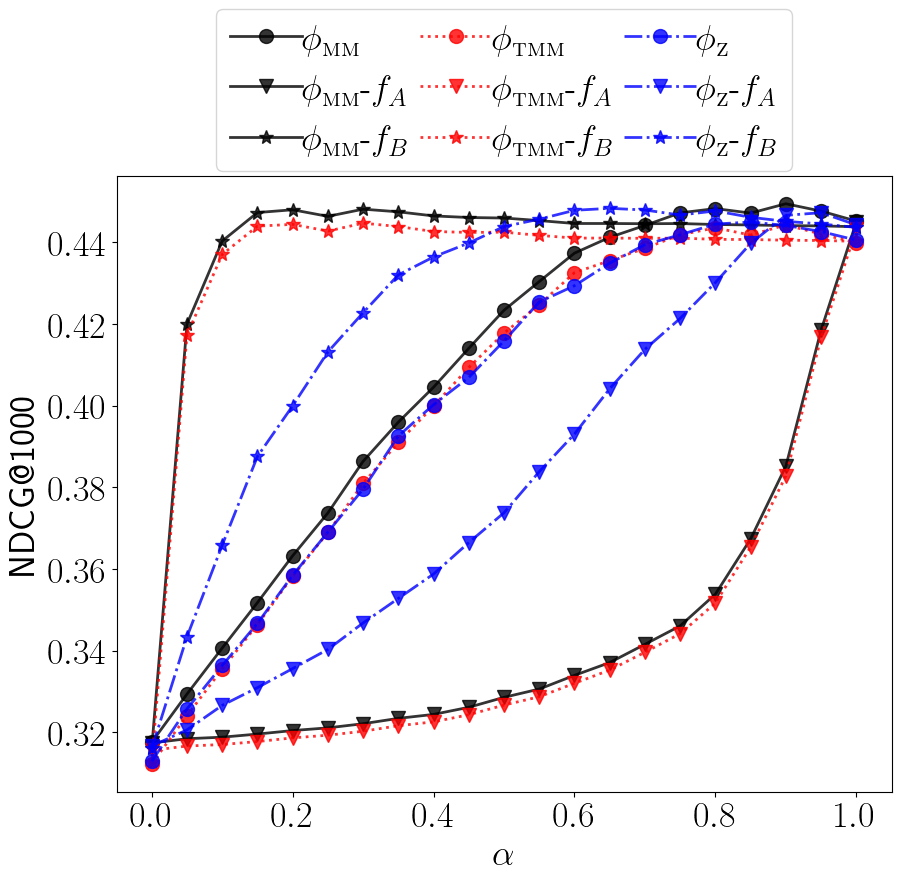}
\includegraphics[trim={0 0 0 3.6cm},clip,width=0.36\linewidth,height=3cm]{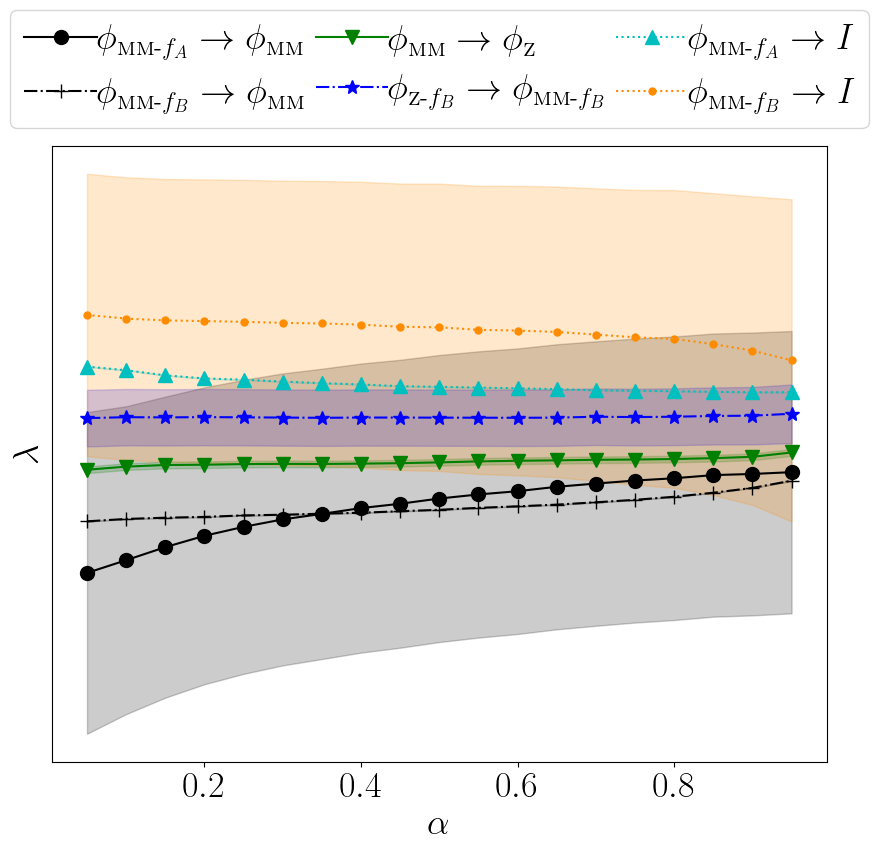}}
}
\caption{Effect of normalization on $f_\textsc{Convex} = \alpha f_A + (1 - \alpha)f_B$,
where $f_A$ is \textsc{Splade} and $f_B$ is BM25, as a function of $\alpha$ (left);
and, the relative expansion rate of \textsc{Splade} scores with respect to BM25 scores (i.e., $\lambda$ in Definition~\ref{definition:growth}), with $95\%$ confidence intervals (right). $\phi\text{-}f_o$ indicates that the normalization function $\phi$ was only applied to $f_o$, with the other function entering fusion without normalization.}
\label{appendix:splade-bm25:figure:convex:normalization}
\end{center}
\end{figure}
\FloatBarrier

\begin{table*}[t]
\caption{NDCG@1000 (except \textsc{SciFact} and \textsc{NFCorpus} where cutoff is $100$) on the test split of various datasets for individual systems and their fusion using \ac{rrf}~\cite{chen2022ecir} ($\eta=60$) and TM2C2 ($\alpha=0.8$ in $f_\textsc{Convex} = \alpha \textsc{Splade} + (1 - \alpha) \textsc{BM25}$). The symbols $\ddagger$ and $\ast$ indicate statistical significance ($p$-value $< 0.01$) with respect to TM2C2 and RRF respectively, according to a paired two-tailed $t$-test.}
\label{appendix:splade-bm25:table:rq1}
\begin{center}
\begin{sc}
\begin{tabular}{c|cccc}
& \multicolumn{4}{c}{NDCG} \\
\toprule
\small{Dataset} & \small{BM25} & \small{Splade} & \small{TM2C2} & \small{\ac{rrf}} \\
\midrule
\small{MS MARCO} & $0.309^{\ddagger\ast}$ & $\textbf{0.508}^{\ast}$ & 0.507 & $0.444^{\ddagger}$ \\
\midrule
\small{NQ} & $0.382^{\ddagger\ast}$ & $\textbf{0.591}^{\ddagger\ast}$ & 0.587 & $0.520^{\ddagger}$ \\
\small{Quora} & $0.798^{\ddagger\ast}$ & $0.853^{\ddagger}$ & \textbf{0.876} & $0.859^{\ddagger}$ \\
\small{NFCorpus} & $0.269^{\ddagger\ast}$ & $0.314^{\ast}$ & \textbf{0.317} & $0.304^{\ddagger}$ \\
\small{HotpotQA} & $0.682^{\ddagger\ast}$ & $0.727^{\ddagger\ast}$ & \textbf{0.751} & $0.737^{\ddagger}$ \\
\small{FEVER} & $0.689^{\ddagger\ast}$ & $0.806^{\ddagger\ast}$ & \textbf{0.825} & $0.786^{\ddagger}$ \\
\small{SciFact} & $0.698^{\ddagger\ast}$ & $0.723^{\ddagger\ast}$ & \textbf{0.740} & $0.732^{\ddagger}$ \\
\small{DBPedia} & $0.415^{\ddagger\ast}$ & $0.546^{\ddagger\ast}$ & \textbf{0.556} & $0.526^{\ddagger}$ \\
\small{FiQA} & $0.315^{\ddagger\ast}$ & $0.442^{\ast}$ & \textbf{0.446} & $0.406^{\ddagger}$ \\
\bottomrule
\end{tabular}
\end{sc}
\end{center}
\end{table*}

\begin{figure}[t]
\begin{center}
\centerline{
\subfloat[in-domain]{
\includegraphics[height=2in]{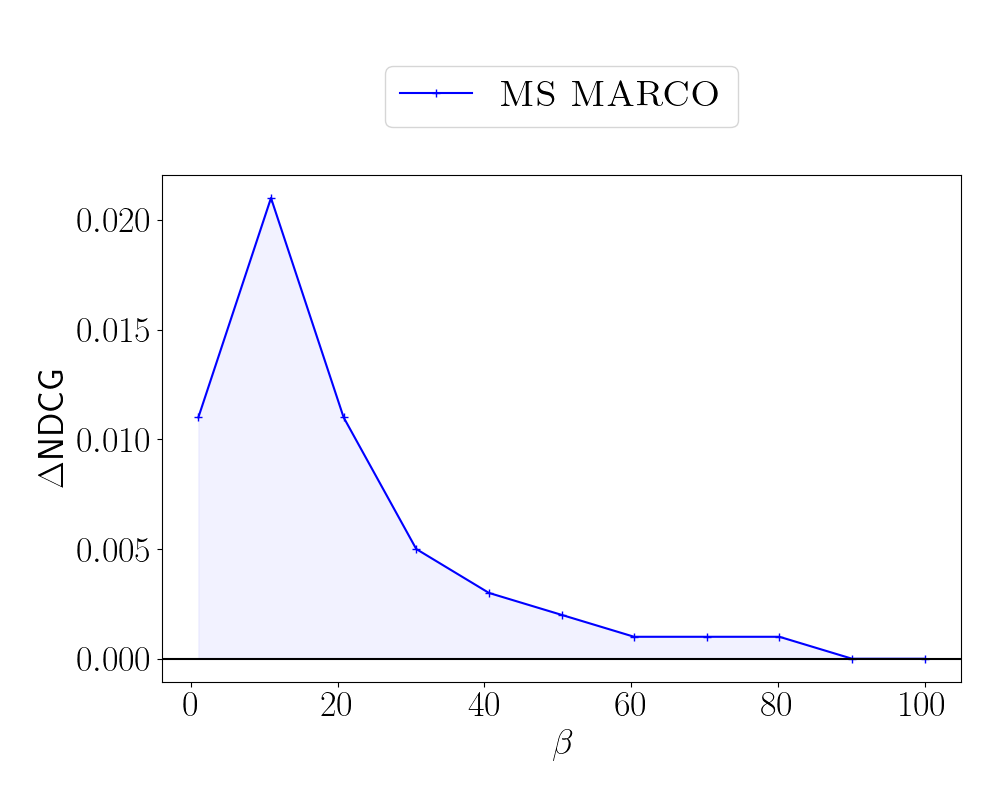}}
\subfloat[out-of-domain]{
\includegraphics[height=2in]{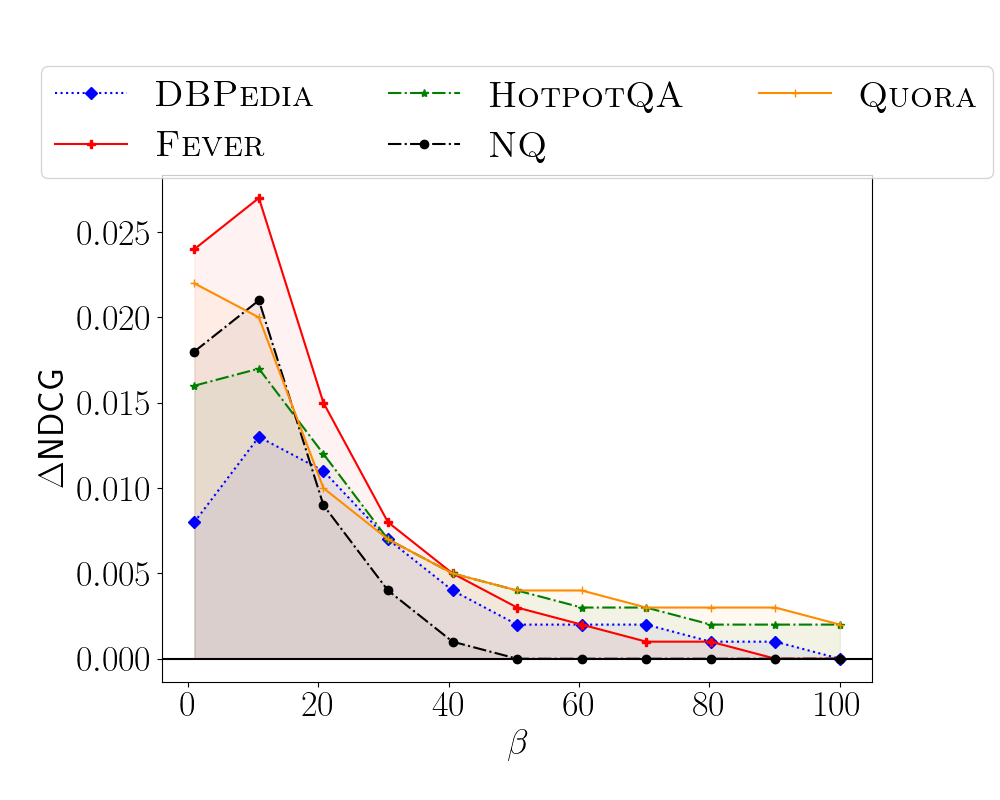}}
}
\caption{The difference in NDCG@1000 of $f_\textsc{SRRF}$ and $f_\textsc{RRF}$ with $\eta=60$ (positive indicates better ranking quality by \textsc{SRRF}) as a function of $\beta$.}
\label{appendix:splade-bm25:figure:lipschitz}
\end{center}
\end{figure}

\FloatBarrier

\section{Fusion of \textsc{Tas-B} and BM25}
\label{appendix:tasb-bm25}

\begin{figure}[h]
\begin{center}
\centerline{
\subfloat[MS MARCO]{
\includegraphics[width=0.36\linewidth,height=4cm]{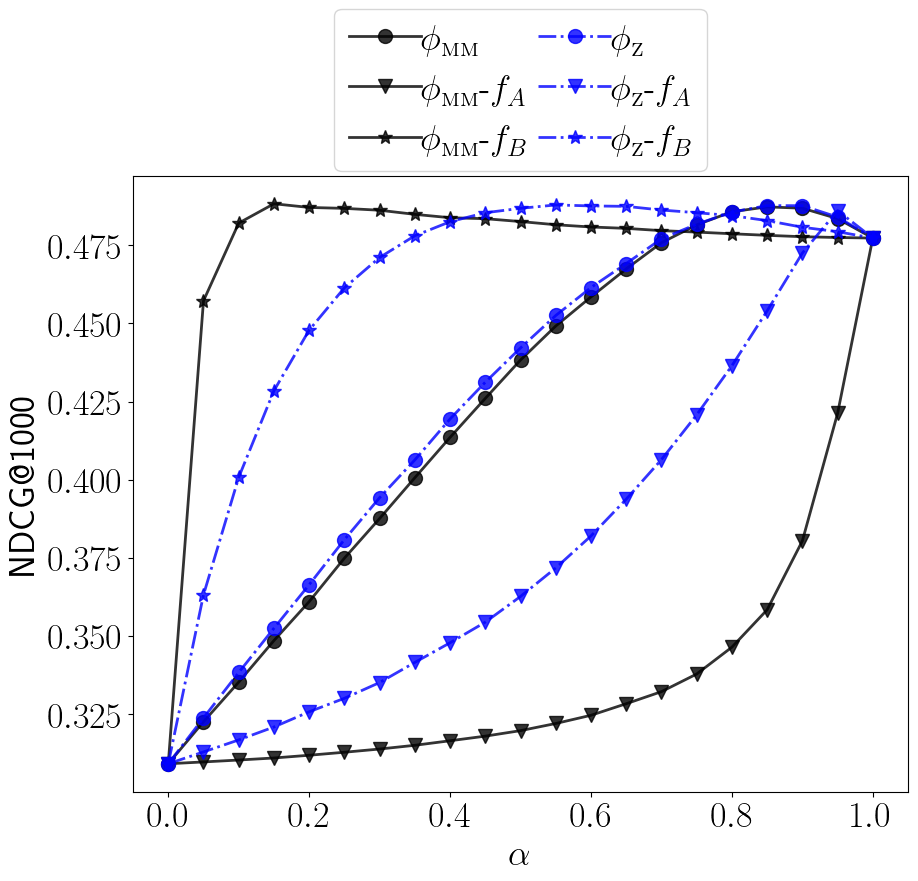}
\includegraphics[width=0.36\linewidth,height=4cm]{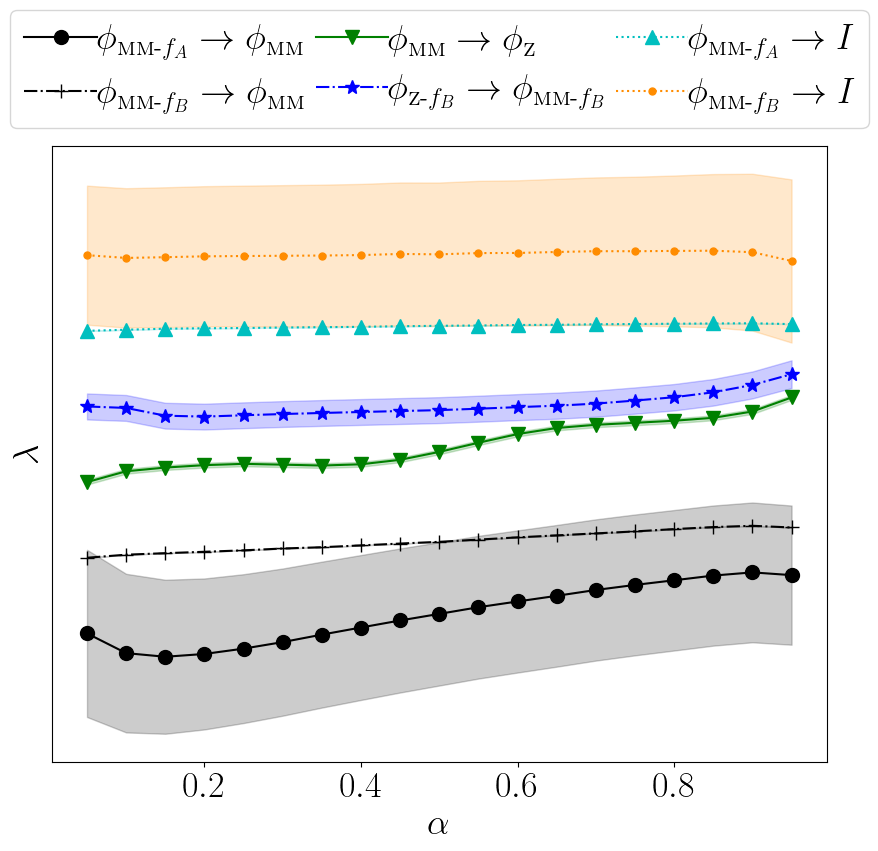}}
}
\centerline{
\subfloat[\textsc{Quora}]{
\includegraphics[trim={0 0 0 4cm},clip,width=0.36\linewidth,height=3cm]{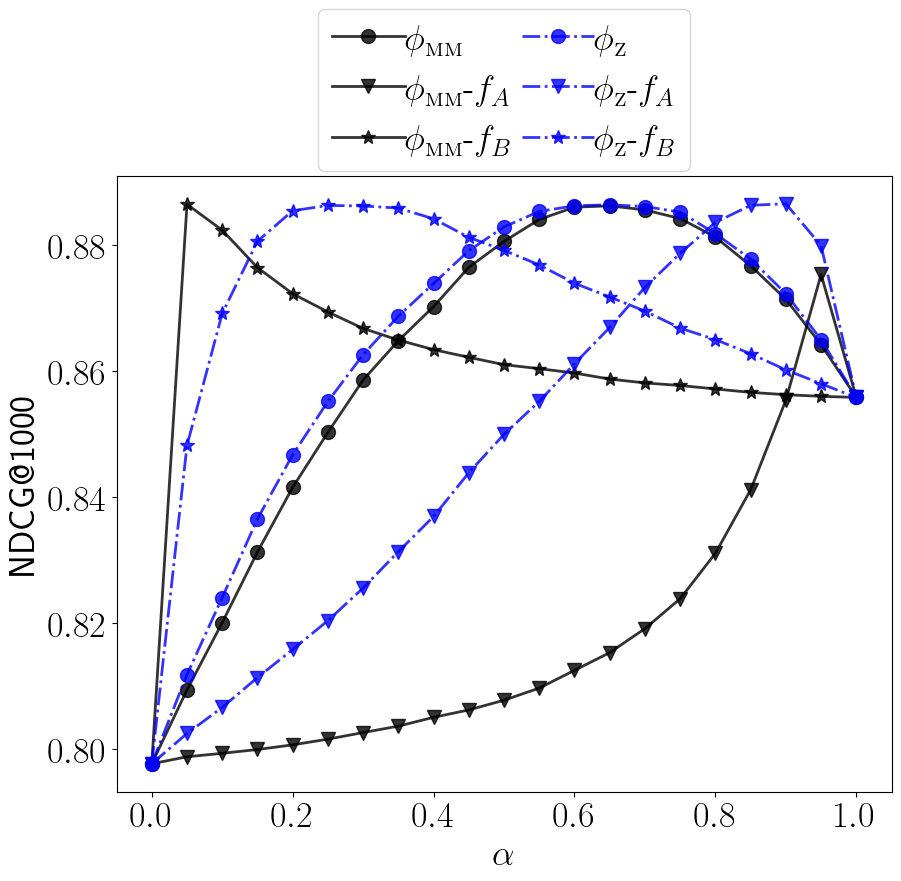}
\includegraphics[trim={0 0 0 3.6cm},clip,width=0.36\linewidth,height=3cm]{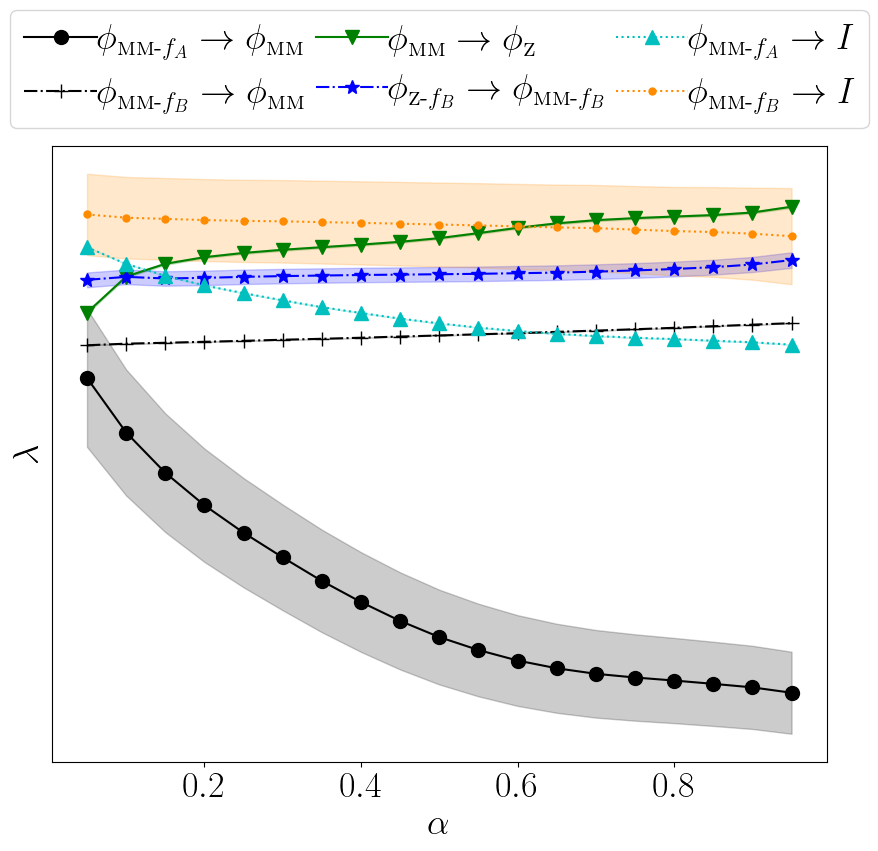}}
}
\centerline{
\subfloat[\textsc{HotpotQA}]{
\includegraphics[trim={0 0 0 4cm},clip,width=0.36\linewidth,height=3cm]{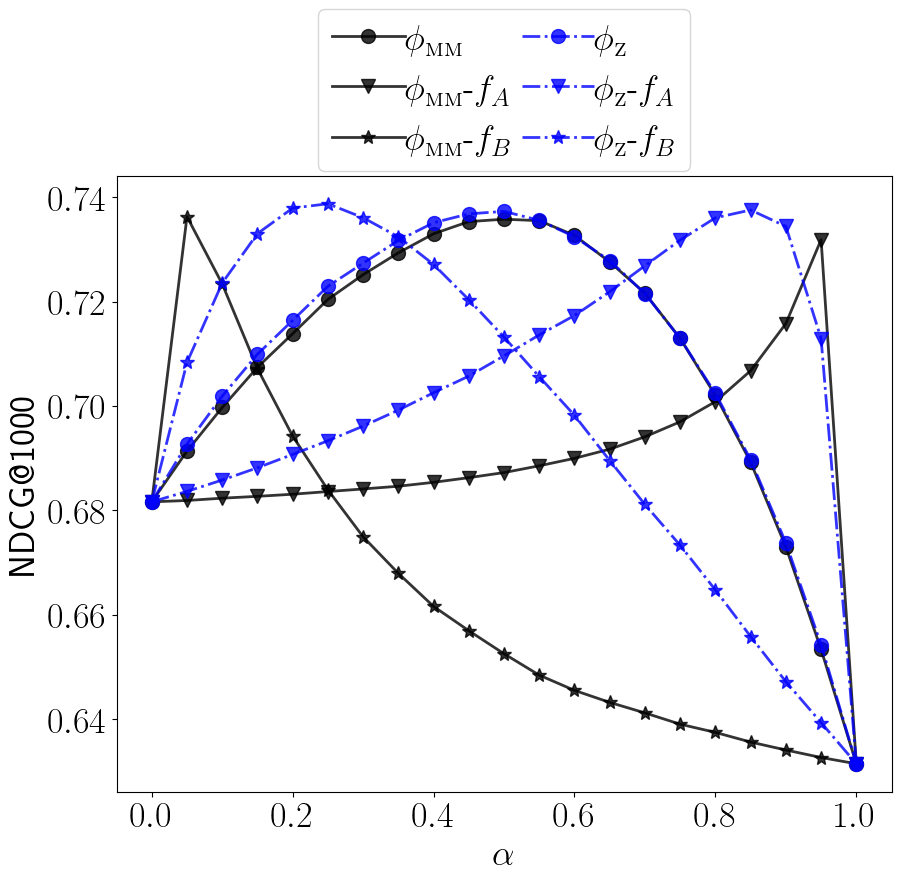}
\includegraphics[trim={0 0 0 3.6cm},clip,width=0.36\linewidth,height=3cm]{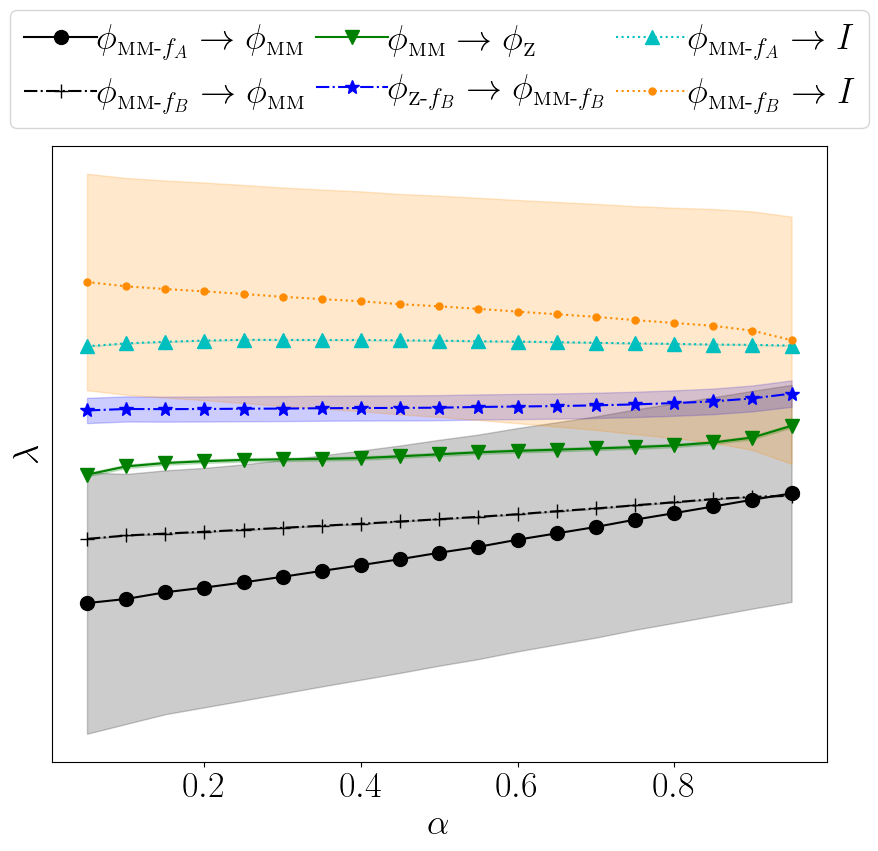}}
}
\centerline{
\subfloat[\textsc{FiQA}]{
\includegraphics[trim={0 0 0 4cm},clip,width=0.36\linewidth,height=3cm]{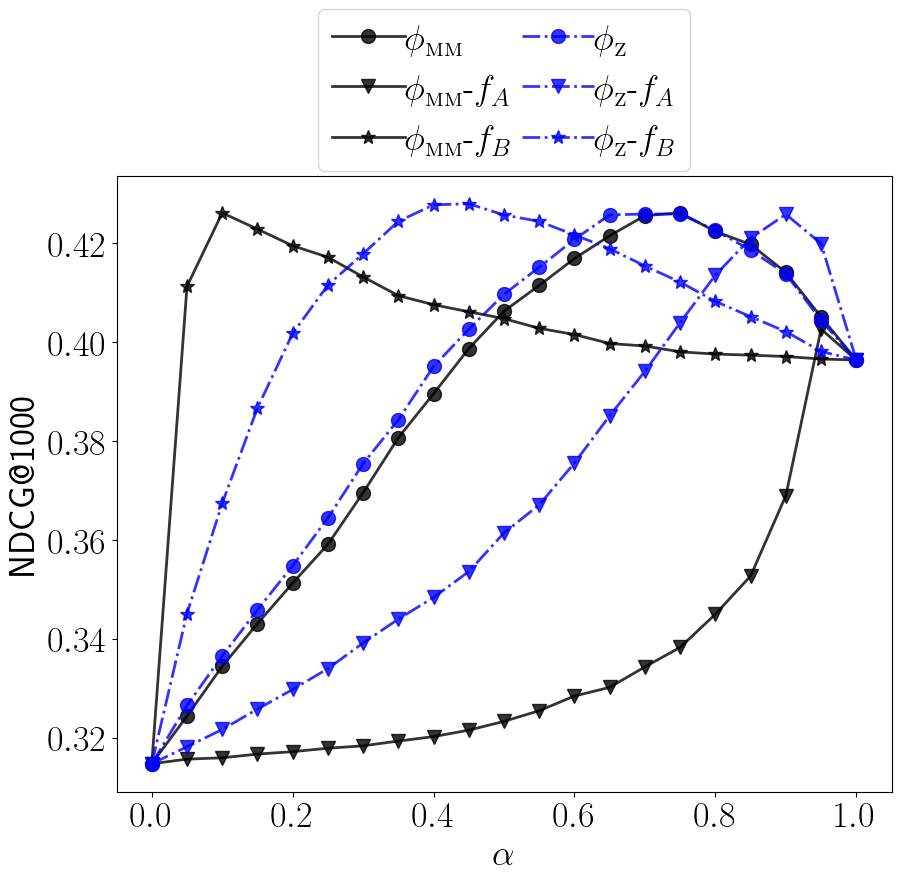}
\includegraphics[trim={0 0 0 3.6cm},clip,width=0.36\linewidth,height=3cm]{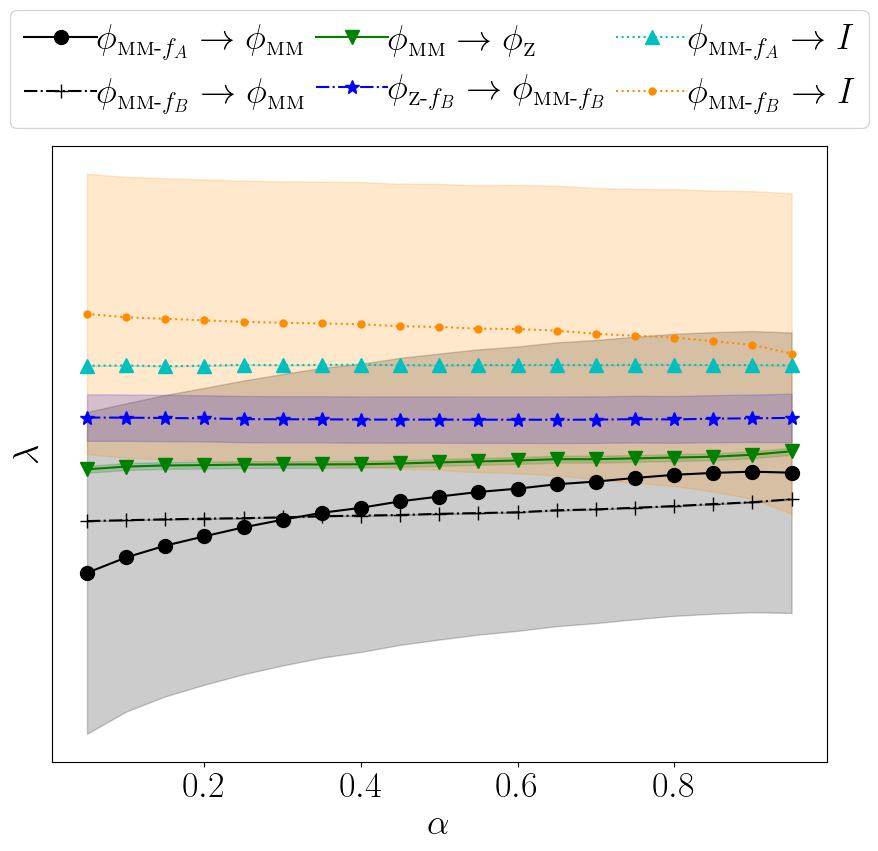}}
}
\caption{Effect of normalization on $f_\textsc{Convex} = \alpha f_A + (1 - \alpha) f_B$,
where $f_A$ is the \textsc{Tas-B} function and $f_B$ is BM25, as a function of $\alpha$ (left);
and, the relative expansion rate of \textsc{Tas-B} scores with respect to BM25 scores (i.e., $\lambda$ in Definition~\ref{definition:growth}), with $95\%$ confidence intervals (right). $\phi\text{-}f_o$ indicates that the normalization function $\phi$ was only applied to $f_o$, with the other function entering fusion without normalization.}
\label{appendix:tasb-bm25:figure:convex:normalization}
\end{center}
\end{figure}
\FloatBarrier

\begin{table*}[t]
\caption{NDCG@1000 (except \textsc{SciFact} and \textsc{NFCorpus} where cutoff is $100$) on the test split of various datasets for individual systems and their fusion using \ac{rrf}~\cite{chen2022ecir} ($\eta=60$) and M2C2 ($\alpha=0.8$ in $\alpha \textsc{Tas-B} + (1 - \alpha)\textsc{BM25}$). The symbols $\ddagger$ and $\ast$ indicate statistical significance ($p$-value $< 0.01$) with respect to M2C2 and RRF respectively, according to a paired two-tailed $t$-test.}
\label{appendix:tasb-bm25:table:rq1}
\begin{center}
\begin{sc}
\begin{tabular}{c|cccc}
& \multicolumn{4}{c}{NDCG} \\
\toprule
\small{Dataset} & \small{BM25} & \small{Tas-B} & \small{M2C2} & \small{\ac{rrf}} \\
\midrule
\small{MS MARCO} & $0.309^{\ddagger\ast}$ & $0.477^{\ddagger\ast}$ & \textbf{0.486} & $0.434^{\ddagger}$ \\
\midrule
\small{NQ} & $0.382^{\ddagger\ast}$ & $0.522^{\ddagger\ast}$ & \textbf{0.552} & $0.513^{\ddagger}$ \\
\small{Quora} & $0.798^{\ddagger\ast}$ & $0.856^{\ddagger}$ & \textbf{0.881} & $0.860^{\ddagger}$ \\
\small{NFCorpus} & $0.269^{\ddagger\ast}$ & $0.292^{\ddagger}$ & \textbf{0.307} & $0.299$ \\
\small{HotpotQA} & $0.682^{\ddagger\ast}$ & $0.631^{\ddagger\ast}$ & $0.702$ & $\textbf{0.711}^{\ddagger}$ \\
\small{FEVER} & $0.689^{\ddagger\ast}$ & $0.725^{\ddagger\ast}$ & \textbf{0.785} & $0.776^{\ddagger}$ \\
\small{SciFact} & $0.698^{\ddagger\ast}$ & $0.670^{\ddagger\ast}$ & $0.715$ & $\textbf{0.719}$ \\
\small{DBPedia} & $0.415^{\ddagger\ast}$ & $0.495^{\ddagger\ast}$ & \textbf{0.533} & $0.512^{\ddagger}$ \\
\small{FiQA} & $0.315^{\ddagger\ast}$ & $0.396^{\ddagger\ast}$ & \textbf{0.422} & $0.401^{\ddagger}$ \\
\bottomrule
\end{tabular}
\end{sc}
\end{center}
\end{table*}

\begin{figure}[t]
\begin{center}
\centerline{
\subfloat[in-domain]{
\includegraphics[height=2in]{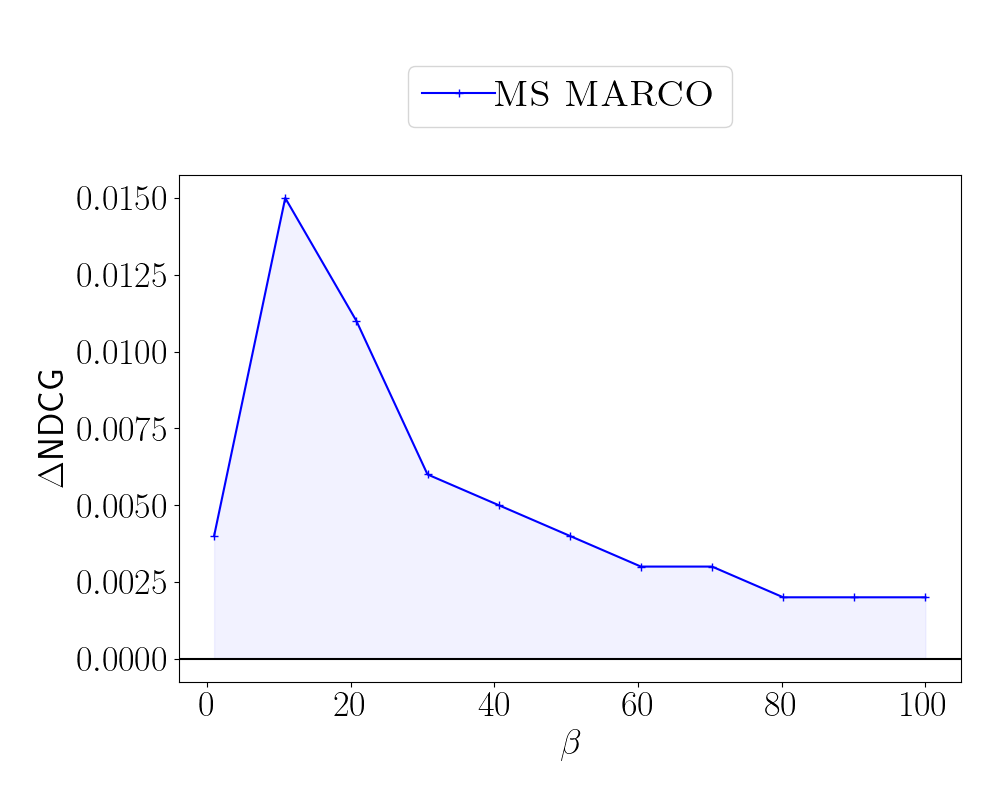}}
\subfloat[out-of-domain]{
\includegraphics[height=2in]{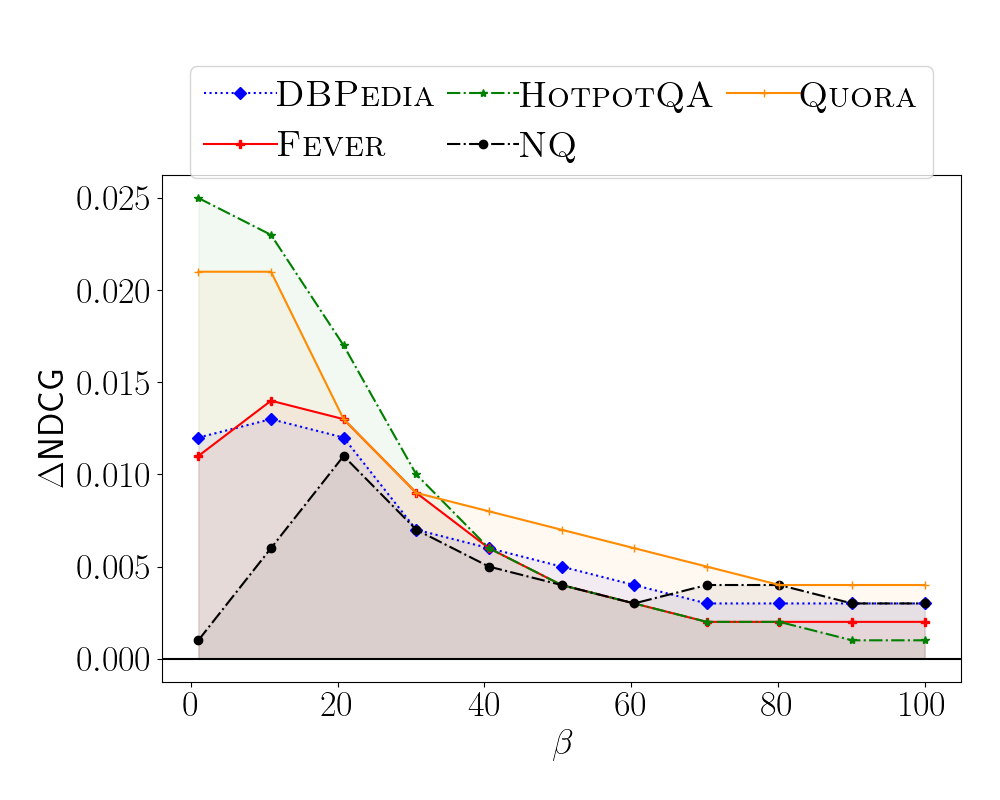}}
}
\caption{The difference in NDCG@1000 of $f_\textsc{SRRF}$ and $f_\textsc{RRF}$ with $\eta=60$ (positive indicates better ranking quality by \textsc{SRRF}) as a function of $\beta$.}
\label{appendix:tasb-bm25:figure:lipschitz}
\end{center}
\end{figure}

\FloatBarrier

\section{Fusion of \textsc{Tas-B} and \textsc{Splade}}
\label{appendix:tasb-splade}

\begin{figure}[h]
\begin{center}
\centerline{
\subfloat[MS MARCO]{
\includegraphics[width=0.36\linewidth,height=4cm]{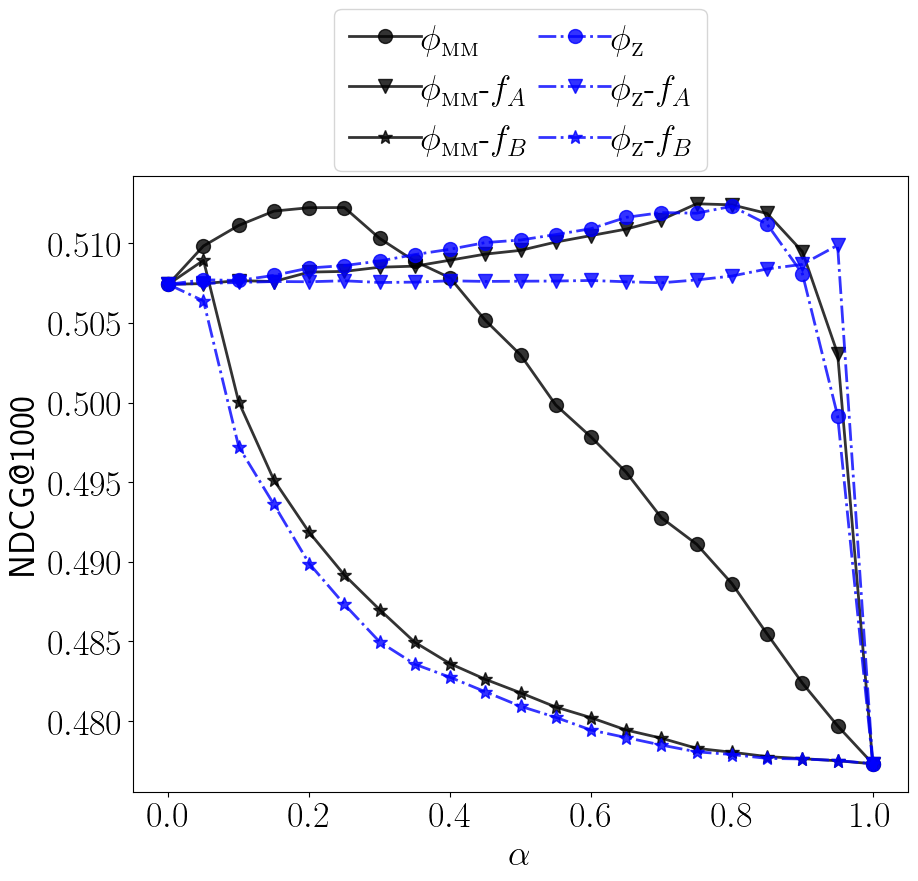}
\includegraphics[width=0.36\linewidth,height=4cm]{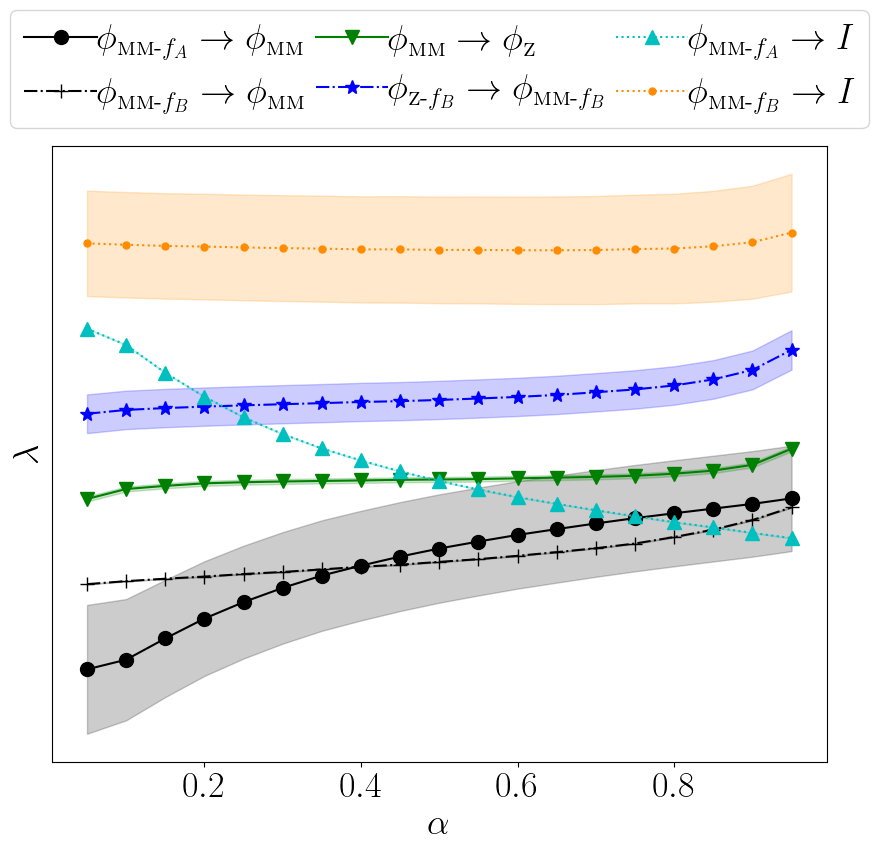}}
}
\centerline{
\subfloat[\textsc{Quora}]{
\includegraphics[trim={0 0 0 4cm},clip,width=0.36\linewidth,height=3cm]{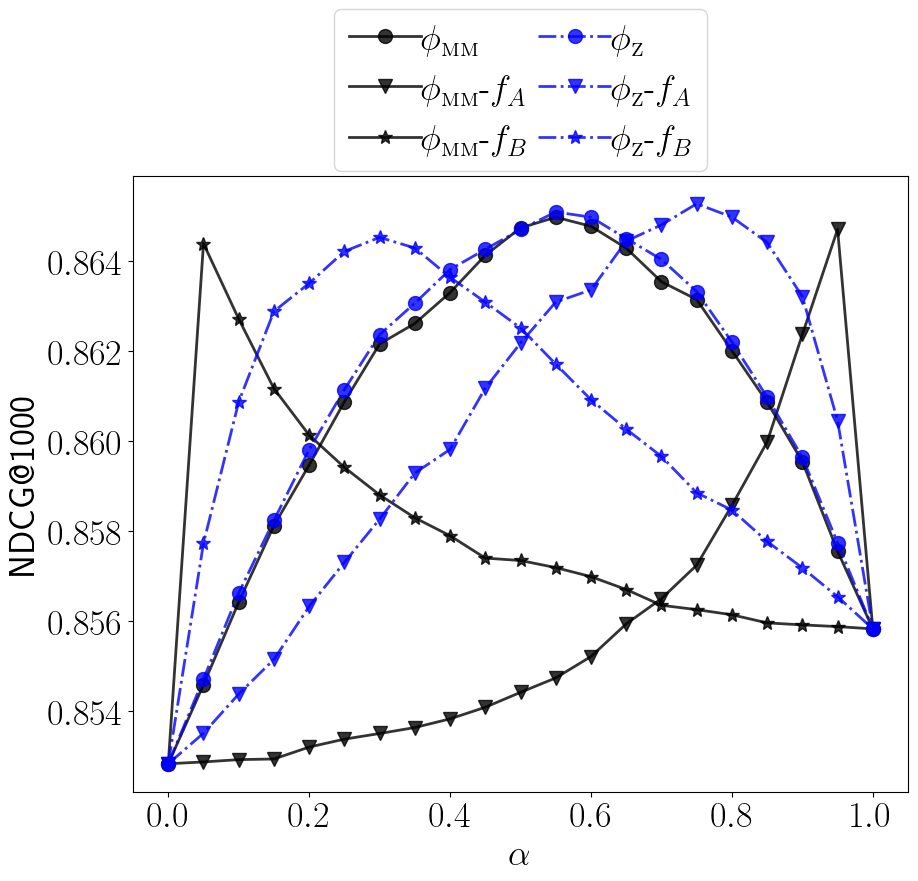}
\includegraphics[trim={0 0 0 3.6cm},clip,width=0.36\linewidth,height=3cm]{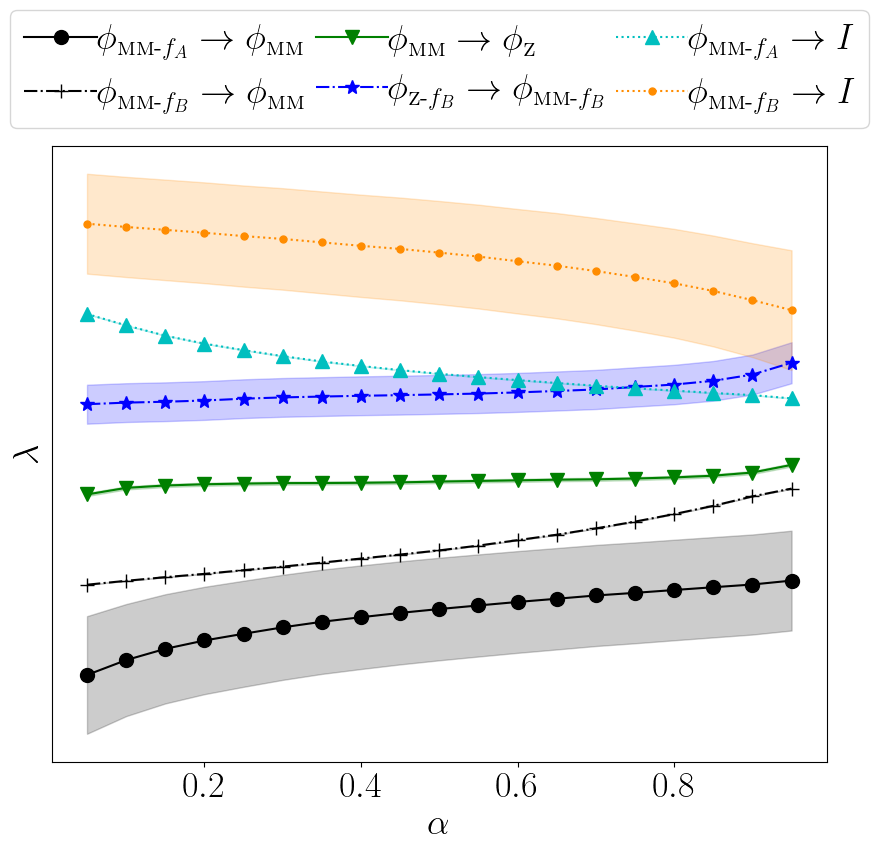}}
}
\centerline{
\subfloat[\textsc{HotpotQA}]{
\includegraphics[trim={0 0 0 4cm},clip,width=0.36\linewidth,height=3cm]{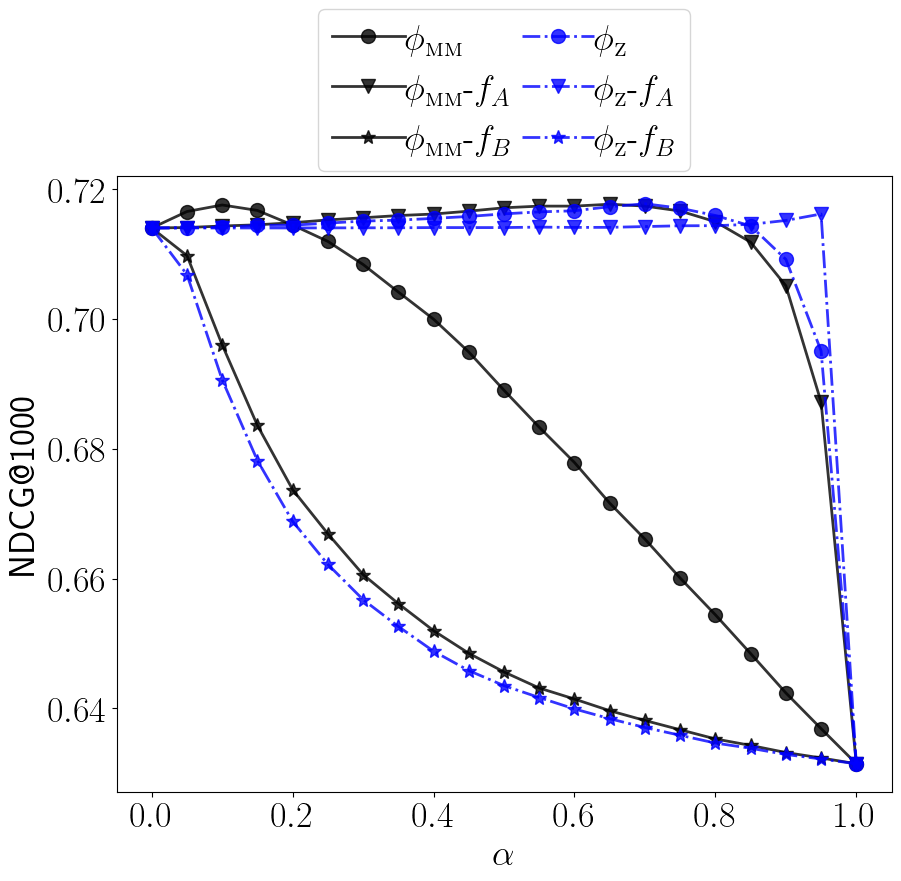}
\includegraphics[trim={0 0 0 3.6cm},clip,width=0.36\linewidth,height=3cm]{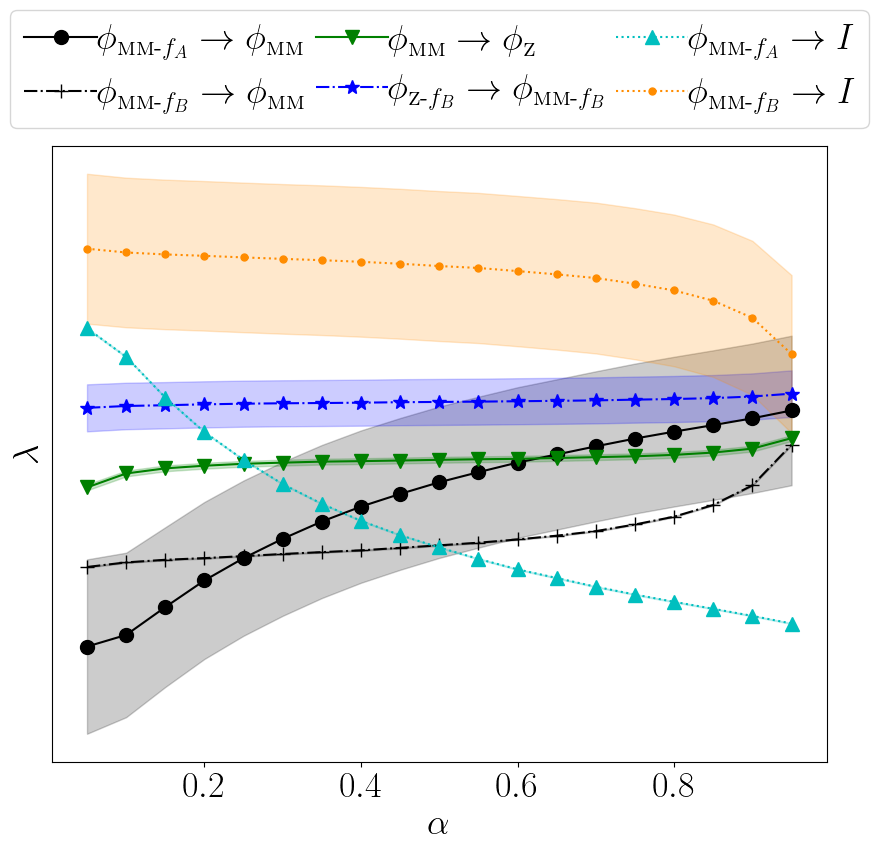}}
}
\centerline{
\subfloat[\textsc{FiQA}]{
\includegraphics[trim={0 0 0 4cm},clip,width=0.36\linewidth,height=3cm]{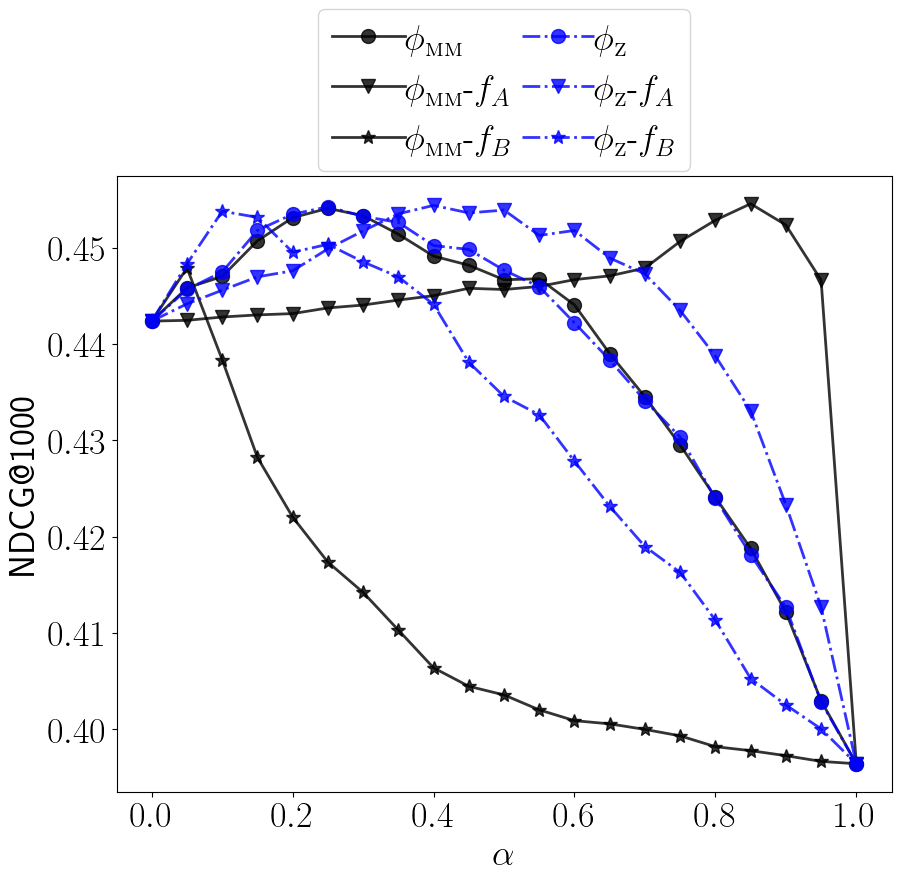}
\includegraphics[trim={0 0 0 3.6cm},clip,width=0.36\linewidth,height=3cm]{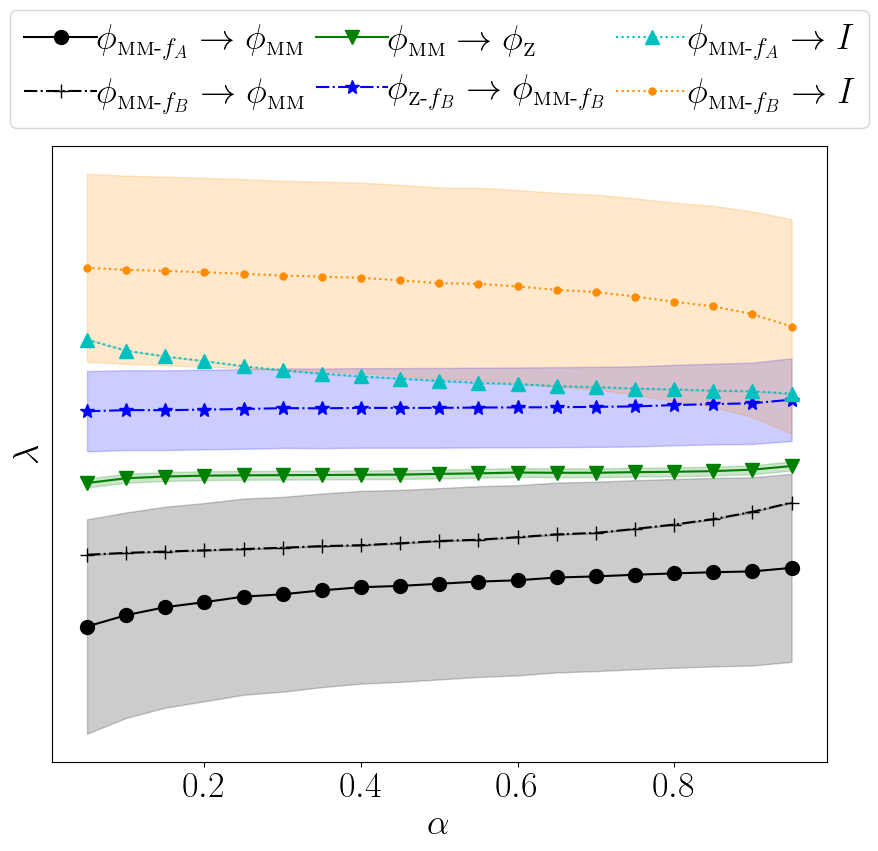}}
}
\caption{Effect of normalization on $f_\textsc{Convex} = \alpha f_A + (1 - \alpha) f_B$,
where $f_A$ is the \textsc{Tas-B} function and $f_B$ is \textsc{Splade}, as a function of $\alpha$ (left);
and, the relative expansion rate of \textsc{Tas-B} scores with respect to \textsc{Splade} scores (i.e., $\lambda$ in Definition~\ref{definition:growth}), with $95\%$ confidence intervals (right). $\phi\text{-}f_o$ indicates that the normalization function $\phi$ was only applied to $f_o$, with the other function entering fusion without normalization.}
\label{appendix:tasb-splade:figure:convex:normalization}
\end{center}
\end{figure}
\FloatBarrier

\begin{table*}[t]
\caption{NDCG@1000 (except \textsc{SciFact} and \textsc{NFCorpus} where cutoff is $100$) on the test split of various datasets for individual systems and their fusion using \ac{rrf}~\cite{chen2022ecir} ($\eta=60$) and M2C2 ($\alpha=0.2$ in $\alpha \textsc{Tas-B} + (1 - \alpha) \textsc{Splade}$). The symbols $\ddagger$ and $\ast$ indicate statistical significance ($p$-value $< 0.01$) with respect to M2C2 and RRF respectively, according to a paired two-tailed $t$-test.}
\label{appendix:tasb-splade:table:rq1}
\begin{center}
\begin{sc}
\begin{tabular}{c|cccc}
& \multicolumn{4}{c}{NDCG} \\
\toprule
\small{Dataset} & \small{Splade} & \small{Tas-B} & \small{M2C2} & \small{\ac{rrf}} \\
\midrule
\small{MS MARCO} & $0.508$ & $0.477^{\ddagger\ast}$ & \textbf{0.512} & $0.504^{\ddagger}$ \\
\midrule
\small{NQ} & $0.589$ & $0.522^{\ddagger\ast}$ & \textbf{0.593} & $0.581^{\ddagger}$ \\
\small{Quora} & $0.853$ & $0.856$ & \textbf{0.859} & $0.857$ \\
\small{NFCorpus} & $0.313$ & $0.292^{\ddagger\ast}$ & \textbf{0.318} & $0.314$ \\
\small{HotpotQA} & $0.727^{\ast}$ & $0.631^{\ddagger\ast}$ & $\textbf{0.728}$ & $0.686^{\ddagger}$ \\
\small{FEVER} & $0.806^{\ast}$ & $0.725^{\ddagger\ast}$ & $\textbf{0.811}$ & $0.795^{\ddagger}$ \\
\small{SciFact} & $0.723^{\ast}$ & $0.670^{\ddagger\ast}$ & \textbf{0.730} & $0.709^{\ddagger}$ \\
\small{DBPedia} & $0.546^{\ddagger}$ & $0.495^{\ddagger\ast}$ & $\textbf{0.555}$ & $0.545^{\ddagger}$ \\
\small{FiQA} & $0.442^{\ddagger}$ & $0.396^{\ddagger\ast}$ & $\textbf{0.453}$ & $0.440^{\ddagger}$ \\
\bottomrule
\end{tabular}
\end{sc}
\end{center}
\end{table*}

\begin{figure}[t]
\begin{center}
\centerline{
\subfloat[in-domain]{
\includegraphics[height=2in]{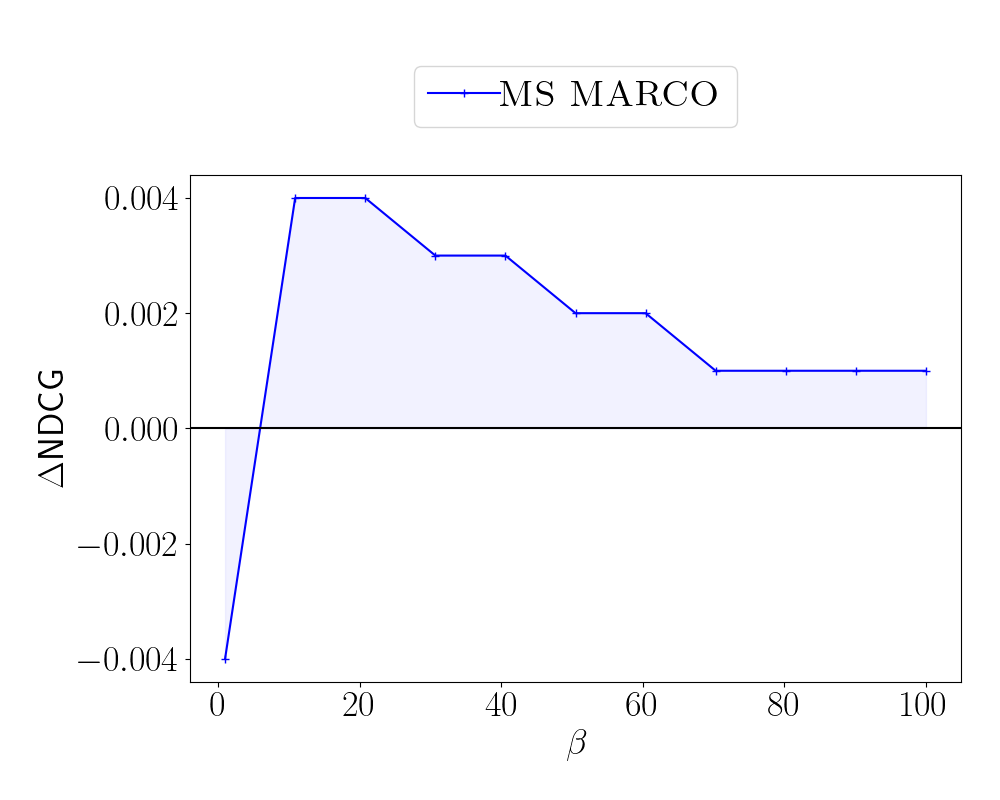}}
\subfloat[out-of-domain]{
\includegraphics[height=2in]{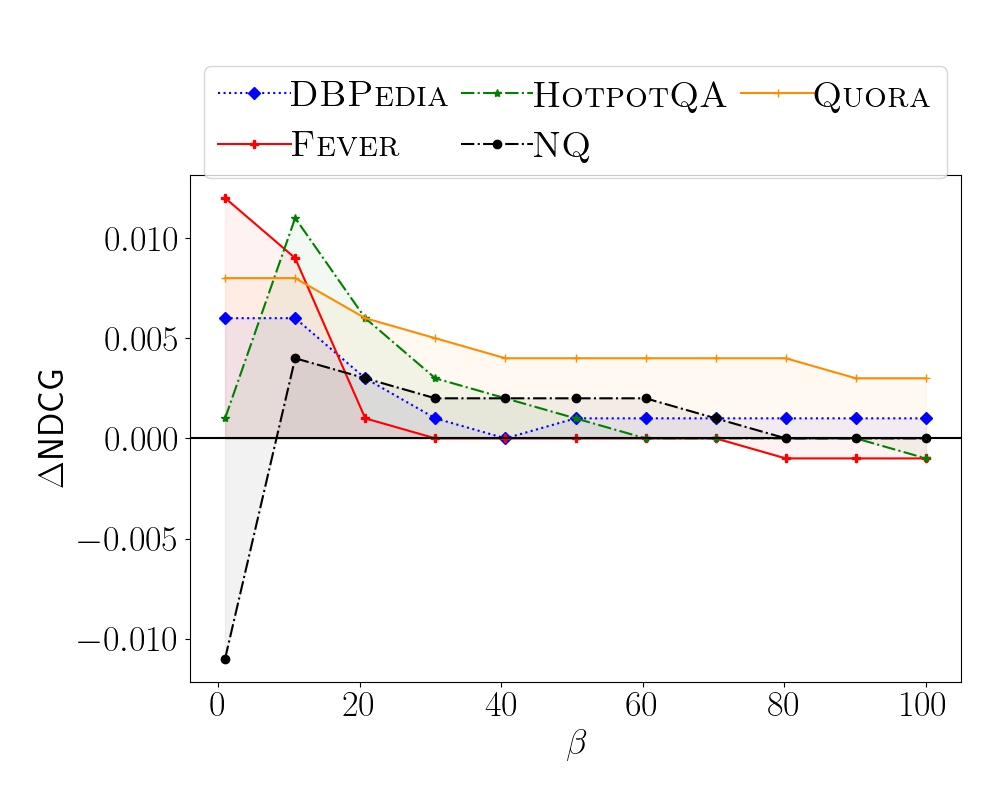}}
}
\caption{The difference in NDCG@1000 of $f_\textsc{SRRF}$ and $f_\textsc{RRF}$ with $\eta=60$ (positive indicates better ranking quality by \textsc{SRRF}) as a function of $\beta$.}
\label{appendix:tasb-splade:figure:lipschitz}
\end{center}
\end{figure}

\FloatBarrier

\section{Fusion of \textsc{Tas-B} and \textsc{All-MiniLM-L6-v2}}
\label{appendix:tasb-minilm}

\begin{figure}[h]
\begin{center}
\centerline{
\subfloat[MS MARCO]{
\includegraphics[width=0.36\linewidth,height=4cm]{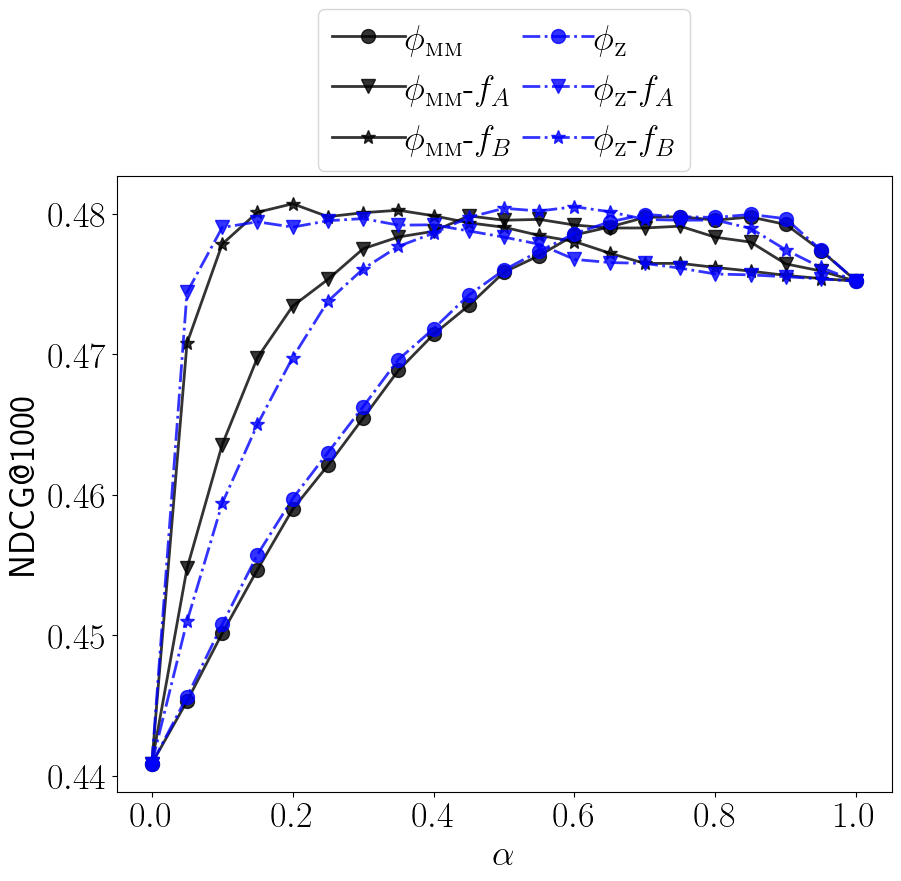}
\includegraphics[width=0.36\linewidth,height=4cm]{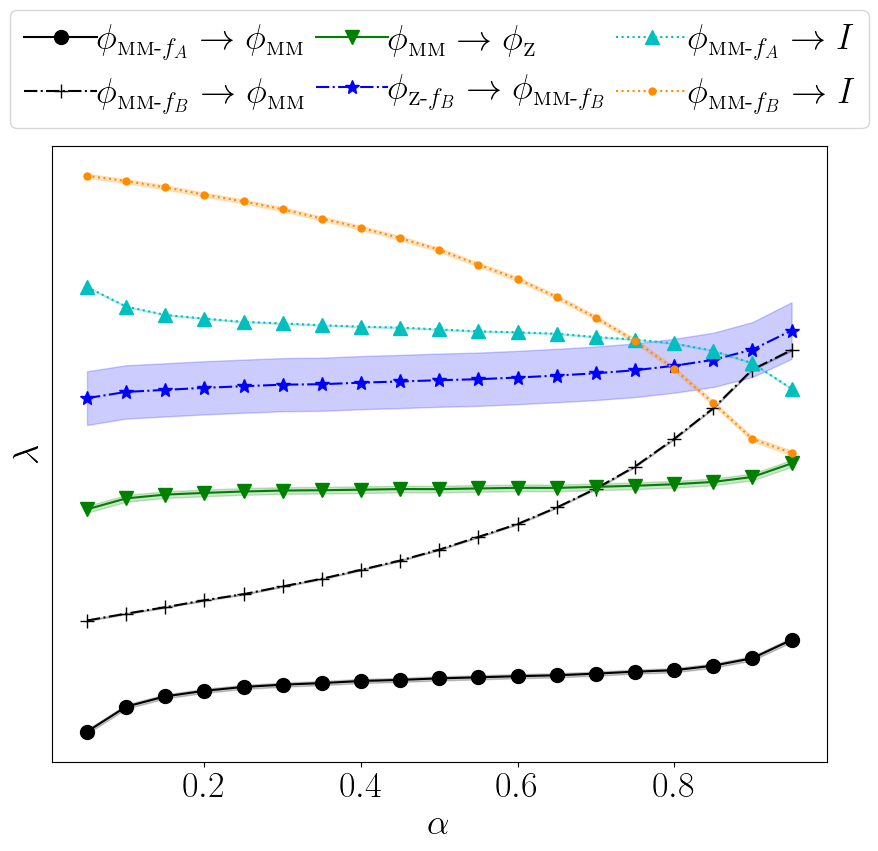}}
}
\centerline{
\subfloat[\textsc{Quora}]{
\includegraphics[trim={0 0 0 4cm},clip,width=0.36\linewidth,height=3cm]{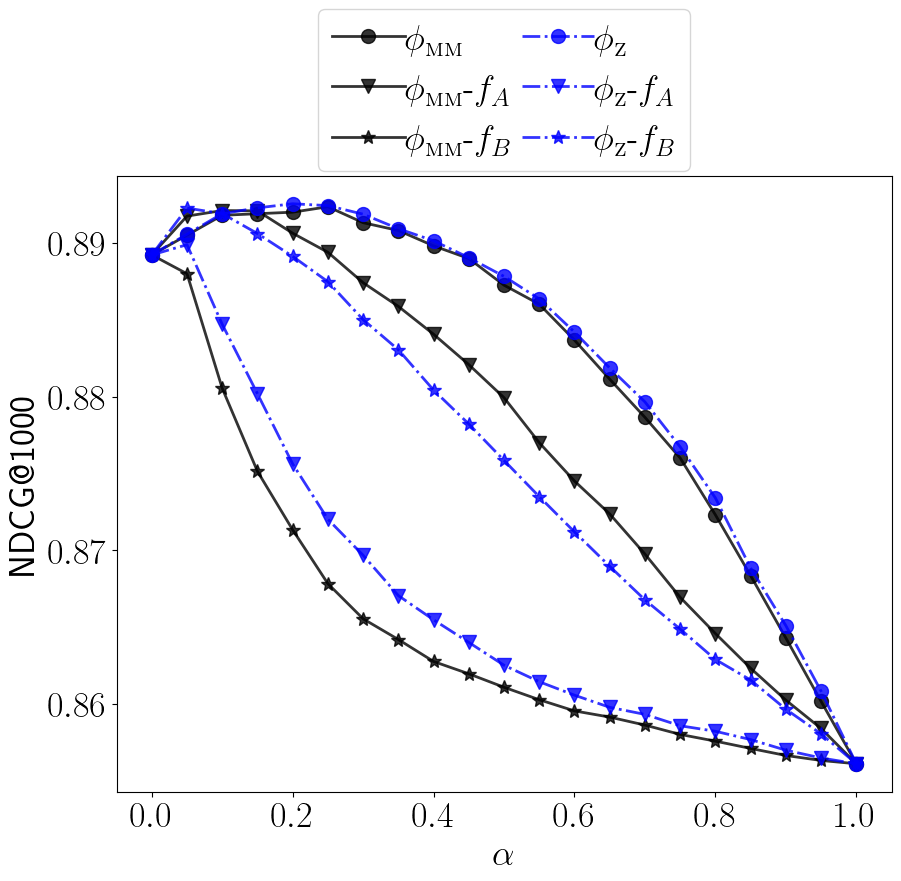}
\includegraphics[trim={0 0 0 3.6cm},clip,width=0.36\linewidth,height=3cm]{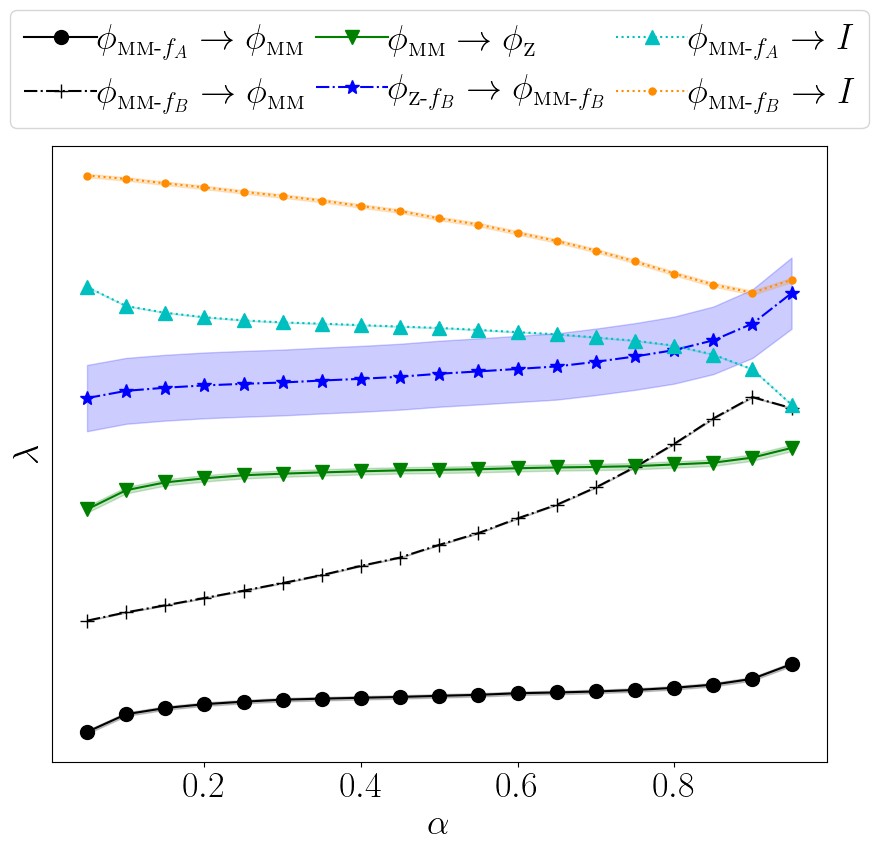}}
}
\centerline{
\subfloat[\textsc{HotpotQA}]{
\includegraphics[trim={0 0 0 4cm},clip,width=0.36\linewidth,height=3cm]{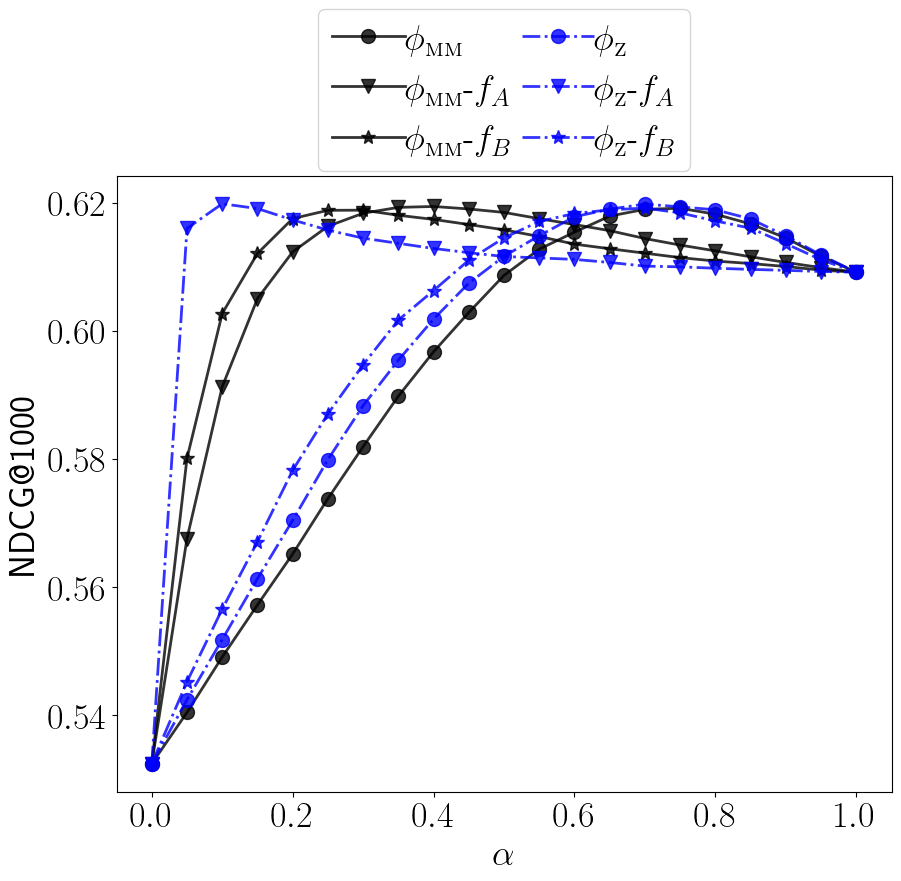}
\includegraphics[trim={0 0 0 3.6cm},clip,width=0.36\linewidth,height=3cm]{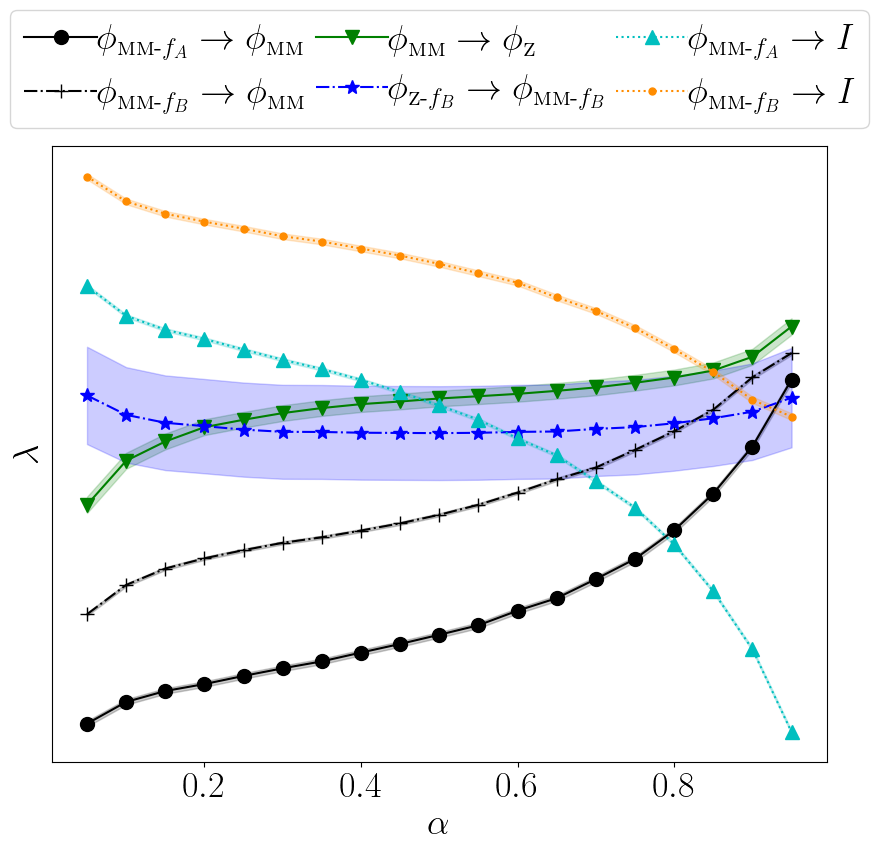}}
}
\centerline{
\subfloat[\textsc{FiQA}]{
\includegraphics[trim={0 0 0 4cm},clip,width=0.36\linewidth,height=3cm]{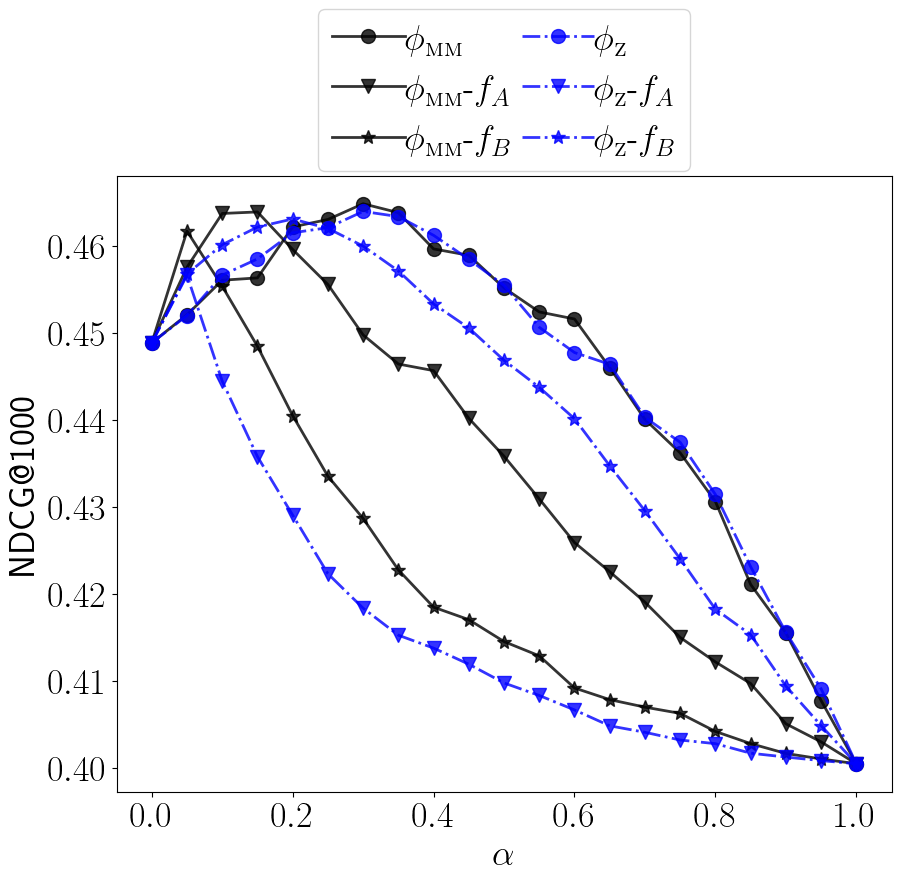}
\includegraphics[trim={0 0 0 3.6cm},clip,width=0.36\linewidth,height=3cm]{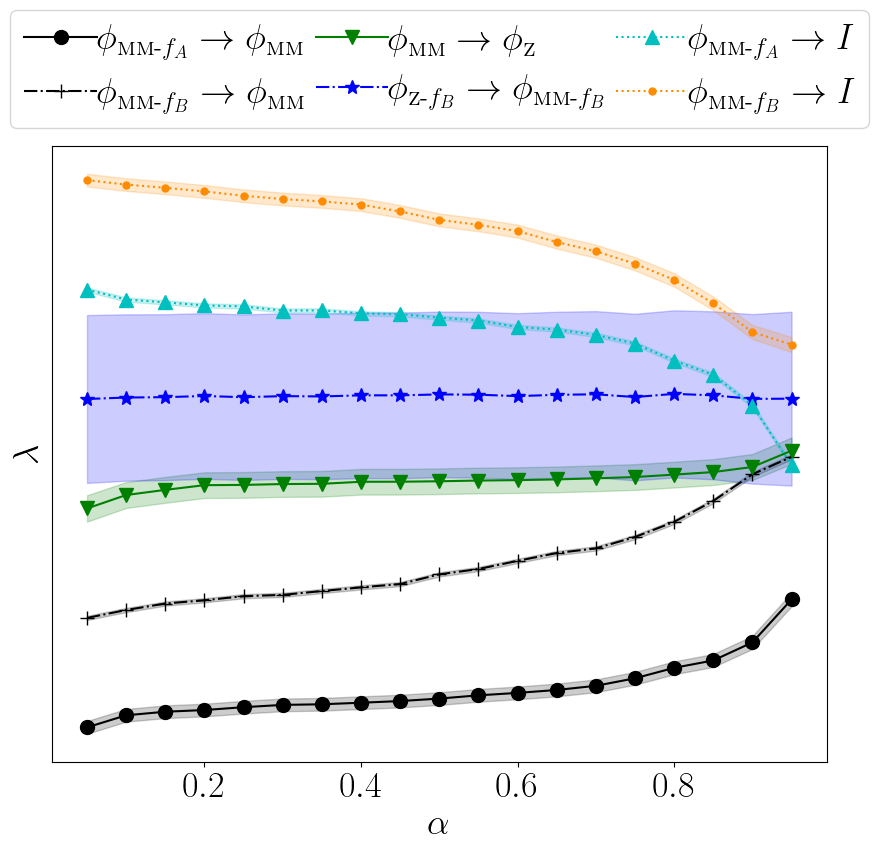}}
}
\caption{Effect of normalization on $f_\textsc{Convex} = \alpha f_A + (1 - \alpha) f_B$,
where $f_A$ is \textsc{Tas-B} and $f_B$ is \textsc{All-MiniLM-l6-v2}, as a function of $\alpha$ (left);
and, the relative expansion rate of \textsc{Tas-B} scores with respect to \textsc{All-MiniLM-l6-v2} scores (i.e., $\lambda$ in Definition~\ref{definition:growth}), with $95\%$ confidence intervals (right). $\phi\text{-}f_o$ indicates that the normalization function $\phi$ was only applied to $f_o$, with the other function entering fusion without normalization.}
\label{appendix:tasb-minilm:figure:convex:normalization}
\end{center}
\end{figure}
\FloatBarrier

\begin{table*}[t]
\caption{NDCG@1000 (except \textsc{SciFact} and \textsc{NFCorpus} where cutoff is $100$) on the test split of various datasets for individual systems and their fusion using \ac{rrf}~\cite{chen2022ecir} ($\eta=60$) and M2C2 ($\alpha=0.8$ in $\alpha \textsc{Tas-B} + (1 - \alpha) \textsc{All-MiniLM-l6-v2}$). The symbols $\ddagger$ and $\ast$ indicate statistical significance ($p$-value $< 0.01$) with respect to M2C2 and RRF respectively, according to a paired two-tailed $t$-test.}
\label{appendix:tasb-minilm:table:rq1}
\begin{center}
\begin{sc}
\begin{tabular}{c|cccc}
& \multicolumn{4}{c}{NDCG} \\
\toprule
\small{Dataset} & \small{All-MiniLM-l6-v2} & \small{Tas-B} & \small{M2C2} & \small{\ac{rrf}} \\
\midrule
\small{MS MARCO} & $0.441^{\ddagger\ast}$ & $0.477^{\ddagger}$ & \textbf{0.483} & $0.474^{\ddagger}$ \\
\midrule
\small{NQ} & $0.505^{\ddagger\ast}$ & $\textbf{0.522}^{\ddagger\ast}$ & $0.547$ & $\textbf{0.550}$ \\
\small{Quora} & $\textbf{0.889}^{\ddagger\ast}$ & $0.856^{\ddagger}$ & $0.872$ & $0.877$ \\
\small{NFCorpus} & $0.296^{\ddagger\ast}$ & $0.292^{\ddagger\ast}$ & 0.307 & $\textbf{0.316}^{\ddagger}$ \\
\small{HotpotQA} & $0.520^{\ddagger\ast}$ & $0.631^{\ddagger\ast}$ & \textbf{0.646} & $0.610^{\ddagger}$ \\
\small{FEVER} & $0.558^{\ddagger\ast}$ & $0.725^{\ast}$ & \textbf{0.730} & $0.669^{\ddagger}$ \\
\small{SciFact} & $0.681^{\ddagger\ast}$ & $0.670^{\ddagger\ast}$ & $0.704$ & $\textbf{0.712}$ \\
\small{DBPedia} & $0.425^{\ddagger\ast}$ & $0.495^{\ddagger}$ & \textbf{0.510} & $0.496^{\ddagger}$ \\
\small{FiQA} & $\textbf{0.467}^{\ddagger}$ & $0.396^{\ddagger\ast}$ & $0.438$ & $0.467^{\ddagger}$ \\
\bottomrule
\end{tabular}
\end{sc}
\end{center}
\end{table*}

\begin{figure}[t]
\begin{center}
\centerline{
\subfloat[in-domain]{
\includegraphics[height=2in]{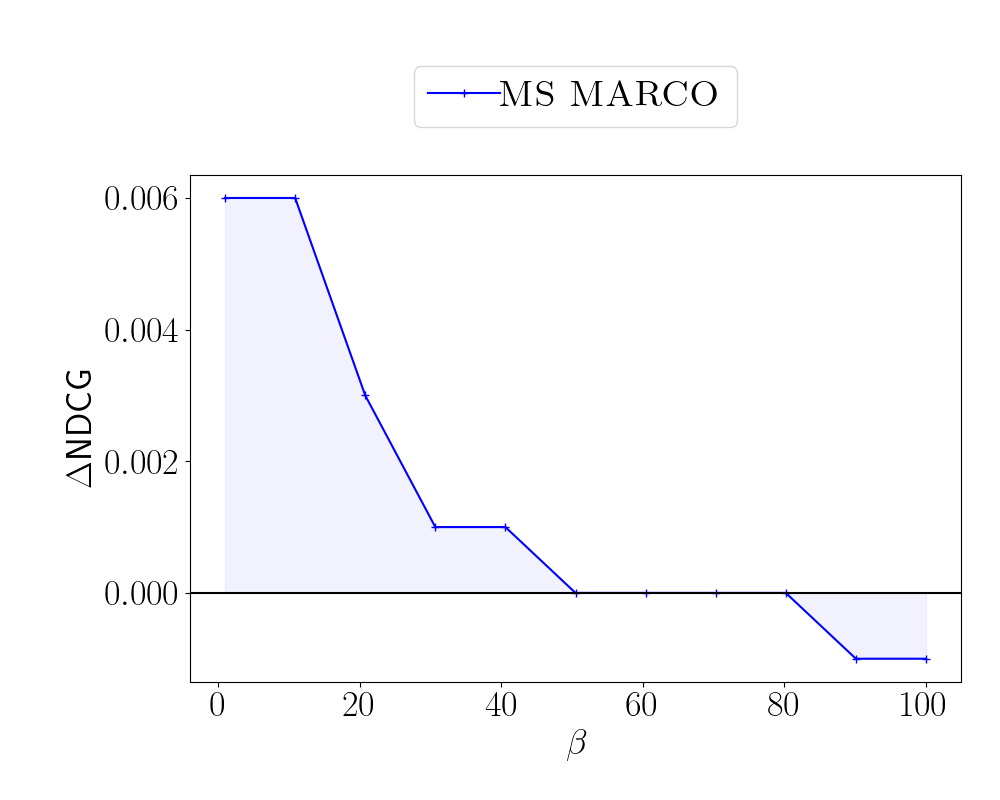}}
\subfloat[out-of-domain]{
\includegraphics[height=2in]{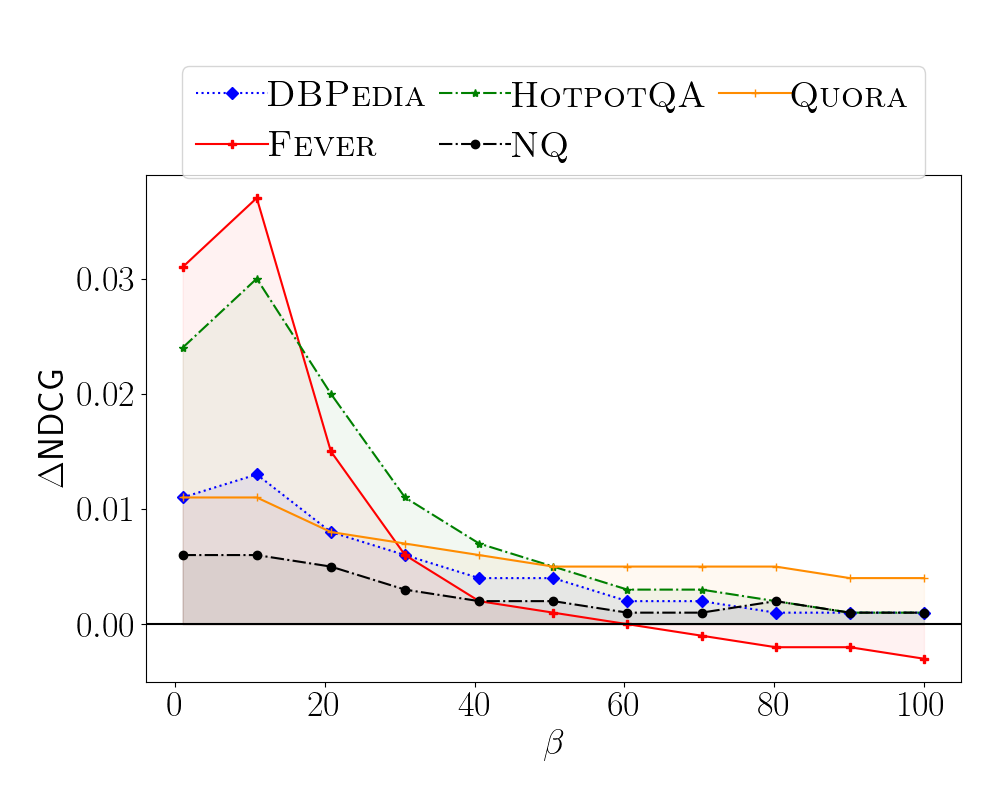}}
}
\caption{The difference in NDCG@1000 of $f_\textsc{SRRF}$ and $f_\textsc{RRF}$ with $\eta=60$ (positive indicates better ranking quality by \textsc{SRRF}) as a function of $\beta$.}
\label{appendix:tasb-minilm:figure:lipschitz}
\end{center}
\end{figure}

\FloatBarrier

\end{document}